\theoremstyle{definition}
\newtheorem{definition}{Definition}
\newtheorem*{definition*}{Definition}
\theoremstyle{plain}
\newtheorem{theorem}{Theorem}
\newtheorem{lemma}{Lemma}
\newtheorem{proposition}{Proposition}
\newtheorem*{theorem*}{Theorem}
\newtheorem*{corollary*}{Corollary}
\newtheorem*{lemma*}{Lemma}
\newtheorem*{proposition*}{Proposition}
\NewDocumentEnvironment{variant}{O{theorem} D(){} m}
    {\addtocounter{#1}{-1}%
    \expandafter\renewcommand\csname the#1\endcsname{\ref{#3}$'$}%
    \begin{#1}[#2]}
    {\end{#1}}
\NewDocumentEnvironment{appthm}{O{theorem} D(){} m}
    {\addtocounter{#1}{-1}%
    \expandafter\renewcommand\csname the#1\endcsname{\ref{#3}}%
    \begin{#1}[#2]}
    {\end{#1}}
\begin{document}

    \title{Generalized Entanglement of Purification Criteria for 2-Producible States in Multipartite Systems}

    \author{Tian-Ren Jin}
    \affiliation{Institute of Physics, Chinese Academy of Sciences, Beijing 100190, China}
    \affiliation{School of Physical Sciences, University of Chinese Academy of Sciences, Beijing 100049, China}

    \author{Yu-Ran Zhang}
    \email{yuranzhang@scut.edu.cn}
    \affiliation{School of Physics and Optoelectronics, South China University of Technology, Guangzhou 510640, China}

    \author{Heng Fan}
    \thanks{Corresponding author}
    \email{hfan@iphy.ac.cn}
    \affiliation{Institute of Physics, Chinese Academy of Sciences, Beijing 100190, China}
    \affiliation{School of Physical Sciences, University of Chinese Academy of Sciences, Beijing 100049, China}
    \affiliation{Beijing Academy of Quantum Information Sciences, Beijing 100193, China}
    \affiliation{Hefei National Laboratory, Hefei 230088, China}
    \affiliation{Songshan Lake Materials Laboratory, Dongguan 523808, China}
    \affiliation{CAS Center for Excellence in Topological Quantum Computation, UCAS, Beijing 100190, China}

    \begin{abstract}
        Multipartite entanglement has a much more complex structure than bipartite entanglement.
        A state that lacks generic multipartite entanglement is 2-producible, i.e., it can be written as a tensor product of at most 2-partite entangled states.
        Recently, it has been proved that a tripartite pure state is 2-producible if and only if the gap between the entanglement of purification and its lower bound vanishes.
        Here, we show that the entanglement of purification gap is insufficient to detect more than tripartite entanglement in $4$-partite stabilizer states.
        We then generalize entanglement of purification to the multipartite cases, and demonstrate that a multipartite pure state is 2-producible if and only if all the generalized entanglement of purification gaps vanish.
        The generalized entanglement of purification gap quantifies the quantum communication cost for redistributing one part of the system to the others, and also relates to the local recoverability of a multipartite state and the relative entropy between that state and 2-producible states.
        Moreover, we calculate the generalized entanglement of purification for states satisfying the general Schmidt decomposition, which implies that $4$-partite stabilizer states do not necessarily have a general Schmidt decomposition.
        Our results provide a quantitative characterization of multipartite entanglement in multipartite system, which will promote further investigations and understanding of multipartite entanglement.
    \end{abstract}

    \maketitle

    \section{Introduction}

        Bipartite pure states always admit the Schmidt decomposition, meaning all pure entangled states can be reversibly converted into maximally entangled states~\cite{PhysRevA.53.2046}.
        Consequently, entanglement of bipartite pure states can be completely characterized with entanglement entropy.
        For mixed bipartite states, several measures and criteria are used to detect entanglement~\cite{PhysRevA.54.3824,Hayden_2001,PhysRevA.54.3824,PhysRevA.57.1619,10.1063/1.1643788,10.1063/1.1498001,PhysRevLett.77.1413,PhysRevA.59.4206,chen2002matrix,Rudolph_2003}.
        In particular, bipartite entanglement of mixed states exhibits bound entanglement, from which no maximal entanglement can be distilled~\cite{PhysRevLett.80.5239}, but finite entanglement is required to produce it~\cite{PhysRevLett.86.5803}.
        This implies that bipartite entanglement of mixed states is irreversible, and there is no unique computable quantity governing all the conversion of entanglement~\cite{10.1145/780542.780545,10.5555/2011350.2011361,lami2023no}.

        In terms of purification,  bipartite entanglement of mixed states is characterized by tripartite entanglement of pure states.
        Multipartite entanglement is even more complex.
        The multipartite states, admitting a general Schmidt decomposition~\cite{PERES199516,PhysRevA.59.3336}, act as a maximally entangled state in the context of entanglement distillation~\cite{GISIN19981}.
        However, generally the multipartite pure state does not always have a Schmidt decomposition, such as the W state~\cite{PhysRevA.62.062314}.
        Therefore, for multipartite states, it requires a set of inequivalent maximally entangled states for the entanglement distillation~\cite{PhysRevA.62.062314,PhysRevLett.111.110502}.
        Furthermore, the so-called semiseparable state is $n$-partite entangled (for $n\geq 3$) but remains separable under any partition of the system into no more than $(n-1)$ parts~\cite{PhysRevLett.82.5385}.

        This complexity necessitates distinguishing generic multipartite entanglement from bipartite entanglement.
        The state without multipartite entanglement is a $2$-producible state, i.e., a tensor product of at most bipartite entangled states~\cite{guhne2005multipartite,balasubramanian2014multiboundary}.
        Recently, a connection between $2$-producibility and the entanglement of purification (EoP)~\cite{10.1063/1.1498001} has been established in tripartite systems~\cite{PhysRevLett.126.120501}.
        it is proved that a tripartite pure state is 2-producible (also called a triangle state) if and only if the EoP gap, defined as $g(A:B) = 2 E_p(A:B) - I(A:B)$, vanishes.
        This suggests that the EoP gap serves as a measure of tripartite entanglement.

        A natural question arises: Can the bipartite EoP gap be used to characterize $2$-producibility for general $n$-partite systems ($n>3$)?
        Surprisingly, the answer is negative. 
        In this work, we demonstrate that for $4$-partite random stabilizer states, the vanishing of bipartite EoP gaps for all pairs fails to guarantee that the state is 2-producible. 
        This limitation necessitates a generalization of the EoP framework.

        In this paper, we generalize the concept of EoP from tripartite systems to multipartite systems and introduce the generalized EoP gap. 
        We prove a structure theorem: a multipartite pure state is $2$-producible if and only if all the generalized EoP gaps vanish. 
        This provides a rigorous and quantitative criterion for detecting generic multipartite entanglement. 
        The generalized EoP gap quantifies the optimal quantum communication cost required to redistribute one part of the system to the others in a state redistribution protocol. 
        Furthermore, we establish that the gap relates to the local recoverability of a multipartite state and the distance between the state and the set of $2$-producible states in terms of relative entropy.
        We also calculate the generalized EoP for states admitting general Schmidt decompositions, including the GHZ state, which reveal that $4$-partite stabilizer states do not necessarily possess a general Schmidt decomposition, contrast with the tripartite case.

        The remainder of this paper is organized as follows. 
        In Sec.~\ref{sec: preliminaries}, we provide the necessary preliminaries on the EoP and the quantum Markov property. 
        In Sec.~\ref{sec: 2-producible}, we present the counterexample showing the insufficiency of the bipartite EoP gap for 4-partite stabilizer states. 
        In Sec.~\ref{sec: generalization}, we generalize the EoP to multipartite systems, and prove the main structure theorem for $2$-producible states.
        In Sec.~\ref{sec: recoverability}, we discuss the operational meaning of generalized EoP in terms of the state redistribution and recoverability. 
        In Sec.~\ref{sec: GSD}, we calculate the generalized EoP for states with general Schmidt decompositions. 
        Finally, we conclude our results and discuss some limitations in Sec.~\ref{sec: conclusion}.

    \section{Preliminaries} \label{sec: preliminaries}

    This section introduces some preliminaries, including entanglement of purification, 2-producible states, and quantum Markov property.
    For more details, see Appendix~\ref{app: preliminaries}.

    \subsection{Entanglement of Purification}

    For a mixed state $\rho_{AB}$, it can be purified into a pure state $\ket{\psi_{ABC}}$ with an ancillary system $C$, such that $\rho_{AB} = \mathrm{Tr}_C \ket{\psi_{ABC}}\!\bra{\psi_{ABC}}$.
    The purification of $\rho_{AB}$ is not unique, and EoP is defined as the minimum of the entanglement entropy between $AC_1$ and $BC_2$ over all the possible purifications and partitions of system $C$ into $C_1$ and $C_2$
    \begin{equation}
        E_p(A:B) \equiv \min_{C_1:C_2} S_{AC_1}(\ket{\psi_{ABC}}).
    \end{equation}
    Consider a given purification of $\rho_{AB}$, such as the canonical purification 
    \begin{equation}
        \ket{\sqrt{\rho_{AB}}} = \rho_{AB}^{1/2} \ket{\Phi_{A\bar{A}}} \otimes \ket{\Phi_{B\bar{B}}},
    \end{equation}
    where $\bar{A}$ and $\bar{B}$ are the replica of $A$ and $B$, and $\ket{\Phi_{X\bar{X}}}$ denotes the unnormalized Bell state between $X$ and $\bar{X}$ for $X = A,B$.
    With the Schmidt decomposition, the purifications of $\rho_{AB}$ can be compressed into $AB\bar{A}\bar{B}$, and thus, are equivalent to the canonical purification of $\rho_{AB}$ up to local isometries to $\bar{A}\bar{B}$.
    Thus, EoP can be expressed as 
    \begin{equation}
        E_p(A:B) = \min_{\hat{U}} S_{AC_1}(\hat{U}_{C\rightarrow\bar{A}\bar{B}} \ket{\sqrt{\rho_{AB}}}),
    \end{equation}
    where $\hat{U}_{C\rightarrow\bar{A}\bar{B}}$ is an isometry from Hilbert space on system $C$ to the space on system $\bar{A}\bar{B}$.
    The lower and upper bounds of the EoP are 
    \begin{equation}
        I(A:B)/2 \leq E_p(A:B) \leq \min\{S_A, S_B\} .
    \end{equation} 
    In this work, we focus on the gap  between $2E_p(A:B)$ and the mutual information $I(A:B)$, 
    \begin{equation}
        g(A:B) = 2 E_p(A:B) - I(A:B),
    \end{equation}
    which serves as a key quantity for characterizing multipartite entanglement.

    \subsection{2-Producible States}

        \begin{figure*}
            \centering
            \includegraphics[width=\textwidth]{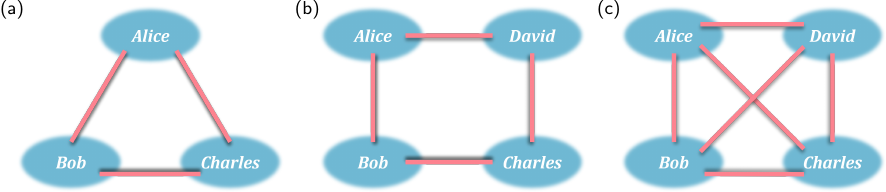}
            \caption{Diagrams of 2-producible states. The pink lines represent the bipartite state between systems.
            (a) Triangle state, the 2-producible state in the tripartite system.
            (b) Polygon state, a special 2-producible state in the $4$-partite system.
            (c) General form of a 2-producible state in the $4$-partite system.}
            \label{figure1}
        \end{figure*}
        We mainly focus on the multipartite pure state.
        In the absence of multipartite entanglement, a pure state is 2-producible, i.e., a tensor product of at most 2-partite states,
        \begin{equation}
            \ket{\psi} = \bigotimes_{i<j} \ket{\psi_{A_{i,j} A_{j,i}}},
        \end{equation}
        where $A_{i} = \otimes_{j \neq i} A_{i,j}$ for $i = 1, 2, \dots, n$, and the state $\ket{\psi_{A_{i,j} A_{j,i}}}$ is a bipartite state between $A_{i,j}$ and $A_{j,i}$.
        For a diagrammatic illustration, see Fig.~\ref{figure1}.

        In a tripartite system, a pure state $\ket{\psi_{ABC}}$ is $2$-producible
        \begin{equation}
            \ket{\psi_{ABC}} = \ket{\psi_{A_1 B_2}} \otimes \ket{\psi_{B_1 \bar{B}}} \otimes \ket{\psi_{A_2\bar{A}}},
        \end{equation}
        where $C = \bar{A}\otimes \bar{B}$, if and only if the EoP gap $g(A:B) = 2 E_p(A:B) - I(A:B)$ vanishes~\cite{PhysRevLett.126.120501}.
        Therefore, the states with a nonvanishing EoP gap $g(A:B)>0$ contain nontrivial multipartite entanglement.
        Naturally, one may expect that the EoP gap $g(A:B)$ can also be used to characterize more than tripartite entanglement in multipartite system.
        However, in Sec.~\ref{sec: 2-producible}, we will show that it does not always hold.

    \subsection{Quantum Markov Property} 

        The quantum Markov property serves as the key technique for proving the structure theorem of the EoP gap in the tripartite case~\cite{PhysRevLett.126.120501}.
        Generalizing this theorem to multipartite systems is based on the framework of quantum Markov recovery.
        Furthermore, the approximate quantum Markov recovery clarifies that the EoP gap quantifies the local recoverability of a tripartite quantum state from its marginals.

        A state $\rho_{ABC}$ has the quantum Markov property, if there exists a unital completely positive trace-preserving (CPTP) map $\mathcal{R}_{B\rightarrow BC}$ such that 
        \begin{align} \label{eq: QMC}
            \rho_{ABC} = \mathcal{I}\otimes \mathcal{R}_{B\rightarrow BC}(\rho_{AB}).
        \end{align}
        This condition is equivalent to the vanishing of the conditional mutual information (CMI)~\cite{hayden2004structure}
        \begin{equation}
            I(A:C|B) \equiv S_{AB} + S_{BC} - S_{ABC} - S_B = 0,
        \end{equation}
        and the state is in the form
        \begin{equation}
            \rho_{ABC} = \sum_l p_l \rho_{AB_1}^l \otimes \rho_{B_2 C}^l,
        \end{equation}
        where $B = B_1 \otimes B_2$.
        This state forms a short quantum Markov chain $A \rightarrow B \rightarrow C$~\cite{accardi1983markovian,hayden2004structure}, and the map $\mathcal{R}_{B\rightarrow BC}$ is called the Markov recovery map.

        States with $I(A:C|B) \neq 0$ deviate from the quantum Markov chain.
        The CMI characterizes the bound on how well the state can be recovered from its marginal~\cite{fawzi2015quantum,PhysRevLett.115.050501,sutter2016universal,sutter2018approximate}
        \begin{align} \label{eq: Markov}
            I(A:C|B) &\geq D_{\mathbb{M}}(\rho_{ABC}| \mathcal{R}_{B\rightarrow BC}(\rho_{AB})) \\
            &\geq -2\log F(\rho_{ABC}, \mathcal{R}_{B\rightarrow BC}(\rho_{AB})), \nonumber
        \end{align}
        where $\mathcal{R}_{B\rightarrow BC}$ is a CPTP map, $D_{\mathbb{M}}(\cdot|\cdot)$ is the measured relative entropy~\cite{donald1986relative,PhysRevLett.103.160504,sutter2018approximate}, and $F(\rho | \sigma) = \Vert \sqrt{\rho} \sqrt{\sigma} \Vert_1$ is the fidelity between $\rho$ and $\sigma$.

        The EoP gap can be expressed as a sum of conditional mutual information involving the canonical purification as
        \begin{align}
            g(A:B) & = I(\bar{A}:B\bar{B}|A) + I(\bar{B}: A|B) \\
            & = I(\bar{B}:A\bar{A}|B) + I(\bar{A}: B|A). \nonumber
        \end{align}
        The vanishing of $g(A:B)$ implies that the canonical purification forms a short quantum Markov chain $\bar{A} \rightarrow A \rightarrow B \rightarrow \bar{B}$.
        For $g(A:B) > 0$, with Petz recovery map [see Eq.~(\ref{eq: Petz}) in Appendix~\ref{app: preliminaries}], we have
        \begin{equation} \label{eq: QMC_g}
            g(A:B) \geq D_{\mathbb{M}}(\rho_{ABC}\Vert \mathcal{R}_{A|B \rightarrow ABC}^{\mathrm{LOCC}}(\rho_{AB})),
        \end{equation}
        where $\mathcal{R}_{A|B \rightarrow ABC}^{\mathrm{LOCC}}$ is a local operation and classical communication (LOCC) recovery channel.

    \section{Entanglement of Purification is Insufficient to Witness More Than Tripartite Entanglement} \label{sec: 2-producible}
    
    In Ref.~\cite{PhysRevLett.126.120501}, a structure theorem is proved that a tripartite pure state is 2-producible if and only if 
    \begin{equation} \label{eq: g}
        g(A:B) \equiv 2 E_{P}(A:B) - I(A:B)
    \end{equation}
    vanishes.
    Therefore, $g(A:B)$ can be viewed as a witness of tripartite entanglement in the tripartite system.
    One may expect to use it to witness the multipartite entanglement in the multipartite system.
    Obviously, in an $n$-partition system $(A_1,A_2,\dots,A_{n})$, $g(A_i:A_j)>0$ implies the existence of tripartite entanglement between $A_i, A_j$ and their complementary $\bigotimes_{l\neq i,j}^n A_l$.
    On the other hand, for an $n$-partite $2$-producible state, it is obvious that $g(A_i:A_j)=0$ for all possible pairs $(A_i:A_j)$. 
    We thus wonder whether $g(A_i:A_j)$ vanishing for all possible pairs $(A_i:A_j)$ of system implies that the state is a 2-producible state, abbreviated as $\mathrm{BPS}$.
    This question is equivalent to whether $g(A_i:A_j) = 0$ for all possible $A_{j(\neq i)}$ implies that $A_i$ is entangled with all possible $A_{j(\neq i)}$ only in the presence of bipartite entanglement, which is proved in Appendix~\ref{app: lemma}.
    However, the answer is negative.
    In the following, we illustrate a counter-example.
    The following proposition is required, which is proved in Appendix~\ref{app: lemma}.
    \begin{proposition} \label{proposition: ME}
        For an $n$-partition $(A_1,A_2,\dots,A_{n})$ of the system, denote the event that the system $A_i$ is entangled with its complementary system $\bigotimes_{j\neq i} A_j$ as $\mathrm{ENT}_i$, the event that the bipartite marginal state of $A_i$ and $A_j$ is entangled as $\mathrm{BE}_{i,j}$, and the event that the subsystem $A_i$ has multipartite entanglement with other subsystems as $\mathrm{ME}_i$, then the probability of the events $\mathbb{P}(\mathrm{ENT}_i)$, $\mathbb{P}(\mathrm{BE}_{i,j})$, and $\mathbb{P}(\mathrm{ME}_{i})$ satisfy
        \begin{equation}
            \mathbb{P}(\mathrm{ENT}_i) \leq \sum_{j\neq i} \mathbb{P}(\mathrm{BE}_{i,j}) + \mathbb{P}(\mathrm{ME}_{i}).
        \end{equation}
    \end{proposition}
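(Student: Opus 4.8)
The plan is to reduce this probabilistic statement to a single pointwise containment of events and then appeal to subadditivity of the probability measure. Explicitly, I would first establish, state by state, the inclusion
\begin{equation}
    \mathrm{ENT}_i \ \subseteq\ \mathrm{ME}_i \,\cup\, \bigcup_{j\neq i}\mathrm{BE}_{i,j},
\end{equation}
that is, whenever $A_i$ is entangled with its complement $\bigotimes_{j\neq i}A_j$, then either $A_i$ carries multipartite entanglement with the other parts, or at least one pairwise marginal $\rho_{A_iA_j}$ is entangled. Given this inclusion, monotonicity and the union bound immediately yield $\mathbb{P}(\mathrm{ENT}_i)\leq\mathbb{P}(\mathrm{ME}_i)+\sum_{j\neq i}\mathbb{P}(\mathrm{BE}_{i,j})$; this final step is routine and measure-agnostic, so all the content lies in the inclusion.

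I would prove the inclusion by contraposition: assume $\neg\mathrm{ME}_i$ together with $\neg\mathrm{BE}_{i,j}$ for every $j\neq i$, and deduce $\neg\mathrm{ENT}_i$. The hypothesis $\neg\mathrm{ME}_i$ says that $A_i$ participates only in bipartite entanglement, so by a localized version of the 2-producible structure theorem~\cite{PhysRevLett.126.120501} (the structure lemma of Appendix~\ref{app: lemma}) one may factorize $A_i=\bigotimes_{j\neq i}A_{i,j}$ in such a way that each $A_{i,j}$ sits in a fixed pure bipartite state $\ket{\psi_{A_{i,j}A_{j,i}}}$ with some subsystem $A_{j,i}\subseteq A_j$, with the convention that $A_{j,i}$ may be trivial (meaning $A_{i,j}$ is in a pure product state on its own), and nothing outside this block touches $A_{i,j}$. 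Consequently the reduction of $\rho_{A_iA_j}$ onto $A_{i,j}A_{j,i}$ is exactly the pure state $\ket{\psi_{A_{i,j}A_{j,i}}}$. If $\rho_{A_iA_j}$ is separable across $A_i:A_j$ (which is precisely what $\neg\mathrm{BE}_{i,j}$ asserts), then its reduction to the finer cut $A_{i,j}:A_{j,i}$ is separable as well; but a pure separable state is a product state, so $\ket{\psi_{A_{i,j}A_{j,i}}}$ is a product state and $A_{i,j}$ is in a pure product state.

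Running this over all $j\neq i$, every factor $A_{i,j}$ is a pure product state, hence $\rho_{A_i}=\bigotimes_{j\neq i}\rho_{A_{i,j}}$ is pure; equivalently $A_i$ is unentangled from its complement, i.e.\ $\neg\mathrm{ENT}_i$. This establishes the contrapositive, hence the inclusion, hence the proposition. I expect the main obstacle to be the middle step, turning the qualitative hypothesis $\neg\mathrm{ME}_i$ into the explicit factorization of $A_i$ into isolated bipartite blocks, since this is exactly where one must invoke (and, in the appendix, prove) the structure theorem characterizing states in which a designated party is confined to bipartite entanglement. The marginal reduction, the observation that a pure separable state is a product, and the concluding union bound are all elementary.
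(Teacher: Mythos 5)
Your proposal is correct and follows essentially the same route as the paper: both arguments hinge on the observation that, conditioned on $\overline{\mathrm{ME}}_i$, the state factorizes into bipartite blocks $\ket{\psi_{A_{i,j}A_{j,\cdot}}}$ so that $\mathrm{ENT}_i\cap\overline{\mathrm{ME}}_i=\bigcup_{j\neq i}\mathrm{BE}_{i,j}\cap\overline{\mathrm{ME}}_i$, followed by the union bound. Your contrapositive phrasing of the inclusion (via purity of separable pure states) is just a restatement of the paper's set identity, so the two proofs coincide in substance.
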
\noindent
    Note that in the notation of this proposition, the equivalence between two statements phrased in above can be state as 
    \begin{align}
      & \{g(A_i:A_j) = 0, \ \forall A_i, A_j\} = \mathrm{BPS} \nonumber \\
      & \Leftrightarrow \{g(A_i:A_j) = 0, \ \forall A_{j(\neq i)}\} = \overline{\mathrm{ME}}_i,
    \end{align}
    where $\overline{\mathrm{ME}}_i$ is the opposite event of $\mathrm{ME}_i$.

    Here, we consider the random stabilizer state as an example.
    A stabilizer state is the $+1$ eigenstate of the stabilizer operators, which form an Abelian subgroup of Pauli group. 
    Two stabilizer states are related to each other by Clifford operators, which is the stabilizer subgroup of Pauli group in unitary group $SU(2^N)$ and forms the $3$-design of $SU(2^N)$~\cite{zhu2016,PhysRevA.96.062336}.
    It is proved to be locally equivalent to Bell states and GHZ states in tripartite systems~\cite{10.1063/1.2203431}
    \begin{align}
        \ket{\psi_{ABC}^{\mathrm{st}}} & = \mathcal{U}_A\mathcal{U}_B \mathcal{U}_C \ket{0}^{\otimes s_A}  \ket{0}^{\otimes s_B}  \ket{0}^{\otimes s_C}  \ket{\mathrm{GHZ}}^{\otimes g_{ABC}} \nonumber\\
        &~~~~ \otimes \ket{\mathrm{EPR}}^{\otimes e_{AB}} \ket{\mathrm{EPR}}^{\otimes e_{BC} } \ket{\mathrm{EPR}}^{\otimes e_{AC}},
    \end{align}
    where $e_{AB}$, $e_{BC}$, and $e_{AC}$ are the number of Bell pairs between systems $AB$, $BC$, and $AC$, respectively, and $g_{ABC}$ is the number of GHZ states between systems $A$, $B$, and $C$, and $\mathcal{U}_X$ for $X = A,B,C$ are local Clifford operators on system $X$.
    Therefore, the entanglement entropy is 
    \begin{equation}
        S(A) = (e_{AB} + e_{AC} + g_{ABC}) \log 2.
    \end{equation}
    For the stabilizer state, $g(A:B)$ measures the number of GHZ states~\cite{nguyen2018entanglement,PRXQuantum.2.030313}
    \begin{equation}
        g(A:B) = g(B:C) = g(A:C) = g_{ABC} \log 2,
    \end{equation}
    and the logarithmic negativity measures the number of Bell states~\cite{10.1063/1.2203431}
    \begin{equation}
        E_N(A:B) = e_{AB} \log 2.
    \end{equation}
    The random stabilizer state is a uniformly distributed ensemble of stabilizer states.
    Since Clifford group is the $3$-design of $SU(2^n)$, the random stabilizer state is the $3$-design of the Haar random state.
    Thus, the average entanglement entropy of the stabilizer state is~\cite{PhysRevA.74.062314}
    \begin{equation}
        \bar{S}(A) \geq \log D_{A\min} - \frac{D_{A\min}}{D_{A\max}},
    \end{equation}
    where $D_{A\min} = \min \{D_A,D_{BC}\}$, $D_{A\max} = \max \{D_A,D_{BC}\}$, and $D_{X}$ for $X = A,B,C,\dots$ denotes the dimension of the system $X$.
    In qubit systems, $D_X = 2^{N_X}$ for $X = A, B, C,\dots$, where $N_X$ is the number of qubits in system $X$, and we denote the total number of qubits as $N$.
    With the definition of entropy, $\bar{S}(A) \leq \log D_{A\min}$, so the stabilizer state almost fulfills the maximal entanglement.
    After some calculations, we have
    \begin{equation}
        \bar{e}_{AB} \log 2 = \frac{1}{2}(\bar{S}_A + \bar{S}_B - \bar{S}_C - \bar{g}(A:B)) \geq 0.
    \end{equation}
    In particular, for $n_C \equiv N_C/N > 1/2$, it follows that $\bar{g}(A:B) \leq D_{AB}/D_C$ and $\bar{e}_{AB} \leq D_{AB}/D_C$.

    We focus on the $4$-partite random stabilizer state $\ket{\psi_{ABCD}^{\mathrm{st}}}$ with the proportions of the system sizes $N_A:N_B:N_C:N_D = 1:3:3:3$.
    Consider the tripartition $(A:B:CD)$, since the tripartite stabilizer state is equivalent to the tensor product of GHZ states and Bell states, the marginal state $\rho_{AB}$ of the system $AB$ is entangled, i.e., $\mathrm{BE}_{A,B}$, if and only if $E_N(A:B) = e_{AB} \log 2>0$. 
    Similarly, we have
    \begin{equation}
        \mathbb{P}(\mathrm{BE}_{A_i,A_j}) = \mathbb{P}(E_N(A_i:A_j)>0),
    \end{equation}
    for $A_i,A_j \in \{A,B,C,D\}$.
    Assume that the system $A_i$ has at most bipartite entanglement with other systems $A_j$ if $g(A_i:A_j) = 0$ for all $A_j \neq A_i$, i.e.,
    \begin{equation}
        \overline{\mathrm{ME}}_i = \{g(A_i:A_j) = 0, \ \forall A_{j(\neq i)}\} .
    \end{equation}
    Therefore, we have 
    \begin{align}
        \mathbb{P}(\mathrm{ME}_A) & = \mathbb{P}\left(\bigcup_{X \neq A} \{g(A:X)>0\}\right) \nonumber \\
        & \leq \sum_{X \neq A} \mathbb{P}(g(A:X)>0).
    \end{align}
    It follows that 
    \begin{equation} \label{eq: ENT}
        \mathbb{P}(\mathrm{ENT}_{A}) \leq \sum_{X \neq A} [\mathbb{P}(E_N(A:X)>0) + \mathbb{P}(g(A:X)>0)].
    \end{equation}

    Note that $g(A:B)$ and $E_N(A:B)$ are only defined for the marginal state $\rho_{AB}$, so for the $3$-partition $(A:B:CD)$, we have 
    \begin{align}
        E_N(A:B) & = e_{AB} \log 2 , \\
        g(A:B) & = g_{AB(CD)}\log 2 ,  
    \end{align}
    where $e_{AB}$ is the number of Bell pair between $A$ and $B$, and $g_{AB(CD)}$ is the number of GHZ states in the $3$-partition $(A:B:CD)$. 
    Since $e_{AB}$ and $g_{AB(CD)}$ are integer, we have 
    \begin{align}
        & \mathbb{P}(E_N(A:B)>0) = \mathbb{P}(e_{AB}\geq 1) \nonumber \\
        & \leq \sum_{e} e \mathbb{P}(e_{AB} = e)  = \bar{e}_{AB} \leq \frac{D_{AB}}{D_{CD}}, \\
        & \mathbb{P}(g(A:B)>0) = \mathbb{P}(g_{AB(CD)}\geq 1)  \nonumber \\
        & \leq \sum_{g} g \mathbb{P}(g_{AB(CD)} = g) = \bar{g}_{AB(CD)} \leq \frac{D_{AB}}{D_{CD}}, 
    \end{align}
    and similarly 
    \begin{align}
        \mathbb{P}(E_N(A:C)>0) \leq \frac{D_{AC}}{D_{BD}}, \\
        \mathbb{P}(g(A:C)>0) \leq \frac{D_{AC}}{D_{BD}}, \\
        \mathbb{P}(E_N(A:D)>0) \leq \frac{D_{AD}}{D_{BC}}, \\
        \mathbb{P}(g(A:D)>0) \leq \frac{D_{AD}}{D_{BC}}.
    \end{align}
    With the fixed proportions $N_A:N_B:N_C:N_D = 1:3:3:3$, in thermodynamic limit $N \rightarrow \infty$, we have 
    \begin{equation}
        \mathbb{P}(E_N(A:B)>0) \rightarrow 0, \quad \mathbb{P}(g(A:B)>0) \rightarrow 0.
    \end{equation}
    With Eq.~(\ref{eq: ENT}), it follows that
    \begin{equation}
        \mathbb{P}(\mathrm{ENT}_{A}) \rightarrow 0.
    \end{equation}

    On the contrary, note that the system $A$ is entangled with its complementary system $BCD$, i.e., $\mathrm{ENT}_A$, if and only if $S(A)>0$,
    \begin{equation}
        \mathbb{P}(\mathrm{ENT}_{A}) = \mathbb{P}(S(A)>0).
    \end{equation}
    Since $\bar{S}(A) \geq N_A \log 2 - D_{A}/D_{BCD}$, with the typicality of random stabilizer state~\cite{PhysRevA.74.062314}
    \begin{equation}
        \mathbb{P}[S(A)< \bar{S}(A) - \delta] \leq \exp\left(-\frac{\delta^2}{64 N}\right),
    \end{equation}
    we have for an arbitrary small $\epsilon \rightarrow 0^+$ that
    \begin{align}
        \mathbb{P}(S(A)>0) & = 1 - \mathbb{P}(S(A)\leq 0) \geq 1 - \mathbb{P}(S(A)<\epsilon) \nonumber \\
        & \geq 1 - \exp\left(-\frac{[\bar{S}(A)-\epsilon]^2}{64 N}\right) \nonumber \\
        & \rightarrow 1 - \exp\left(- \frac{n_A^2 \log^2 2}{64} N\right) \rightarrow 1,
    \end{align}
    where $n_A \equiv N_A/N = 0.1$.
    It follows that for an arbitrary small $\epsilon > 0$, there exist a sufficient large $N$, such that $\mathbb{P}(\mathrm{ENT}_{A})< \epsilon < 1 - \epsilon <\mathbb{P}(\mathrm{ENT}_{A})$.
    This contradiction shows that $g(A_i:A_j) = 0$ for all possible pairs $(A_i:A_j)$ is not sufficient for the 2-producible state.
    For generic $4$-partite states in $1:3:3:3$ partition, similar analysis shows that bipartite EoP gaps typically fail to characterize $2$-producible states for the generic $4$-partite pure states.
    See Appendix~\ref{app: failure} for more details.

    \section{Generalization of Entanglement of Purification and 2-Producible State} \label{sec: generalization}
    
    \begin{figure*}
        \centering
        \includegraphics[width=\textwidth]{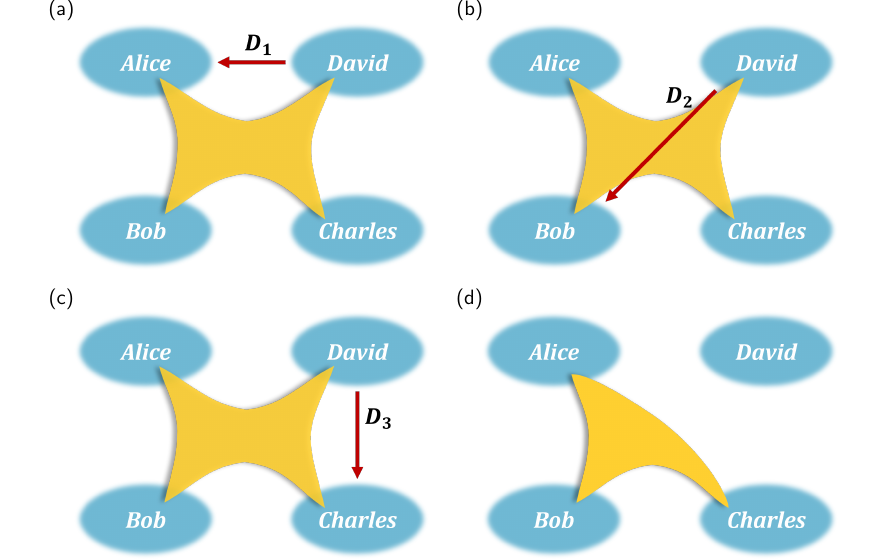}
        \caption{Diagram of state redistribution in $4$-partite system.
        The yellow regions represent multipartite entanglement between systems.
        (a) David sends one part $D_1$ of its system $D$ to Alice, which costs quantum communications $I(D_1: BC|A)/2$.
        (b) Then, David sends another part $D_2$ of its system $D$ to Bob, which costs quantum communications $I(D_2: ACD_1|B)/2$.
        (c) Finally, David sends the last part $D_3$ of its system $D$ to Charles, which costs quantum communications $I(D_3: ABD_1D_2|C)/2$.
        (d) After the redistribution, the total state is only shared by Alice, Bob, and Charles. The optimal total quantum communication cost of the redistribution is $g(A,B,C)$.
        }
        \label{figure2}
    \end{figure*}
    To obtain the sufficient condition for the 2-producible state, we generalize EoP to the multipartite system.
    \begin{definition}
        For an $n$-partite system $(A_1,\dots,A_n)$, $n$-partite generalized entanglement of purification of the state $\rho_{\bigotimes_{i=1}^n A_i}$ is defined as
        \begin{align}
            & E_p(A_1\dots,A_n) = \frac{1}{2} \min \sum_{i=1}^n S(A_i R_i) \nonumber \\
            & = \frac{1}{2} \min D\left(\rho_{\bigotimes_{i=1}^{n+1} A_i} \Big\Vert \bigotimes_{i=1}^n \rho_{A_i R_i}\right),
        \end{align}
        where the minimization is over all possible purifications $\rho_{\bigotimes_{i=1}^{n+1} A_i} = \ket{\psi_{\bigotimes_{i=1}^{n+1} A_i}}\!\bra{\psi_{\bigotimes_{i=1}^{n+1} A_i}}$ of the state $\rho_{\bigotimes_{i=1}^n A_i}$, and $A_{n+1} = \bigotimes_{i=1}^n R_i$ is the ancillary system for purification.
    \end{definition} 
    In particular, for $n=2$, our definition reduces to EoP, $E_p(A,B) = E_p(A:B)$.
    Since the relative entropy is monotonic under quantum channels, considering the discarding of the ancillary system $A_{n+1}$, we have 
    \begin{align} \label{eq: lower_bound}
        & 2 E_p(A_1,\dots,A_n) = \min D\left(\rho_{\bigotimes_{i=1}^{n+1} A_i} \Big\Vert \bigotimes_{i=1}^n \rho_{A_i R_i}\right)  \\
        & \geq D\left(\rho_{\bigotimes_{i=1}^n A_i} \Big\Vert \bigotimes_{i=1}^n \rho_{A_i}\right)= \sum_{i=1}^n S(A_i) - S\left(\bigotimes_{i=1}^n A_i\right). \nonumber
    \end{align}
    Therefore, the $n$-partite generalization of the EoP gap can be defined as
    \begin{align}
        & g(A_1,\dots,A_n) \nonumber \\
        & = \frac{1}{2} \min_{(R_1: \dots,R_n)} \left[ \sum_{i=1}^n [S(A_i R_i) - S(A_i)] + S\left(\bigotimes_{i=1}^n A_i\right) \right]\nonumber \\
        & = \frac{1}{2} \min_{(R_1: \dots,R_n)} \sum_{i=1}^n I\left(R_i: \bar{A}_i \bar{R}_i | A_i\right).
    \end{align}
    Here, $\bar{A}_i = \bigotimes_{j\neq i} A_j$ and $\bar{R}_i = \bigotimes_{j>i} R_j$.
    We introduce the factor $1/2$ in the definition of the generalized EoP gap $g(A_1,\dots,A_n)$, which is different from the existing definition of the EoP gap $g(A:B)$ for the $2$-partite case, i.e., $g(A,B) = g(A:B)/2$.

    With the generalized EoP gap defined above, we have the following structure theorem for the 2-producible state.
    \begin{theorem} \label{theorem: 2-producible}
        The $n$-partite $(A_1,A_2,\dots,A_n)$ pure state $\ket{\psi_{\bigotimes_{i=1}^n A_i}}$ is a $2$-producible state, i.e.,
        \begin{equation}
            \ket{\psi_{\bigotimes_{l=1}^{n} A_l}} = \bigotimes_{i=1}^{n} \bigotimes_{j>i} \ket{\psi_{A_{i,j} A_{j,i}}},
        \end{equation}
        with $A_i = \bigotimes_{j \neq i} A_{i,j}$ for all $i = 1, \dots, n$, if there exists a chain of system set $\alpha_2 \subset \cdots \subset \alpha_k \subset \cdots \subset \alpha_{n-1}$, where $\alpha_n = (A_{i_1}, \cdots , A_{i_n})$, such that $g(\alpha_n) = 0$.
        Then, $g(\alpha) = 0$ for all possible system sets $\alpha = (A_{i_1}, A_{i_2}, \dots)$.             
    \end{theorem}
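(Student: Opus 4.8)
The plan is to prove the structure statement by induction on the number of parts $n$ and then read off the vanishing of all gaps from the resulting form. The base case is $n=3$: here the chain is the single pair $\alpha_2=(A_{i_1},A_{i_2})$, $g(\alpha_2)=g(A_{i_1}:A_{i_2})/2=0$, and the triangle-state theorem of Ref.~\cite{PhysRevLett.126.120501} forces $\ket{\psi_{A_{i_1}A_{i_2}A_{i_3}}}=\ket{\psi_{A_{i_1,i_2}A_{i_2,i_1}}}\otimes\ket{\psi_{A_{i_2,i_3}A_{i_3,i_2}}}\otimes\ket{\psi_{A_{i_3,i_1}A_{i_1,i_3}}}$, from which $g(A_{i_1}:A_{i_3})=g(A_{i_2}:A_{i_3})=0$ by inspection.

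For the inductive step I assume the statement for $n-1$ parts, take the top set $\beta=\alpha_{n-1}$ of the chain, and let $E=A_{i_n}$ be the part it omits, which purifies $\rho_\beta$. The hypothesis $g(\beta)=0$ means there is a tensor factorization $E=\bigotimes_{l=1}^{n-1}R_{i_l}$ for which every term of $\tfrac12\sum_l I\bigl(R_{i_l}:\bar A_{i_l}\bar R_{i_l}\mid A_{i_l}\bigr)$ vanishes. I would peel these terms in order of increasing index $l$ (opposite to the redistribution order): for each $l$ the arguments of the $l$-th conditional mutual information exhaust the current pure state, so its vanishing is an \emph{exact pure-state} Markov condition, and the structure theorem for quantum Markov chains~\cite{hayden2004structure} splits $A_{i_l}=A_{i_l}^{(n)}\otimes A_{i_l}'$ and factors the state as $\ket{\phi_{R_{i_l}A_{i_l}^{(n)}}}\otimes(\text{remainder})$ with no direct-sum ambiguity; one then checks that detaching this bipartite factor together with its partner $A_{i_l}^{(n)}$ leaves the later terms unchanged, so the recursion continues. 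After the $n-1$ splits one obtains $\ket{\psi}=\bigl(\bigotimes_{l=1}^{n-1}\ket{\phi_{R_{i_l}A_{i_l}^{(n)}}}\bigr)\otimes\ket{\chi}$, where $\ket{\phi_{R_{i_l}A_{i_l}^{(n)}}}$ is the bipartite edge on $(A_{i_n,i_l},A_{i_l,i_n})$ and $\ket{\chi}$ is an $(n-1)$-partite pure state on the residual factors $A_{i_1}'\dots A_{i_{n-1}}'$; in the recovery-map language of Eq.~(\ref{eq: QMC_g}) this is the statement that the purified state is recovered from $\rho_\beta$ by a product of local rotated Petz channels.

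Next I would show that the truncated chain $\alpha_2'\subset\cdots\subset\alpha_{n-2}'$ carries vanishing gaps for $\ket{\chi}$, where $\alpha_k'$ is the image of $\alpha_k\subseteq\beta$ among the residual factors. Since $\rho_{\alpha_k}=\bigotimes_{l:\,A_{i_l}\in\alpha_k}\rho_{A_{i_l}^{(n)}}\otimes\rho_{\alpha_k'}$ and the purifier of each $\rho_{A_{i_l}^{(n)}}$ is correlated only with its own part, that purifier can be redistributed at zero cost and takes no part in redistributing the others, so the operational reading of the generalized EoP gap gives $g(\alpha_k)=g_\chi(\alpha_k')$, with $g_\chi$ the gap evaluated for $\ket{\chi}$; hence $g_\chi(\alpha_k')=0$ for $k=2,\dots,n-2$, and the induction hypothesis makes $\ket{\chi}$ 2-producible. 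Recombining with the $n-1$ edges and relabelling the Hilbert-space factors yields $\ket{\psi}=\bigotimes_{i<j}\ket{\psi_{A_{i,j}A_{j,i}}}$. For the last claim, once $\ket{\psi}$ is 2-producible every marginal $\rho_\alpha$ is a tensor product of bipartite pure states (edges internal to $\alpha$) and of reduced states of bipartite pure states (edges leaving $\alpha$) that are mutually uncorrelated, and the purification that simply reattaches each leaving edge and sends the reattached half back to its part makes every conditional mutual information in $g(\alpha)$ vanish; this is the elementary direction and matches the equivalence stated just before Proposition~\ref{proposition: ME} and proved in Appendix~\ref{app: lemma}.

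The step I expect to be the main obstacle is the detachment of a part: turning the single equality $g(\alpha_{n-1})=0$ into a genuine \emph{product} decomposition of $\ket{\psi}$, and then certifying $g(\alpha_k)=g_\chi(\alpha_k')$. The structural extraction is clean provided the Markov conditions are processed in the $l=1,\dots,n-1$ order, since then every splitting acts on a pure state and the block freedom of the quantum Markov structure theorem --- the source of the ``sum of triangle states'' phenomenon in the reflected-entropy/Markov-gap setting --- never appears; the delicate point is rather to control the minimization in the definition of the generalized EoP, i.e.\ to show that the emerging product structure is respected by an \emph{optimal} purification and partition for each $g(\alpha_k)$, not merely by some purification.
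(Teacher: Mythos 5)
Your proposal follows essentially the same route as the paper: induction on $n$, base case $n=3$ from the triangle-state theorem, peeling the omitted part $A_{i_n}$ by writing $g(\alpha_{n-1})=0$ as a sum of vanishing conditional mutual informations and iterating the quantum Markov chain structure theorem on the (pure) global state, then passing the truncated chain to the residual $(n-1)$-partite factor $\ket{\chi}$. Your inline peeling argument is exactly the content of the paper's Lemma~\ref{lemma: reduce}, including the observation that processing the terms in the order $l=1,\dots,n-1$ keeps every Markov condition a pure-state one and so avoids the direct-sum ambiguity. The one step you flag as delicate --- controlling the minimization so that $g_\chi(\alpha_k')=0$ follows from $g(\alpha_k)=0$ --- is resolved in the paper by Lemma~\ref{lemma: monotonicity}: only the inequality $g_\chi(\alpha_k')\leq g(\alpha_k)$ is needed (not the equality you assert from the operational reading), and it follows because $E_p$ is monotone under discarding the subsystems $A_{i_l}^{(n)}$ (any optimal purification of the larger marginal becomes an admissible purification of the smaller one by absorbing the discarded factors into the ancillas), while the remaining entropy combination $S(\bigotimes_l A_{i_l})-\sum_l S(A_{i_l})$ is unchanged by the tensor-product structure of $\rho_{\alpha_k}$. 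With that lemma inserted, your argument closes and matches the paper's proof; the final claim that all gaps vanish for a $2$-producible state is, as you say, a direct computation (the paper obtains it by applying Lemma~\ref{lemma: reduce} in the easy direction).
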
 \noindent
    This theorem implies that the necessary and sufficient condition of $2$-producible pure state is that all the generalized EoP gaps $g(\alpha)$ vanish.
    To prove this, we require the following lemmas.     
    The proofs of theorem and lemmas are shown in Appendix~\ref{app: 2-producible}.
    \begin{lemma} \label{lemma: monotonicity}
        $E_p(A_1,\dots,A_n)$ never increase when discarding the system
        \begin{equation}
            E_p(A_1,\dots,A_n) \leq E_p(A_1',\dots,A_n') ,
        \end{equation}
        where $A_i' = A_i \otimes E_i'$.
    \end{lemma}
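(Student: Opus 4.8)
The plan is to prove the inequality directly from the definition: the minimization defining $E_p(A_1,\dots,A_n)$ ranges over \emph{all} purifications of $\rho \equiv \rho_{\bigotimes_{i=1}^n A_i}$ and over all factorizations of the purifying ancilla, so it suffices to exhibit one admissible purification-and-factorization for $\rho$ whose value $\frac{1}{2}\sum_{i=1}^n S(A_i R_i)$ equals $E_p(A_1',\dots,A_n')$, where $\rho' \equiv \rho_{\bigotimes_{i=1}^n A_i'}$ and $A_i' = A_i \otimes E_i'$.

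First I would fix an optimal choice for the larger system: a purification $\ket{\psi}$ of $\rho'$ together with a factorization $A_{n+1}' = \bigotimes_{i=1}^n R_i'$ of its ancilla achieving $2 E_p(A_1',\dots,A_n') = \sum_{i=1}^n S(A_i' R_i')$. An optimum exists by the usual Schmidt-compression argument (every purification differs from the canonical one by a local isometry on the ancilla, so the search reduces to a compact set); alternatively the argument runs verbatim with an $\epsilon$-optimal choice followed by $\epsilon \to 0$.

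The key point is that the \emph{same} vector $\ket{\psi}$ is also a purification of $\rho$: tracing $A_{n+1}'$ out of $\ket{\psi}\!\bra{\psi}$ returns $\rho'$, and tracing the further factors $E_1' \cdots E_n'$ out of $\rho'$ returns $\rho$, so $\ket{\psi}$ purifies $\rho$ with ancilla $E_1'\cdots E_n' \otimes A_{n+1}' = \bigotimes_{i=1}^n (E_i' \otimes R_i')$. I then choose the factorization $R_i \equiv E_i' \otimes R_i'$. For this admissible choice $A_i R_i = A_i \otimes E_i' \otimes R_i' = A_i' R_i'$, hence $\sum_{i=1}^n S(A_i R_i) = \sum_{i=1}^n S(A_i' R_i') = 2 E_p(A_1',\dots,A_n')$, and therefore $E_p(A_1,\dots,A_n) \leq E_p(A_1',\dots,A_n')$ by minimality. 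Equivalently, in the relative-entropy form of the definition, $\bigotimes_i \rho_{A_i R_i}$ for the enlarged reference is literally the same operator as $\bigotimes_i \rho_{A_i' R_i'}$, so the statement is just monotonicity of quantum relative entropy (under the partial trace over $E_1'\cdots E_n'$ applied to the first argument) for this fixed pair of states.

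I expect no serious obstacle: the only point needing care is the bookkeeping of tensor factors — the discarded piece $E_i'$ of the $i$-th party must be absorbed into the $i$-th reference $R_i$ (not into some $R_j$ with $j \neq i$) so that $\bigotimes_i \rho_{A_i R_i}$ matches $\bigotimes_i \rho_{A_i' R_i'}$ term by term — together with the immediate remark that the definition of $E_p$ places no bound on the size of the purifying ancilla, so a purification of the larger state genuinely counts as a purification of the marginal. Specializing to $n=2$ with only $E_2'$ nontrivial recovers the stated bipartite property $E_p(A:BC) \geq E_p(A:B)$.
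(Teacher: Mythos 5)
Your proof is correct and is essentially identical to the paper's: both take the optimal purification of the enlarged state $\rho_{\bigotimes_i A_i'}$, observe that it also purifies $\rho_{\bigotimes_i A_i}$, absorb each discarded factor $E_i'$ into the $i$-th reference, and conclude by minimality. The extra remarks on existence of the optimum and the relative-entropy reformulation are fine but not needed beyond what the paper does.
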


    \begin{lemma} \label{lemma: reduce}
        A $(n+1)$-partite system $(A_1,A_2,\dots,A_n,A_{n+1})$ pure state up to local unitary operations on $A_i$ is in the form
        \begin{equation} 
            \ket{\psi_{\bigotimes_{i=1}^{n+1} A_i}} = \bigotimes_{i = 1}^n \ket{\psi_{A_{i,1} R_i}} \otimes \ket{\psi_{\bigotimes_{i =1}^n A_{i,2}}},
        \end{equation}
        where $A_{n+1} = \bigotimes_{i=1}^n R_i$, and $A_i = A_{i,1} \otimes A_{i,2}$ for $i=1,\dots, n$, if and only if $g(A_1,\dots,A_n) = 0$.
    \end{lemma}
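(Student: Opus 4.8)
The statement is an equivalence, and I would prove the two directions separately, reading $g(A_1,\dots,A_n)=0$ as the existence of a partition $A_{n+1}=R_1\otimes\cdots\otimes R_n$ of the purifying system for which every conditional mutual information $I(R_i:\bar A_i\bar R_i|A_i)$ appearing in the definition of $g$ vanishes. This reading is legitimate because each such term is nonnegative by strong subadditivity, so the minimum defining $g$ is zero exactly when all the terms vanish simultaneously for some partition.

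For the easy direction (the product form implies $g=0$), the plan is to evaluate $g$ on the partition supplied by the form itself. Writing $s_i\equiv S(A_{i,1})=S(R_i)$ for the entropy of the $i$-th bipartite factor $\ket{\psi_{A_{i,1}R_i}}$ and $S_\phi(A_{i,2})$ for the marginal entropy of $A_{i,2}$ in $\ket{\psi_{\otimes_j A_{j,2}}}$, the tensor-product structure makes all relevant entropies additive: one finds $S(A_iR_i)-S(A_i)=-s_i$, while $S(\bar A_i\bar R_iA_i)=\sum_{j\le i}s_j$ and $S(R_i\bar A_i\bar R_iA_i)=\sum_{j<i}s_j$, since the complements of these sets are $R_1\cdots R_i$ and $R_1\cdots R_{i-1}$ respectively. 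Substituting into $I(R_i:\bar A_i\bar R_i|A_i)=S(A_iR_i)+S(\bar A_i\bar R_iA_i)-S(A_i)-S(R_i\bar A_i\bar R_iA_i)$ yields a telescoping cancellation to $0$ for every $i$, hence $g=0$.

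The substantive direction ($g=0$ implies the product form) I would prove by peeling off the bipartite factors one at a time, in the order $i=1,\dots,n$. The engine is a purity refinement of the Markov structure theorem quoted in the Preliminaries: if a \emph{pure} state $\ket{\chi_{XYZ}}$ satisfies $I(X:Z|Y)=0$, then, since its Markov decomposition $\rho_{XYZ}=\bigoplus_l p_l\,\rho^l_{XY_L}\otimes\rho^l_{Y_RZ}$ is a sum over mutually orthogonal subspaces of $Y$, rank-one-ness forces a single surviving term with both factors pure, so $\ket{\chi_{XYZ}}=\ket{\chi_{XY_L}}\otimes\ket{\chi_{Y_RZ}}$ with $Y=Y_L\otimes Y_R$ up to a unitary on $Y$ aligning its support with this tensor splitting. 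Applying this at $i=1$, where $R_1$, $A_1$, and $\bar A_1\bar R_1=A_2\cdots A_nR_2\cdots R_n$ exhaust the whole system, splits off $\ket{\psi_{A_{1,1}R_1}}$ and leaves a pure state $\ket{\phi^{(1)}}$ on the remaining systems with $A_1=A_{1,1}\otimes A_{1,2}$.

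The key bookkeeping, which I expect to be the main obstacle, is to verify that each later condition again reduces to a \emph{pure}-state conditional mutual information on the not-yet-peeled systems. Inductively, after peeling factors $1,\dots,m-1$ the state is $\bigl(\bigotimes_{i<m}\ket{\psi_{A_{i,1}R_i}}\bigr)\otimes\ket{\phi^{(m-1)}}$; the condition $I(R_m:\bar A_m\bar R_m|A_m)=0$ does not involve $R_1,\dots,R_{m-1}$, so the already-peeled partners $A_{i,1}$ $(i<m)$ enter only as decoupled tensor factors and may be dropped without changing the conditional mutual information. What remains is exactly $I(X:Z|Y)=0$ for the \emph{pure} state $\ket{\phi^{(m-1)}}$ with $X=R_m$, $Y=A_m$, and $Z$ the rest of $\ket{\phi^{(m-1)}}$, so the refinement splits off $\ket{\psi_{A_{m,1}R_m}}$ and produces $\ket{\phi^{(m)}}$. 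The nested definition $\bar R_i=\bigotimes_{j>i}R_j$ is precisely what makes this ordering self-consistent. Iterating to $m=n$ leaves $\ket{\phi^{(n)}}$ supported on $\bigotimes_i A_{i,2}$, which is the claimed factor $\ket{\psi_{\otimes_i A_{i,2}}}$, and the accumulated support-aligning unitaries are the local unitaries on the $A_i$ in the statement. Since all purifications of $\rho_{A_1\cdots A_n}$ differ by an isometry on $A_{n+1}$, one may take the given $A_{n+1}$ to be the optimizing purification, so that $g=0$ genuinely furnishes such a partition.
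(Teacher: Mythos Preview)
Your proposal is correct and follows essentially the same route as the paper: both directions are handled by the same mechanisms, namely direct entropy computation for the easy direction and an inductive peeling argument using the Markov structure theorem for the hard direction, with nonnegativity of conditional mutual information forcing every term $I(R_i:\bar A_i\bar R_i|A_i)$ to vanish. Your inductive bookkeeping is slightly cleaner than the paper's---you argue directly that the already-peeled factors decouple from the remaining pure state $\ket{\phi^{(m-1)}}$, whereas the paper passes to a marginal, applies the structure theorem to get a Markov sum, and then traces out the peeled $A_{l,1}$ to collapse it to a pure tensor product---but the underlying idea is identical.
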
 \noindent

    \section{Quantum State Redistribution and Local Recoverability} \label{sec: recoverability}

    Note that the conditional mutual information $I(A:C|B)$ of a pure state $\ket{\psi}_{ABCD}$ is twice the optimal quantum communication cost of the state redistribution~\cite{10.1063/1.1459754,PhysRevLett.100.230501,PhysRevLett.115.050501}, in which Alice (holding systems $A$ and $D$) transfers the system $A$ to Bob (holding the system $B$) with a reference system $C$.
    The generalized EoP gap $g(A_1,A_2,\dots,A_n)$ is the optimal total quantum communication cost to redistribute the system $A_{n+1}$ to the other $n$ parts $A_i, \dots, A_n$.

    Specifically, one redistributes the part $R_{n}$ of the system $A_{n+1}$ to the system $A_n$, which costs $I(R_n: \bar{A}_n | A_n)/2$ for quantum communications.
    Then, one redistributes the part $R_{n-1}$ of the system $A_{n+1}$ to the system $A_{n-1}$, which costs $I(R_{n-1}: \bar{A}_{n-1} \bar{R}_{n-1}| A_{n-1})/2$ for quantum communications.
    Since now $\bar{R}_{n-1}= R_n$ has been redistributed to $A_n$, it is a part of the reference system of the redistributions. 
    Repeat this procedure until redistributing the part $R_1$ of the system $A_{n+1}$ to the system $A_1$, where $I\left(R_i: \bar{A}_i \bar{R}_i | A_i\right)/2$ is the cost to redistribute the system $R_i$ to the system $A_i$ after systems $R_j$ have been redistributed to systems $A_j$ for $j > i$.
    Since the generalized EoP gap $g(A_1,\dots,A_n)$ is optimal over all partitions of $A_{n+1}$, the optimal total cost of the above procedure achieves the generalized EoP gap $g(A_1,\dots,A_n)$.
    See Fig.~\ref{figure2} for a diagrammatic illustration.

    Theorem~\ref{theorem: 2-producible} implies that quantum communications are required to redistribute one party of the multipartite entangled state to other parts.
    In particular, Lemma~\ref{lemma: reduce} implies that the way to redistribute subsystem $A_{n+1}$ of the state with $g(A_1,\dots,A_n) = 0$
    \begin{equation} 
        \ket{\psi_{\bigotimes_{i=1}^{n+1} A_i}} = \bigotimes_{i = 1}^n \ket{\psi_{A_{i,1} R_i}} \otimes \ket{\psi_{\bigotimes_{i =1}^n A_{i,2}}},
    \end{equation}
    to the other parts $A_i$ without quantum communications.    
    Discarding the system $A_{n+1}$, the marginal state is 
    \begin{equation}
        \rho_{\bigotimes_{i=1}^n A_i} = \bigotimes_{i = 1}^n \rho_{A_{i,1}} \otimes \ket{\psi_{\bigotimes_{i =1}^n A_{i,2}}}\!\bra{\cdot},
    \end{equation}
    where $\bra{\cdot}$ denote the bra state of the corresponding ket state $\ket{\psi_{\bigotimes_{i =1}^n A_{i,2}}}$ for simplicity. 
    The state $\rho_{A_{i,1}}$ can be purified locally on each system $A_i$ to $A_i R_i$.
    Denote the channel $\mathcal{R}_{A_{i} \rightarrow A_{i} R_i}$ as the purification map, thus the state $\ket{\psi_{\bigotimes_{i=1}^{n+1} A_i}}$ can be redistributed without quantum communication with the local  operation $\bigotimes_{i=1}^{n}\mathcal{R}_{A_{i} \rightarrow A_{i} R_i}$.

    Redistributing states with a nonvanishing generalized EoP gap $g(A_1,\dots,A_n) > 0$ requires quantum communications.
    Therefore, such states cannot be locally recovered exactly.
    With the Petz recovery map, Eq.~(\ref{eq: Petz}), it follows that
    \begin{align}
        & g(A_1, \dots, A_n)  \\
        & \geq D_{\mathbb{M}}(\rho_{\bigotimes_{i=1}^{n+1} A_i} \Vert \mathcal{R}_{\bar{A}_{n+1} \rightarrow A}^{\mathrm{LOCC}}(\rho_{\bigotimes_{i=1}^n A_i})), \nonumber
    \end{align}
    where 
    \begin{align}
        & \mathcal{R}_{\bar{A}_{n+1} \rightarrow A}^{\mathrm{LOCC}} = \int_{-\infty}^{\infty} \mathrm{d}t \beta_0(t) \bigotimes_{i=1}^n \mathcal{R}_{A_i \rightarrow A_i R_i}^{t}, \\
        & \mathcal{R}_{A_i \rightarrow A_i R_i}^{t}(\cdot) = \rho_{A_i R_i}^{\frac{1+\mathrm{i} t}{2}} \rho_{A_i}^{-\frac{1+\mathrm{i} t}{2}} (\cdot) \rho_{A_i}^{-\frac{1-\mathrm{i} t}{2}} \rho_{A_i R_i}^{\frac{1-\mathrm{i} t}{2}},
    \end{align}
    with $\bar{A}_{n+1} = \bigotimes_{i=1}^n A_i$ and $A = \bigotimes_{i=1}^{n+1} A_i$, for simplicity.
    This means that the generalized EoP gap characterizes the local recoverability of multipartite pure states, and, in particular, the local recovery is realized by the convex combination of rotated Petz recovery map.

    By relaxing the quantifier of local recoverability from the measured relative entropy to fidelity, the recovery can be realized by local operations, the Petz recovery map, only. 
    The proof is shown in Appendix~\ref{app: recovery}, which follows the technique of Ref.~\cite{fawzi2015quantum}.
    
    \begin{theorem} \label{theorem: recovery_k}
        For a $(n+1)$-partite pure state $\rho_{A} = \rho_{\bigotimes_{i=1}^{n+1} A_i}$, with an $n$-partite marginal state $\rho_{\bar{A}_{n+1}} = \rho_{\bigotimes_{i=1}^n A_i}$, by using the local Petz recovery map
        \begin{equation}
            \mathcal{R}_{\bar{A}_{n+1} \rightarrow A}^{\mathrm{LO}} = \bigotimes_{i=1}^n \mathcal{R}_{A_i \rightarrow A_i R_i}^{0},
        \end{equation}
        where the partition $A_{n+1} = \bigotimes_{i=1}^n R_i$ attains the minimum of $g(A_1, \dots, A_n)$,
        it has
        \begin{align}
            g(A_1, \dots, A_n) \geq - \log F(\rho_{A}, \mathcal{R}_{\bar{A}_{n+1} \rightarrow A}^{\mathrm{LO}}(\rho_{\bar{A}_{n+1}})).
        \end{align}
    \end{theorem}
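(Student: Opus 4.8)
The plan is to start from the universal-recovery bound displayed just before the statement, relax the measured relative entropy to a fidelity, and then remove the rotation average in the spirit of Fawzi--Renner so that only the single local Petz map survives. Concretely, I would fix a purification $\rho_A=\ket{\psi}\!\bra{\psi}$ of $\rho_{\bar{A}_{n+1}}=\rho_{\bigotimes_{i=1}^n A_i}$ realizing the minimum in $g(A_1,\dots,A_n)$, with $A_{n+1}=\bigotimes_{i=1}^n R_i$, and invoke the identity preceding the statement, $g(A_1,\dots,A_n)=D(\rho_A\Vert\sigma)-D(\mathcal{N}(\rho_A)\Vert\mathcal{N}(\sigma))$, where $\sigma=\bigotimes_{i=1}^n\rho_{A_iR_i}$ and $\mathcal{N}=\mathrm{Tr}_{A_{n+1}}$, so that $\mathcal{N}^\dagger(X)=X\otimes I_{A_{n+1}}$, $\mathcal{N}(\rho_A)=\rho_{\bar{A}_{n+1}}$ and $\mathcal{N}(\sigma)=\bigotimes_i\rho_{A_i}$. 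The key structural point is that, because $\sigma$ and $\mathcal{N}$ both factor across the blocks $A_iR_i$, the rotated Petz map of Eq.~(\ref{eq: Petz}) factorizes, $\mathcal{R}_{\sigma,\mathcal{N}}^t=\bigotimes_{i=1}^n\mathcal{R}_{A_i\rightarrow A_iR_i}^t$: for each fixed $t$ it is a product of local channels, the LOCC recovery used above is its $\beta_0(t)$-average, and its $t=0$ slice is exactly $\mathcal{R}_{\bar{A}_{n+1}\rightarrow A}^{\mathrm{LO}}=\bigotimes_i\mathcal{R}_{A_i\rightarrow A_iR_i}^0$.

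Next I would relax the bound $g(A_1,\dots,A_n)\ge D_{\mathbb{M}}(\rho_A\Vert\mathcal{R}_{\bar{A}_{n+1}\rightarrow A}^{\mathrm{LOCC}}(\rho_{\bar{A}_{n+1}}))$, already established above, using $D_{\mathbb{M}}(\rho\Vert\tau)\ge -2\log F(\rho,\tau)$ (which holds because quantum fidelity is the minimal measured fidelity while $D_{\mathbb{M}}$ is the maximal measured relative entropy, together with $D(p\Vert q)\ge -2\log\sum_x\sqrt{p_xq_x}$ for probability vectors). This already yields the claimed inequality, but with the averaged LOCC recovery rather than the single local map. To eliminate the average I would run the Fawzi--Renner argument~\cite{fawzi2015quantum} on the decomposition of $g$ into the $n$ conditional mutual informations $I(R_i:\bar{A}_i\bar{R}_i|A_i)$, treated in the nested order $i=n,n-1,\dots,1$ built into the definition of $\bar{R}_i$: each such term is the conditional mutual information of a pure state for the chain $R_i\rightarrow A_i\rightarrow\bar{A}_i\bar{R}_i$, so the Fawzi--Renner decoupling/minimax bound (Uhlmann's theorem applied to a Stinespring dilation of $\mathrm{Tr}_{R_i}$, plus the smoothing step that trades measured relative entropy for fidelity) gives a recovery acting only on $A_i$ with $I(R_i:\bar{A}_i\bar{R}_i|A_i)\ge -2\log F_i$. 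Since the recoveries act on disjoint systems they compose to a product channel, and a fidelity chain rule, telescoped along the nested order, yields $-2\log F(\rho_A,\bigotimes_i\mathcal{R}_{A_i\rightarrow A_iR_i}^0(\rho_{\bar{A}_{n+1}}))\le\sum_i(-2\log F_i)\le g(A_1,\dots,A_n)$, once the block-$i$ decoder is identified with the local Petz map $\mathcal{R}_{A_i\rightarrow A_iR_i}^0$.

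The main obstacle is this last step. The elementary moves available --- joint concavity of fidelity, or purity of $\rho_A$, which only gives $F(\rho_A,\mathcal{R}_{\bar{A}_{n+1}\rightarrow A}^{\mathrm{LOCC}}(\rho_{\bar{A}_{n+1}}))^2=\int\beta_0(t)\,F(\rho_A,\bigotimes_i\mathcal{R}_{A_i\rightarrow A_iR_i}^t(\rho_{\bar{A}_{n+1}}))^2\,dt$ --- control the $t$-\emph{average}, not the $t=0$ term on its own, so one must genuinely re-run the Fawzi--Renner construction inside this factorized geometry. The two points to be verified there are: (i) that the recovery extracted from the decoupling/minimax step is local on each $A_i$ and may be taken to be the plain ($t=0$) Petz map --- this is where the factorization of $\sigma$ and $\mathcal{N}$ enters; and (ii) that the $n$ block errors combine additively into $g(A_1,\dots,A_n)$ rather than interacting through the entanglement of $\rho_{\bar{A}_{n+1}}$ across the $A_i$ --- which is exactly what the nested choice of $\bar{R}_i$ is designed to enable via the telescoping of fidelities.
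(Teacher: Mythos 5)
Your first step --- the rotated-Petz LOCC bound followed by $D_{\mathbb{M}}(\rho\Vert\tau)\geq -2\log F(\rho,\tau)$ --- is fine, but as you yourself note it only controls the $\beta_0(t)$-\emph{average} of the local maps, and the passage from that average to the single $t=0$ slice is exactly where the content of the theorem lies. The route you propose for closing that gap does not work as described, for two reasons. First, applying Fawzi--Renner to each conditional mutual information $I(R_i:\bar{A}_i\bar{R}_i|A_i)$ produces \emph{some} recovery channel acting on $A_i$ --- in their construction one of the form $V_{A_iR_i}\,\rho_{A_iR_i}^{1/2}\bigl(\rho_{A_i}^{-1/2}U_{A_i}(\cdot)U_{A_i}^{\dagger}\rho_{A_i}^{-1/2}\bigr)\rho_{A_iR_i}^{1/2}V_{A_iR_i}^{\dagger}$ with Uhlmann unitaries $U_{A_i},V_{A_iR_i}$ --- and whether the \emph{plain} Petz map $\mathcal{R}^{0}_{A_i\to A_iR_i}$ satisfies the fidelity bound for a single conditional mutual information is not a consequence of any of the cited results (it is, in general, open); so your point (i) is not a detail to be verified but the original difficulty reappearing unchanged. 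Second, even granting per-block bounds $I(R_i:\bar{A}_i\bar{R}_i|A_i)\geq -2\log F_i$ with the desired maps, there is no fidelity chain rule under which the quantities $-2\log F_i$ add when the block recoveries are composed: what telescopes is the purified distance (Bures angle), and converting back to $-2\log F$ costs square roots, yielding at best a bound of order $\bigl(\sum_i\sqrt{I_i}\bigr)^2$ rather than $\sum_i I_i$. This is precisely why the paper does not argue block by block.

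The paper's actual proof is a single global i.i.d.\ typicality argument in the spirit of the method-of-types part of Fawzi--Renner, run directly against the factorized reference state $\bigotimes_i\rho_{A_iR_i}$: it sandwiches $\rho_{A}^{\otimes N}$ and $\rho_{\bar{A}_{n+1}}^{\otimes N}$ with products of typical projectors $\Pi^{\delta}_{A_i^N}$ and $\Pi^{\delta}_{A_i^NR_i^N}$, bounds $D(\Gamma_A^{(N)}\Vert\mu_A^{(N)})$ by $N[S(\bar{A}_{n+1})+\eta]$ via the i.i.d.\ relative-entropy lemma, and then dominates each typical projector by $(\rho_{A_i}^{-1/2})^{\otimes N}$ and $(\rho_{A_iR_i}^{1/2})^{\otimes N}$ up to factors $e^{\mp\frac{N}{2}[S(\cdot)\mp\delta]}$, so that $\mu_A^{(N)}$ is bounded by a polynomial times $e^{\frac{N}{2}\sum_i[S(A_iR_i)-S(A_i)]}$ times the $N$-fold tensor power of the plain Petz-recovered state. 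The $t=0$ map thus emerges from the operator inequalities themselves, with no Uhlmann step and no composition of per-block recoveries; to salvage your outline you would need to reproduce that global argument rather than reduce to $n$ separate instances of the single-CMI recovery theorem.
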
 \noindent
    In this case, this technique implies the approximate recovery via local operations in fidelity, while the techniques of the measured relative entropy~\cite{PhysRevLett.115.050501,sutter2016universal,sutter2018approximate}, i.e., the rotated Petz recovery maps, implies the approximate recovery with the LOCC operations in the measured relative entropy.
    It differs from the case of conditional mutual information, where the result of the measured relative entropy is strictly tighter than that of Ref.~\cite{fawzi2015quantum}.

    Moreover, the generalized EoP gaps also characterize the distance between a state and 2-producible states in the relative entropy.
    \begin{proposition} \label{proposition: k_BPS}
        For an $n$-partite pure state $\rho_{\bigotimes_{i=1}^n A_i}$, it has
        \begin{equation}
            2 \max_{|\alpha|<n} g(\alpha) \leq \min_{\sigma\overset{\mathrm{LU}}{\sim} \rho} \min_{\mu \in \mathrm{BPS}_k} D(\sigma \Vert \mu),
        \end{equation}
        where $\mathrm{BPS}_n$ denote the set of $n$-partite $2$-producible states
        \begin{equation}
            \ket{\psi_{\bigotimes_{l=1}^{n} A_l}} = \bigotimes_{i=1}^{n} \bigotimes_{j>i} \ket{\psi_{A_{i,j} A_{j,i}}},
        \end{equation}
        and $\mu$ is the state equivalent to $\rho_{\bigotimes_{i=1}^n}$ up to local unitary operations.
    \end{proposition}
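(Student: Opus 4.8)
The plan is to prove the pointwise bound $g(\alpha)\le D(\sigma\Vert\mu)$ for every $\sigma$ local-unitarily equivalent to $\rho$, every $\mu\in\mathrm{BPS}_n$, and every system set $\alpha=(A_{i_1},\dots,A_{i_k})$ with $k<n$; maximizing over $\alpha$ and minimizing over $\sigma$ and $\mu$ then yields the proposition. Note first that $g(\alpha)$ is built from the entropies of a purification of the marginal $\rho_\alpha$ and is therefore invariant under local unitaries, so the left-hand side is the same for $\rho$ and for any such $\sigma$, and only the right-hand side is affected by the choice of $\sigma$.

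First I would reduce to the marginal on $\alpha$. Partial trace over the complementary parts is a CPTP map, so monotonicity of the relative entropy gives $D(\sigma\Vert\mu)\ge D(\sigma_\alpha\Vert\mu_\alpha)$, while by construction $g(\alpha)$ depends only on $\sigma_\alpha$. Tracing the 2-producible form $\bigotimes_{i<j}\ket{\psi_{A_{i,j}A_{j,i}}}$ over the parts outside $\alpha$ then shows $\mu_\alpha=\bigotimes_{A_i,A_j\in\alpha,\,i<j}\ket{\psi_{A_{i,j}A_{j,i}}}\!\bra{\cdot}\otimes\bigotimes_{A_i\in\alpha}\rho_{A_{i,\mathrm{out}}}$, where $A_{i,\mathrm{out}}$ collects the bonds of $A_i$ to the complement of $\alpha$; purifying each local block $\rho_{A_{i,\mathrm{out}}}$ separately gives a purification of $\mu_\alpha$ of exactly the form appearing in Lemma~\ref{lemma: reduce} (with $A_{i,1}=A_{i,\mathrm{out}}$ and $A_{i,2}$ the internal bonds), hence $g_{\mu_\alpha}(\alpha)=0$; this is Theorem~\ref{theorem: 2-producible} restricted to $\alpha$. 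It therefore suffices to prove the \emph{stability estimate}: if $\tau$ and $\nu$ are states on the parts of $\alpha$ with $g_\nu(\alpha)=0$, then $g_\tau(\alpha)\le g_\nu(\alpha)+D(\tau\Vert\nu)=D(\tau\Vert\nu)$.

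For the stability estimate I would use the variational form $g_\tau(\alpha)=\min_{\ket{\psi},(R_i)}[\,D(\ket{\psi}\!\bra{\psi}\Vert\bigotimes_i\psi_{A_iR_i})-D(\tau\Vert\bigotimes_i\tau_{A_i})\,]$, the minimum running over purifications $\ket{\psi}$ of $\tau$ with ancilla $\bigotimes_i R_i$. Let $\ket{\Phi}$ be the purification of $\nu$ from Lemma~\ref{lemma: reduce} realizing $g_\nu(\alpha)=0$: it is a tensor product over the blocks $A_{i,1}R_i$ together with one shared pure factor, so each marginal obeys $I_\Phi(R_i:\bar A_i\bar R_i|A_i)=0$, i.e. $\ket{\Phi}$ is a nested quantum Markov chain and $\nu$ is recovered from itself by the local rotated Petz maps $\bigotimes_i\mathcal R^{t}_{A_i\to A_iR_i}$ built from $\Phi$. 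Evaluating the $g$-functional of $\tau$ at the purification obtained by dilating these local recovery maps applied to $\tau$, and using monotonicity of the (measured) relative entropy under the universal recovery channel as in Eq.~(\ref{eq: Petz}) together with the factorization of $\log\Phi$ along the tensor structure, the two relative-entropy terms should combine into $D(\tau\Vert\nu)$, with the leftover contributions $D(\tau_{A_iR_i}\Vert\nu_{A_iR_i})-D(\tau_{A_i}\Vert\nu_{A_i})\ge 0$ (data processing under $\mathrm{Tr}_{R_i}$) pointing in the favorable direction. The hard part will be exactly this final combination: one must choose the purification of $\tau$ so that the cross term $-\mathrm{Tr}[\tau\log\nu]$ splits along the Markov structure of $\ket{\Phi}$, and one must control the fact that the recovery which is exact for $\nu$ only approximately restores $\tau_\alpha$; this is the regime handled by the techniques behind Eqs.~(\ref{eq: Petz})--(\ref{eq: QMC_g}) and Theorem~\ref{theorem: recovery_k}, so I expect the appendix proof to proceed along these lines.
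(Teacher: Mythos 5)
Your top-level strategy---prove the pointwise bound $g(\alpha)\le D(\sigma\Vert\mu)$ for each $\alpha$, $\sigma$, $\mu$ and then take the max/min, using the fact that a $2$-producible $\mu$ satisfies $g_\mu(\alpha)=0$ for every $\alpha$ (this is the paper's observation that $\mathrm{BPS}_n=\bigcap_{|\alpha|<n}\mathrm{QMC}(\alpha)$, via Theorem~\ref{theorem: 2-producible} and Lemma~\ref{lemma: reduce})---matches the paper. But your execution has a genuine gap, and it is created by your very first move. By passing to the marginals via data processing, $D(\sigma\Vert\mu)\ge D(\sigma_\alpha\Vert\mu_\alpha)$, you convert the problem into the strictly stronger ``stability estimate'' $g_\tau(\alpha)\le D(\tau\Vert\nu)$ for \emph{mixed} states $\tau,\nu$ with $g_\nu(\alpha)=0$. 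You then do not prove this estimate: you concede that ``the hard part will be exactly this final combination,'' and the construction you sketch for it does not work as stated, because applying the local Petz recovery maps of $\nu$ to $\tau$ and dilating does not produce a purification of $\tau$ (the Petz map built from $\nu$ does not satisfy $\mathrm{Tr}_{R_i}\mathcal{R}^{0}_{A_i\to A_iR_i}(\tau_{A_i})=\tau_{A_i}$ unless the relevant operators commute), so the resulting state cannot be inserted into the variational formula defining $g_\tau(\alpha)$. Moreover, the recovery-map machinery of Eqs.~(\ref{eq: Petz})--(\ref{eq: QMC_g}) yields lower bounds on $g$ in terms of measured relative entropies or fidelities, which is the wrong direction for the upper bound you need here.

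The paper's proof avoids all of this by never taking marginals. For $|\alpha|=k<n$ it regroups the complement $\bigotimes_{i\notin\alpha}A_i$ as a single purifying party, so that both $\sigma$ and $\mu$ remain \emph{pure} on the full system, and then establishes (Proposition~\ref{proposition: k_QMC}) the exact telescoping identity
\begin{align}
    & D(\rho_{A}\Vert\mu_{A})-D(\rho_{\bar A_{n+1}}\Vert\mu_{\bar A_{n+1}}) \nonumber\\
    &\quad +\sum_{i}\left[D(\rho_{A_i}\Vert\mu_{A_i})-D(\rho_{A_iR_i}\Vert\mu_{A_iR_i})\right]=g(\alpha), \nonumber
\end{align}
where $(R_i)$ is the optimal partition of the purifying system for $\rho$: the entropy terms assemble into $g(\alpha)$, and the cross terms $\mathrm{Tr}(\rho\log\mu)$ cancel precisely because $\mu$ is pure and factorizes as in Lemma~\ref{lemma: reduce}. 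Dropping $D(\rho_{\bar A_{n+1}}\Vert\mu_{\bar A_{n+1}})\ge 0$ and using data processing under $\mathrm{Tr}_{R_i}$ for the bracketed terms gives $g(\alpha)\le D(\rho_A\Vert\mu_A)$ directly. Note that this cancellation requires the $R_i$ (the partition of $\rho$'s purifier) and the product structure of $\mu$ to live on the same ambient pure state; once you discard the purifier and work with $\tau=\sigma_\alpha$ and $\nu=\mu_\alpha$ separately, that alignment is lost, which is exactly why your marginal version needs new ideas the proposal does not supply.
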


    \section{Generalized Entanglement of Purification of States Admitting the General Schmidt Decomposition} \label{sec: GSD}

    In this section, we calculate the generalized EoP gap $g(A_1,\dots,A_{n+1})$ for the multipartite states having the general Schmidt decomposition~\cite{PERES199516,PhysRevA.59.3336},
    \begin{equation}
        \ket{\psi} = \sum_l \sqrt{p_l} \bigotimes_{i=1}^{n+1} \ket{\psi_{A_i}^l},
    \end{equation}
    where $\ket{\psi_{A_i}^l}$ forms the orthonormal basis of system $A_i$.
    The GHZ state, with a form 
    \begin{equation}
        \ket{\mathrm{GHZ}} = \frac{1}{\sqrt{2}}(\ket{0}^{\otimes n+1} + \ket{1}^{\otimes n+1}),
    \end{equation}
    is one typical state having the general Schmidt decomposition, which seems to be the maximally entangled state for such type of states~\cite{GISIN19981}.

    To calculate the generalized EoP and its gap, we embed the system d$A_{n+1}$ to a reference system $R$, where $R = \bigotimes_{i=1}^n R_i$ by using the map $\ket{\phi_{A_{n+1}^l}} \rightarrow \bigotimes_{i=1}^{n} \ket{l}_{R_i}$.
    Thus, the pure state is locally equivalent to
    \begin{equation}
        \ket{\psi_{AR}} = \sum_l \sqrt{p_l} \bigotimes_{i=1}^{n} \ket{\psi_{A_i}^l} \otimes \ket{l}_{R_i}.
    \end{equation}
    Assume that the minimum of $E_p(A_1,\dots,A_n)$ is attained at $\mathcal{U}_{R}\ket{\psi_{AR}}$ with the partition $(A_{1}R_1, A_2 R_2, \dots, A_n R_n)$, where $\mathcal{U}_{R}$ is the unitary operation on system $R$.
    The generalized EoP is
    \begin{equation}
        E_p(A_1,\dots,A_n) = \frac{1}{2} \sum_{i=1}^n S(A_i R_i).
    \end{equation}
    By considering the orthogonality of $\ket{\psi_{A_i}^l}$, the marginal state of system $A_i R_i$ is
    \begin{equation}
        \rho_{A_i R_i} = \sum_l p_l \ket{\psi_{A_i}^l}\!\bra{\psi_{A_i}^l} \otimes \mathcal{N}_i^l(\ket{l}_{R_i}\!\bra{l}),
    \end{equation}
    where $\mathcal{N}_i^l(\rho_{R_i}) = \mathrm{Tr}_{\bar{R}_i}[\mathcal{U}_{R}(\rho_{R_i} \otimes \ket{l}_{\bar{R}_i}\!\bra{l})]$, and $\bar{R}_i = \bigotimes_{j\neq i} R_j$.
    Thus, the entropy is
    \begin{equation}
        S(A_i R_i) = \sum_l p_l S(\mathcal{N}_i^l(\ket{l}_{R_i}\!\bra{l})) + H(p_l) \geq H(p_l),
    \end{equation}
    where $H(p_l)$ is the classical entropy of the probability $\{p_l\}$, and the equality holds if the channel $\mathcal{N}_i^l =\mathcal{I}_{R_i}$ is the identical channel, which universally holds for all $\mathcal{N}_i^l$ if the unitary operation $\mathcal{U}_{R} = \mathcal{I}_{R}$ is the identical channel.
    Thus, the generalized EoP gap is
    \begin{equation}
        E_p(A_1,\dots,A_n) = \frac{1}{2} \min \sum_{i=1}^{n} S(A_i R_i) = \frac{n}{2} H(p_l).
    \end{equation}

    To calculate the generalized EoP gap $g(A_1,\dots,A_k)$, we also need to calculate the marginal entropy $S(A_i)$ and $S(R)$.
    It is easy to see that 
    \begin{equation}
        S(A_i) = S(R) = H(p_l),
    \end{equation}
    so the generalized EoP gap is 
    \begin{align}
        g(A_1,\dots,A_n) & = \frac{1}{2} \left[\sum_{i=1}^n S(A_i R_i) - S(A_i) + S(R)\right] \nonumber \\
        & = \frac{1}{2} H(p_l).
    \end{align}
    In particular, $g(A_1,\dots,A_k) = g(A_i,A_j) = g(A_i:A_j)/2$ for all pairs of subsystems $(A_i,A_j)$.
    This implies that the multipartite states having the general Schmidt decomposition is a 2-producible state if and only if the EoP gaps $g(A_i:A_j)$ for any possible pairs $(A_i,A_j)$ vanish.
    
    Moreover, denoting the event that the $4$-partite stabilizer state has the general decomposition as $\mathrm{Sch}$, we have 
    \begin{align}
        \mathbb{P}(\mathrm{ENT}_A) & = \mathbb{P}(\mathrm{ENT}_A \cap \mathrm{Sch}) + \mathbb{P}(\mathrm{ENT}_A \cap \overline{\mathrm{Sch}}) \nonumber \\
        & \leq \mathbb{P}(\mathrm{ENT}_A \cap \mathrm{Sch}) + \mathbb{P}(\overline{\mathrm{Sch}}),
    \end{align}
    where $\overline{\mathrm{Sch}}$ is the opposite event of $\mathrm{Sch}$.
    According to the discussions in Sec.~\ref{sec: 2-producible}, fixing the proportion $N_A : N_B : N_C : N_D = 1:3:3:3$, in thermodynamic limit $N \rightarrow \infty$,
    \begin{align}
        & \mathbb{P}(\mathrm{ENT}_A \cap \mathrm{Sch}) \rightarrow 0, \\
        & \mathbb{P}(\mathrm{ENT}_A) \rightarrow 1,
    \end{align}
    implies
    \begin{equation}
        \mathbb{P}(\overline{\mathrm{Sch}}) \rightarrow 1.
    \end{equation}
    Therefore, in the proportion $N_A : N_B : N_C : N_D = 1:3:3:3$, the $4$-partite stabilizer states typically do not fulfill the general Schmidt decomposition, which is different from the $3$-partite case where the states are locally equivalent to the tensor product of the GHZ states and Bell pairs~\cite{10.1063/1.2203431}.

    \section{Conclusion and Discussion} \label{sec: conclusion}

    In this work, we investigate the $2$-producible states, which are the tensor product of bipartite states.
    For the random stabilizer state, we shows that the EoP gap $g(A:B)$ is not sufficient to characterize the $2$-producible states of more than tripartite systems.
    Therefore, we have introduced the generalizations of EoP and their gap to the lower bounds in multipartite cases, and proved that a multipartite state is 2-producible if and only if all the generalized EoP gaps vanish.
    This provides a quantitative characterization of the multipartite entanglement with 2-producible states.
    The generalized EoP gap also quantifies the optimal quantum communication cost to redistribute one part of the system to other parts, and thus quantifies the local recoverability of the state from its  marginal state of some parts.
    This shows the operational meaning for the generalized EoP gap.
    It also measures the distance of the state from 2-producible states in relative entropy.
    Moreover, we calculate the generalized EoP gap for the multipartite states having general Schmidt decompositions, where the generalized EoP gap equals the EoP gap in the bipartite case (up to a factor from definition).
    Thus, the states are 2-producible if and only if the EoP gap vanishes for all possible pairs of systems, which implies that the $4$-partite stabilizer states do not always have a general Schmidt decomposition.
    
    Furthermore, there are some limitations on our results.
    We will discuss them in the following.
    Nevertheless, multipartite entanglement is inherently complex, and capturing all its finer aspects with a single framework is impossible. 
    We believe our work provides a rigorous criterion for a specific class of multipartite entanglement (2-producibility), but we acknowledge that a finer characterization requires further investigation. 
    We hope that our results will promote further investigations and understanding of the multipartite entanglement.

    \subsection{Limitations on mixed states}
    First, our results only apply to pure states.
    For mixed states, it only provides a necessary condition for the 2-producible states.
    A $2$-producible mixed state is defined as the convex combination of the $2$-producible pure states, i.e.,
    \begin{equation}
        \rho_{\bigotimes_{i=1}^n A_i} = \sum_l p_l \ket{\psi_{\bigotimes_{i=1}^n A_i}^l} \bra{\cdot},
    \end{equation}
    where $\ket{\psi_{\bigotimes_{i=1}^n A_i}^l}$ is a $2$-producible pure state for all $l$, and $\bra{\cdot}$ is the corresponding bra vector.
    An $n$-partite mixed state $\rho_{\bigotimes_{i=1}^n A_i}$ can be purified into a $(n+1)$-partite pure state $\ket{\psi_{\bigotimes_{i=1}^{n+1} A_i}}$.
    Theorem~1 shows that the purification $\ket{\psi_{\bigotimes_{i=1}^{n+1} A_i}}$ is $2$-producible if and only if a decadent chain of generalized EoP gaps vanish.
    If the purification $\ket{\psi_{\bigotimes_{i=1}^{n+1} A_i}}$ is $2$-producible, then the mixed state $\rho_{\bigotimes_{i=1}^n A_i}$ is also $2$-producible,  
    Thus, vanishing generalized EoP gaps for a mixed state imply it is 2-producible.

    However, due to the bounds for convex combinations $\rho = \sum_l p_l \rho^l$, Eq.~(\ref{eq: convex}), a 2-producible mixed states, where $g(\alpha_k)_{\rho^l} = 0$, may have non-vanishing generalized EoP gaps. 
    For example, the tripartite marginal state of a sum of polygon state ~\cite{PhysRevLett.126.120501} 
    \begin{equation}
        \ket{\psi_{ABCD}} = \sum_{l} p_l \ket{\psi_{A_1 B_2}^l} \otimes \ket{\psi_{B_1 C_2}^l} \otimes \ket{\psi_{C_1 D_2}^l} \otimes \ket{\psi_{D_1 A_2}^l},
    \end{equation}
    is a 2-producible mixed state.
    For the partition $(A|B|CD)$, a sum of polygon state is a sum of triangle state~\cite{PhysRevLett.126.120501}, it follows that $g(A:B) = H(p_l)$~\cite{7fhc-hv9x}, and similarly $g(B:C) = g(A:C) = H(p_l)$.
    Therefore, Theorem~\ref{theorem: 2-producible} cannot be trivially  extended to the mixed states.

    Moreover, the purifications of $2$-producible mixed state have structured multipartite entanglement, such as the SOPS states.
    Characterizing the $2$-producible mixed states thus essentially requires the understanding of their structured multipartite entanglement.
    Note that the Markov gaps vanishes $h(A:B) = h(B:C) = h(A:C)= 0$ for the marginal state $\rho_{ABC}$ of the sum of polygon state, it is interesting to explore whether the multipartite generalization of Markov gap can characterize the $2$-producible mixed states, which requires further investigations.

    \subsection{Limitations on numerical calculation}

    \begin{figure*}
        \centering
        \includegraphics[width=\textwidth]{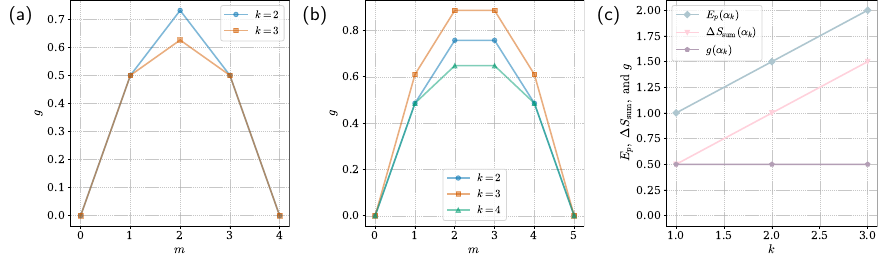}
        \caption{Numerics of generalized EoP gaps for the Dicke states and GHZ states.
        (a) Generalized EoP gap $g(\alpha_k)$ for the Dicke state $\ket{D_{n,m}}$ with $n = 4$, where $\alpha_k = (A_1,A_2,\dots,A_k)$ and $2\leq k \leq n-1$. 
        (b) Generalized EoP gap $g(\alpha_k)$ for the Dicke state $\ket{D_{n,m}}$ with $n = 5$, where $\alpha_k = (A_1,A_2,\dots,A_k)$ and $2\leq k \leq n-1$.
        (c) Generalized EoPs $E_p(\alpha_k)$, their lower bounds $\Delta S_{\mathrm{sum}}(\alpha_k) = \sum_{A_i \in \alpha_k} S(A_i) - S(\bigotimes_{A_i\in \alpha_k} A_i)$, and their gaps $g(\alpha_k)$ for the GHZ state with $n=5$, where $\alpha_k = (A_1,A_2,\dots,A_k)$ and $2\leq k \leq n-1$.
        Since the Dicke states and GHZ states are permutation invariant, the generalized EoP gaps with a fixed subsystem length $|\alpha_k| = k$ are the same.
        }
        \label{figure3}
    \end{figure*}
    Second, the calculation of the generalized EoP and its gap involves an optimization over all possible purifications and partitions, which is an intractable task theoretically.
    In numerics, the minimization can be performed with gradient descent over the unitary group acting on the purifying system $A_{n+1}$.
    We calculate the generalized EoP gap for the Dicke states
    \begin{equation}
        \ket{D_{n,m}} = \sum_{\hat{P}_n \in S_n} \hat{P}_n \ket{0}^{\otimes(n-m)} \ket{1}^{\otimes m}/\sqrt{C_n^m},
    \end{equation}
    and the GHZ states.
    The results for Dicke states with $n = 4,5$ are shown in Fig.~\ref{figure3}(a) and~(b), and the results for GHZ states with $n= 5$ are shown in Fig.~\ref{figure3}(c).
    For details of the calculation, see Appendix~\ref{app: numerical}.

    However, the gradient descent method may converge to a local minimum instead of the global minimum, and it is difficult to verify the global optimality of the solution.
    Therefore, we illustrate some properties of the generalized EoP and its gaps, which may be helpful to exclude the false numerical results.
    \begin{enumerate}
      \item[(1)] Upper bound:
      \begin{align}
          E_p(A_1,\dots, A_k) \leq \frac{1}{2} \left[\sum_{i=1}^{k-1} S(A_i) + S\left(\bigotimes_{i=1}^{k-1} A_i\right)\right],
      \end{align}
      where the equality holds when the state $\rho_{\bigotimes_{i=1}^k A_i}$ is pure.
      \item[(2)] Lower bound:
      \begin{align}
          & E_p(A_1,\dots, A_k A_k') \geq \frac{1}{2} \left[\sum_{i=1}^{k} S(A_i) + S\left(\bigotimes_{i=1}^{k-1} A_i\right) \right. \nonumber \\
          &~~~~ \left.- S\left(\bigotimes_{i=1}^{k-1} A_i \Big|A_k'\right) - S\left(\bigotimes_{i=1}^{k-1} A_i \Big|A_k'\right)\right].
      \end{align}
      \item[(3)] For a descending chain of subsystem $\alpha_k \subset \alpha_{k+1}$,
      \begin{equation}
          E_p(\alpha_{k}) \leq E_p(\alpha_{k+1}).
      \end{equation}
      \item[(4)] Subadditivity:
      \begin{align}
          & E_p(A_1, \dots, A_k) + E_p(A_1',\dots,A_k') \nonumber \\
          & \geq E_p(A_1A_1',\dots,A_kA_k').
      \end{align}
      \item[(5)] Polygamy: For pure state $\rho_{\bigotimes_{i=1}^{k-1} A_i A_k A_k'}$
      \begin{align}
          & E_p(A_1,\dots,A_{k-1},A_k) + E_p(A_1,\dots,A_{k-1},A_k') \nonumber \\
          & \geq E_p(A_1,\dots,A_{k-1},A_kA_k').
      \end{align}
      \item[(6)] For the convex combination $\rho = \sum_l p_l \rho^l$  
      \begin{align}
          & \sum_l p_l E_p(\alpha_k)_{\rho^l} \leq E_p(\alpha_k)_{\rho} \nonumber \\
          & \leq \sum_l p_l E_p(\alpha_k)_{\rho^l} + \frac{k}{2} H(p_l), \\
          & \sum_l p_l g(\alpha_k)_{\rho^l} - \frac{k}{2} H(p_l) \leq g(\alpha_k)_{\rho} \nonumber \\
          & \leq \sum_l p_l g(\alpha_k)_{\rho^l}+ \frac{k+1}{2} H(p_l). \label{eq: convex}
      \end{align}
    \end{enumerate}
    For details to prove the properties, see Appendix~\ref{app: properties}.

    \subsection{Limitations on general $k$-producible states}

    There are other probes that can be used to detect the $k$-producibility of states, for example, the quantum Fisher information.
    For $k$-producible state $\rho_{k-\mathrm{prod}}$, its Fisher information 
    \begin{equation}
      F[\rho_{k-\mathrm{prod}}] \leq (h_{\max} -h_{\min})^2 (s k^2 + r^2),
    \end{equation}
    where $s = \lfloor N/k \rfloor$, $r = N - sk$, $N$ is the total number of particles, and $h_{\max}$ ($h_{\min}$) is the maximum (minimum) eigenvalue of the local Hamiltonian $\hat{H}_i$ acting on each particle~\cite{PhysRevA.85.022321}.
    The quantum Fisher information also implies that the $(k+1)$-producible states are more sensitive than the $k$-producible states for phase estimation.
    However, it assumes that the local Hilbert space have the same dimension for each particle and the local Hamiltonian is the same for each particle, thus it cannot characterize the $k$-producible states completely. 

    In Ref.~\cite{balasubramanian2014multiboundary}, a characterization of $k$-producible states based on entanglement-generic is established.
    It derives a lower bound on the difference 
    \begin{equation}
        \Delta S_{\mathrm{sum}} = \sum_{A_i \in \alpha} S(A_i) - S\left(\bigotimes_{A_i \in \alpha} A_i\right),
    \end{equation}
    which is the lower bound of generalized EoP $E_p(\alpha)$, in terms of the length $|\alpha|$ for $k$-producible pure state $\rho_{\bigotimes_{i=1}^n A_i}$,
    \begin{equation}
        \mathbb{E}_{|\alpha| = j}\Delta S_{\mathrm{sum}} \geq \delta(j,k,n) \sum_{i = 1}^n S(A_i),
    \end{equation} 
    where $\mathbb{E}_{|\alpha| = j}$ averages over $\alpha \subseteq \{A_1, A_2, \dots, A_n\}$ with $|\alpha| = j$, and $\delta(j,k,n)$, called as the fractional entropy deficit, is a positive constant depending on $j,k,n$.
    
    The above two probes provide necessary conditions for the $k$-producible states.
    In Theorem~1, provide a necessary and sufficient condition for the 2-producibility in pure states, which captures more precise information than the witnesses like the quantum Fisher information and the entanglement-generic characterizations.
    However, as a trade-off, our framework specifically in the $k=2$ case cannot be extended to the $k$-producible states with $k>2$.
    The necessary and sufficient characterization of the $k$-producible states with $k>2$ is beyond the scope of this paper, which requires further investigations.

    \begin{acknowledgments}
        This work was supported by the National Natural Science Foundation of China (Grants No.~T2121001, No.~92265207, No.~92365301, No.~12475017), the Natural Science Foundation of Guangdong Province (Grant No.~2024A1515010398), the Guangdong Provincial Quantum Science Strategic Initiative (Grant No.~GDZX2505004), and the Scientific Research Innovation Capability Support Project for Young Faculty (Grant No.~SRICSPYF-ZY2025171). 
    \end{acknowledgments}

    \bibliography{two_producible_state.bib}

    \appendix

\section{Additional preliminaries} \label{app: preliminaries}

    This Appendix supplies more preliminaries.

    \subsection{Entanglement of Purification}

    Asymptotically, EoP is equal to the entanglement cost 
    \begin{align}
        & E_{\mathrm{LOq}}(\rho_{AB})  \\
        & = \lim_{\epsilon\rightarrow 0} \inf \left\{\frac{m}{n}: d(\Lambda_{\mathrm{LOq}}(\ket{\phi}\!\bra{\phi}^{\otimes m}), \rho_{AB}^{\otimes n}) \leq \epsilon \right\}, \nonumber
    \end{align}
    of creating a state $\rho_{AB}$ from the Bell states $\ket{\phi}$ with local operations with negligible communications (LOq)~\cite{10.1063/1.1498001}
    \begin{equation}
        E_{\mathrm{LOq}}(\rho_{AB}) = \lim_{n \to \infty} \frac{1}{n} E_p(\rho_{AB}^{\otimes n}) \equiv E_p^{\infty}(\rho_{AB}).
    \end{equation}

    Moreover, it also satisfies several properties~\cite{PhysRevA.91.042323}, such as the monotonicity under discarding of a quantum system
    \begin{equation}
        E_p(A:BC) \geq E_p(A:B),
    \end{equation}
    the polygamy for a tripartite pure state $\ket{\psi_{ABC}}$
    \begin{equation}
        E_p(A:B) + E_p(A:C) \geq E_p(A:BC),
    \end{equation}
    and the sub-additivity on the tensor product
    \begin{equation}
        E_p(A_1 A_2 : B_1 B_2) \leq E_p(A_1 : B_1) + E_p(A_2 : B_2).
    \end{equation}

    In the perspective of holographic duality, EoP is related to the entanglement wedge cross-section, which is the minimal cross-section of the entanglement wedge of a bipartite system and is suggested to measure the bipartite correlations for a given mixed state~\cite{umemoto2018entanglement}.
    A related measure, called reflected entropy $S_R$, is also proposed, which is defined as the entanglement entropy for the canonical purification of $\rho_{AB}$ between $A\bar{A}$ and $B\bar{B}$~\cite{dutta2021canonical,hayden2021markov,akers2022reflected}
    \begin{equation}
        S_R(A:B) \equiv S_{A\bar{A}}(\ket{\sqrt{\rho_{AB}}}) .
    \end{equation}
    The reflected entropy is bounded by
    \begin{equation}
        I(A:B) \leq S_R (A:B) \leq 2\min\{S_A, S_B\},
    \end{equation}
    and is conjectured as $S_R = 2 E_p$ for the holographic state.
    For a general quantum state, they are quite different, e.g., the reflected entropy is not monotonic under discarding of a quantum system~\cite{PhysRevA.107.L050401}.
    
    The gap between the reflected entropy and mutual information,
    \begin{equation}
        h(A:B) = S_R(A:B) - I(A:B)
    \end{equation}
    is called the Markov gap, which is suggested to measure tripartite entanglement in holographic states~\cite{akers2020entanglement}.
    For a tripartite pure state $\ket{\psi_{ABC}}$, if the Markov gap vanishes, i.e., $h(A:B)=0$, the state is a sum of triangle state (SOTS)~\cite{PhysRevLett.126.120501}
    \begin{equation}
        \ket{\psi_{ABC}} = \sum_l \sqrt{p_l} \ket{\psi_{A_1 B_2}^l} \otimes \ket{\psi_{B_1 C_2}^l} \otimes \ket{\psi_{C_1 A_2}^l},
    \end{equation}
    where $\mathcal{H}_X = \bigoplus_l (\mathcal{H}_{X_1}^l \otimes \mathcal{H}_{X_2}^l) \oplus \mathcal{H}_X^0$ for $X = A, B, C$, and $\ket{\psi_{X_1 Y_2}^l} \in \mathcal{H}_{X_1}^l \mathcal{H}_{Y_2}^l$.

    \subsection{$k$-producible states}

        In an $n$-partite system $(A_1, A_2, \dots, A_n)$, a pure state $\ket{\psi}$ is producible by $k$-partite entanglement ($k$-producible, for short) if it can be written as the tensor product of at most $k$-partite states, i.e.~\cite{guhne2005multipartite,balasubramanian2014multiboundary}
        \begin{equation}
            \ket{\psi} =  \bigotimes_{i=1}^{m} \ket{\psi_i},
        \end{equation}
        where $\ket{\psi_i}$ is a state located on at most $k$ parts of the system and $m \geq n/k$ is the number of components of tensor product.
        A state is genuine $k$-partite entangled state, if it is not $(k-1)$-producible~\cite{guhne2005multipartite}.

    \subsection{Quantum Markov chain}

        For $I(A:C|B) \neq 0$, the approximate quantum Markov property~\cite{fawzi2015quantum,PhysRevLett.115.050501,sutter2016universal,sutter2018approximate}
        \begin{align} 
            I(A:C|B) &\geq D_{\mathbb{M}}(\rho_{ABC}| \mathcal{R}_{B\rightarrow BC}(\rho_{AB})) \\
            &\geq -2\log F(\rho_{ABC}, \mathcal{R}_{B\rightarrow BC}(\rho_{AB})), \nonumber
        \end{align}
        is a strength of the strong sub-additivity of entropy $I(A:C|B) \geq 0$.
        Here, $D_{\mathbb{M}}(\cdot|\cdot)$ is the measured relative entropy~\cite{donald1986relative,PhysRevLett.103.160504,sutter2018approximate}, which is defined as the supremum of the relative entropy with measured states over all the possible projective measurements
        \begin{equation}
            D_{\mathbb{M}}(\rho|\sigma) = \sup \left\{D(\mathcal{M}(\rho)|\mathcal{M}(\sigma)): \mathcal{M}(\cdot) = \sum_i \hat{\Pi}_i (\cdot)\hat{\Pi}_i\right\}.
        \end{equation}
        The channel $\mathcal{R}_{B\rightarrow BC}$ is called the approximate Markov recovery channel, which is universal in the sense that it does not depend on the state $\rho_A$ of the system $A$~\cite{sutter2016universal,sutter2018approximate}.
        In particular, the equality $I(A:C|B) = 0$ implies that $F(\rho_{ABC}, \mathcal{R}_{B\rightarrow BC}(\rho_{AB})) = 1$, which re-establishes the exact quantum Markov chain, Eq.~(\ref{eq: QMC}).   

        Furthermore, with Stinespring dilation theorem~\cite{stinespring1955positive}, the approximate quantum Markov recovery can be extended to approximate universal recovery maps of quantum relative entropy~\cite{junge2018universal,sutter2018approximate}, such that
        \begin{equation} \label{eq: Petz}
            D(\rho\Vert \sigma) - D(\mathcal{N}(\rho)\Vert\mathcal{N}(\sigma)) \geq D_{\mathbb{M}}(\rho \Vert \mathcal{R}\circ \mathcal{N}(\sigma)),
        \end{equation}
        where the recovery map $\mathcal{R} = \int \mathrm{d}t \beta_0(t) \mathcal{R}_{\sigma,\mathcal{N}}^t$ is the convex combination of the rotated Petz recovery maps~\cite{petz1986sufficient,petz1988sufficiency}
        \begin{equation} 
            \mathcal{R}_{\sigma,\mathcal{N}}^t(\rho) = \sigma^{\frac{1+\mathrm{i}t}{2}} \mathcal{N}^{\dagger}[\mathcal{N}(\sigma)^{-\frac{1+\mathrm{i}t}{2}} \rho \mathcal{N}(\sigma)^{\frac{1-\mathrm{i}t}{2}}] \sigma^{\frac{1-\mathrm{i}t}{2}}.
        \end{equation}
        Here, $\beta_0(t) = \frac{\pi}{2(\cosh(\pi t) +1)}$ is a probability distribution over the parameter $t$.
        
        The Markov gap,
        \begin{equation}
            h(A:B) = I(\bar{A}:B|A) = I(\bar{B}:A|B),
        \end{equation}
        quantifies the Markov recovery of $\rho_{AB\bar{A}}$ or $\rho_{AB\bar{B}}$ from $\rho_{AB}$~\cite{hayden2021markov}.
        However, in this procedure, the total canonical purification $\rho_{AB\bar{A}\bar{B}}$ cannot be recovered from $\rho_{AB}$, due to the global entanglement from the coherent sum in $\{\sqrt{p_l}\}$.
        With the Petz recovery map, we have the approximate quantum Markov chain for $g(A:B)$ as 
        \begin{equation}
            g(A:B) \geq D_{\mathbb{M}}(\rho_{ABC}\Vert \mathcal{R}_{A|B \rightarrow ABC}^{\mathrm{LOCC}}(\rho_{AB})),
        \end{equation}
        where $C = \bar{A} \otimes \bar{B}$,
        \begin{align}
            \mathcal{R}_{A|B \rightarrow ABC}^{\mathrm{LOCC}} & = \int_{-\infty}^{\infty} \mathrm{d}t \beta_0(t) \mathcal{R}_{A\rightarrow A \bar{A}}^t\otimes \mathcal{R}_{B\rightarrow B \bar{B}}^t, \nonumber \\
            \mathcal{R}_{A\rightarrow A \bar{A}}^t(\cdot) & = \rho_{A\bar{A}}^{\frac{1+\mathrm{i} t}{2}} \rho_{A}^{-\frac{1+\mathrm{i} t}{2}} (\cdot)\rho_{A}^{-\frac{1-\mathrm{i} t}{2}} \rho_{A\bar{A}}^{\frac{1-\mathrm{i} t}{2}}, \\
            \mathcal{R}_{B\rightarrow B \bar{B}}^t(\cdot) & = \rho_{B\bar{B}}^{\frac{1+\mathrm{i} t}{2}} \rho_{B}^{-\frac{1+\mathrm{i} t}{2}} (\cdot)\rho_{B}^{-\frac{1-\mathrm{i} t}{2}} \rho_{B\bar{B}}^{\frac{1-\mathrm{i} t}{2}}.
        \end{align}
        In particular, when the equality in Eq.~(\ref{eq: QMC_g}) holds, each of the rotated Petz maps can recover the state $\rho_{ABC}$, where the maps are local operations, 
        \begin{equation}
            \mathcal{R}_{A|B \rightarrow ABC}^{\mathrm{LO}} = \mathcal{R}_{A\rightarrow A \bar{A}}^t\otimes \mathcal{R}_{B\rightarrow B \bar{B}}^t.
        \end{equation}

\section{Failure of pairwise EoP gap characterizing multipartite entanglement in Haar random states} \label{app: failure}

Here, we consider the failure probability of pairwise EoP gap characterizing multipartite entanglement of a generic $4$-partite pure state $\ket{\psi_{ABCD}}$ in Haar measure.
We still assume the partition of the system $A,B,C,D$ is $n_A:n_B:n_C:n_D = 1:3:3:3$.
In a tripartite Haar random state $\ket{\psi_{ABC}}$, the marginal state $\rho_{AB}$ exhibits a phase transition from separable to entangled state when the proportion $n_C = N_C/N$ of the system $C$ crosses the threshold $n_C = n_{\mathrm{SEP}} = [1+\min(n_A,n_B)]/2$~\cite{}, i.e., $\lim_{N \rightarrow \infty} \mathbb{P}(\mathrm{BE}_{A,B}) = 1$ if $n_C < n_{\mathrm{SEP}}$, and $\lim_{N \rightarrow \infty} \mathbb{P}(\mathrm{BE}_{A,B}) = 0$ otherwise.
In our case, we have $\lim_{N \to \infty} \mathbb{P}_{BE_{A,X}} = 0$ for $X = B,C,D$, since if $n_{\mathrm{SEP}} = 0.55 < n_{\overline{AX}} = 0.6$, where $\overline{AX}$ is the complement of the system $AX$.
Thus, the probability of the system $A$ being entangled with all the other subsystems $B,C,D$ satisfies 
\begin{equation}
    \lim_{N \to \infty} \mathbb{P}(\mathrm{ENT}_A) = \lim_{N \to \infty} \mathbb{P}(\mathrm{ME}_A).
\end{equation}

Denote the event that the pairwise EoP gaps detect the multipartite entanglement of the state as $\mathrm{EoP}$, i.e.,
\begin{align}
    \mathrm{EoP} & = \Big\{\{g(A_i:A_j) = 0, \ \forall A_i, A_j\} = \mathrm{BPS}\Big\} \nonumber \\
    & = \Big\{\forall A_i: \{g(A_i:A_j) = 0, \ \forall A_{j(\neq i)}\} = \overline{\mathrm{ME}}_i\Big\}.
\end{align}
Thus, the probability of the event $\mathrm{ME}_A$ satisfies
\begin{align}
    \mathbb{P}(\mathrm{ME}_A) & \leq \mathbb{P}(\overline{\mathrm{EoP}}) + \mathbb{P}(\mathrm{ME}_A \cap \mathrm{EoP}) \nonumber \\
    & = \mathbb{P}(\overline{\mathrm{EoP}}) + \mathbb{P}\left(\bigcap_{X \neq A}\{g(A:X) = 0\}\right) \nonumber \\
    & \leq \mathbb{P}(\overline{\mathrm{EoP}}) + \min_{X \neq A} \mathbb{P}(g(A:X) = 0).
\end{align}
Notice that EoP gap vanishes $g(A:X) = 0$ if and only if the state $\ket{\psi_{A|X|\overline{AX}}}$ is triangle state, and the Markov gap vanishes $h(A:X) = 0$ if and only if the state $\ket{\psi_{A|X|\overline{AX}}}$ is SOTS~\cite{PhysRevLett.126.120501}.
Since triangle state is a special case of SOTS, we have the inclusion of events $\{g(A:X) = 0\} \subseteq \{h(A:X) = 0\}$, and $\mathbb{P}(g(A:X) = 0) \leq \mathbb{P}(h(A:X) = 0)$.
By the typicality of the Markov gap in Haar random states~\cite{7fhc-hv9x}, we have $\lim_{N \to \infty} \mathbb{P}(h(A:X) = 0) = 0$ if $1/2<n_{\bar{AX}}<3/4$.
Here, we have $n_{\bar{AX}} = 0.6$, thus the failure probability of the pairwise EoP gap characterizing multipartite entanglement satisfies
\begin{equation}
    \mathbb{P}(\overline{\mathrm{EoP}}) \geq \lim_{N \to \infty} \mathbb{P}(\mathrm{ENT}_A).
\end{equation}
Since $\mathrm{ENT}_A = \{S(A) > 0\}$, by page's formula 
\begin{equation}
    S(A) = 0.1 N \log 2 + O(e^{-cN}),
\end{equation}
and typicality of entropy in Haar random states~\cite{}, we have $\lim_{N \to \infty} \mathbb{P}(\mathrm{ENT}_A) = 1$.
In conclusion, the failure probability
\begin{equation}
    \lim_{N \to \infty} \mathbb{P}(\overline{\mathrm{EoP}}) = 1,
\end{equation}
and typically, the pairwise EoP gap fails to characterize multipartite entanglement of a generic $4$-partite pure state $\ket{\psi_{ABCD}}$.

\section{Proof of propositions in Sec.~III} \label{app: lemma}

    \begin{appthm}[proposition]{proposition: ME}
        For a $n$-partition $(A_1,A_2,\dots,A_{n})$ of the system, denote the event that the system $A_i$ is entangled with its complementary system $\bigotimes_{j\neq i} A_j$ as $\mathrm{ENT}_i$, the event that the bipartite marginal state of $A_i$ and $A_j$ is entangled as $\mathrm{BE}_{i,j}$, and the event that the subsystem $A_i$ has multipartite entanglement with other subsystems as $\mathrm{ME}_i$, then the probability of the events $\mathbb{P}(\mathrm{ENT}_i)$, $\mathbb{P}(\mathrm{BE}_{i,j})$, and $\mathbb{P}(\mathrm{ME}_{i})$ satisfy
        \begin{equation}
            \mathbb{P}(\mathrm{ENT}_i) \leq \sum_{j\neq i} \mathbb{P}(\mathrm{BE}_{i,j}) + \mathbb{P}(\mathrm{ME}_{i}).
        \end{equation}
    \end{appthm}

    \begin{proof}
        If the system $A_i$ has no multipartite entanglement, it is entangled with other systems $A_j$ only in bipartite entanglement, i.e.
        \begin{equation} 
            \ket{\psi_{\bigotimes_{l=1}^{n} A_l}} = \bigotimes_{j \neq i} \ket{\psi_{A_{i,j} A_{j,1}}} \otimes \ket{\psi_{\bigotimes_{j\neq i} A_{j,2}}}.
        \end{equation}
        Thus, $A_i$ is entangled with it complementary system $\bigotimes_{j\neq i} A_j$ if and only if there is some system $A_{j\neq i}$ such that the marginal state of $A_iA_j$ is entangled, i.e.
        \begin{equation}
            \mathrm{ENT}_i \cap \overline{\mathrm{ME}}_i = \bigcup_{j\neq i} \mathrm{BE}_{i,j} \cap \overline{\mathrm{ME}}_i.
        \end{equation}
        It follows that
        \begin{align}
            \mathbb{P}(\mathrm{ENT}_i) & = \mathbb{P}(\mathrm{ENT}_i\cap\overline{\mathrm{ME}}_i) +  \mathbb{P}(\mathrm{ENT}_i\cap\mathrm{ME}_i) \nonumber\\
            & = \mathbb{P}\left(\bigcup_{j\neq i} \mathrm{BE}_{i,j}\cap\overline{\mathrm{ME}}_i\right) +  \mathbb{P}(\mathrm{ENT}_i\cap\mathrm{ME}_i)\nonumber\\
            & \leq \sum_{j \neq i} \mathbb{P}(\mathrm{BE}_{i,j}) + \mathbb{P}(\mathrm{ME}_i).
        \end{align}
    \end{proof}

        \begin{proposition}
        For a $n$-partition system $(A_1,A_2,\dots,A_{n})$, the following statements are equivalent: \\
        (1) If $g(A_i: A_j) = 0$ for all $A_{j\neq i}$, $A_i$ has only bipartite entanglement with other $A_j$, i.e.,
        \begin{equation} 
            \ket{\psi_{\bigotimes_{l=1}^{n} A_l}} = \bigotimes_{j \neq i} \ket{\psi_{A_{i,j} A_{j,1}}} \otimes \ket{\psi_{\bigotimes_{j\neq i} A_{j,2}}},
        \end{equation}
        where $A_i = \bigotimes_{j \neq i} A_{i,j}$, $A_j = A_{j,1} \otimes A_{j,2}$ for $j \neq i$, and $A_i$ is entangled with $A_j$ only via bipartite entanglement $\ket{\psi_{A_{i,j} A_{j,1}}}$. \\
        (2) If $g(A_i: A_j) = 0$ for all possible pairs $(A_i: A_j)$, the state $\ket{\psi}$ is a 2-producible state, i.e.,
        \begin{equation}
            \ket{\psi_{\bigotimes_{l=1}^{n} A_l}} = \bigotimes_{i=1}^{n} \bigotimes_{j>i} \ket{\psi_{A_{i,j} A_{j,i}}},
        \end{equation}
        where $A_i = \bigotimes_{j \neq i} A_{i,j}$ for all $i = 1, \dots, n$.
    \end{proposition}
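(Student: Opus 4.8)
The plan is to prove the two implications separately, each by induction on the number of parties (both statements being schematic over all party numbers), using two recurring inputs. The first is the tripartite structure theorem of Ref.~\cite{PhysRevLett.126.120501} (the three‑party case of statement (2)). The second is an elementary observation I would record first, call it \emph{gap‑descent}: for a pair $(X,Y)$ the gap $g(X:Y)$ is unchanged when the marginal $\rho_{XY}$ is tensored with any state that is a product across the $X|Y$ cut; indeed the mutual information is unchanged, the inequality $E_p \le E_p(X:Y)$ follows from subadditivity of $E_p$ together with $E_p=0$ for product states, and the reverse inequality from monotonicity of $E_p$ under discarding (the two‑party form of Lemma~\ref{lemma: monotonicity}).

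For (1) $\Rightarrow$ (2) I would argue as follows. The cases $n\le 3$ are immediate. Assuming (1) at all levels and given an $n$‑party pure state with $g(A_i:A_j)=0$ for every pair, apply (1) with $i=1$ to split off $A_1$: $\ket{\psi}=\bigotimes_{j\neq1}\ket{\psi_{A_{1,j}A_{j,1}}}\otimes\ket{\psi_{\bigotimes_{j\neq1}A_{j,2}}}$. For $j,l\neq1$ the marginal on $A_jA_l$ equals $\rho_{A_{j,1}}\otimes\rho_{A_{l,1}}$ tensored with the marginal of the residual state on $A_{j,2}A_{l,2}$, and the first two factors are a product across the $A_j|A_l$ cut, so gap‑descent gives $g_{\mathrm{res}}(A_{j,2}:A_{l,2})=g(A_j:A_l)=0$. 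Hence the $(n-1)$‑party residual satisfies the hypothesis of (2), is $2$‑producible by the inductive hypothesis, and together with the split‑off bipartite factors this makes $\ket{\psi}$ $2$‑producible.

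For (2) $\Rightarrow$ (1) I would relabel the distinguished party $A_0$ and the others $A_1,\dots,A_m$ (so $m=n-1$), assume $g(A_0:A_l)=0$ for all $l$, and induct on $m$; the case $m=1$ is trivial. For the step, apply the tripartite structure theorem to the partition $(A_0,A_m,\bar A)$ with $\bar A=\bigotimes_{l<m}A_l$: from $g(A_0:A_m)=0$ one gets $A_0=A_{0,m}\otimes A_0^{\mathrm r}$, $A_m=A_{m,1}\otimes A_m^{\mathrm r}$, a bipartition $\bar A=C_1\otimes C_2$, and $\ket{\psi}=\ket{\psi_{A_{0,m}A_{m,1}}}\otimes\ket{\psi_{A_m^{\mathrm r}C_2}}\otimes\ket{\psi_{A_0^{\mathrm r}C_1}}$, so $A_{0,m}$ is in a pure bipartite state with $A_{m,1}$ and decoupled from everything else. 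Gap‑descent then gives $g(A_0^{\mathrm r}:A_l)=g(A_0:A_l)=0$ for all $l<m$ inside the pure state $\ket{\psi_{A_0^{\mathrm r}C_1}}$; applying the inductive hypothesis to that state (centre $A_0^{\mathrm r}$) yields $A_0^{\mathrm r}=\bigotimes_{l<m}A_{0,l}$ with each $A_{0,l}$ in a pure bipartite state with part of $A_l$, and the leftover factors $A_m^{\mathrm r}$ and $C_2$ fold into the final residual state.

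The hard part will be justifying that this last induction is legitimate: the bipartition $\bar A=C_1\otimes C_2$ returned by the tripartite theorem is \emph{a priori} only a factorization of the abstract Hilbert space of $\bar A$ and need not respect its decomposition into the parties $A_1,\dots,A_{m-1}$, yet the inductive hypothesis needs $C_1$ to carry exactly that party structure. What must be shown is that the $(m-1)$ conditions $g(A_0:A_l)=0$ together force $A_0$ to have \emph{no} multipartite entanglement with \emph{any} grouping of the other parties, not only with the singleton‑versus‑complement ones; I expect this to reduce to a common‑refinement statement for the family of tensor‑product factorizations $\rho_{A_0}=\sigma_l\otimes\tau_l$ (one for each $l$), where $\sigma_l$ is purified by a subsystem of the \emph{distinct} party $A_l$ — that distinctness should pin the $\sigma_l$‑supports down as mutually orthogonal tensor factors of $\mathcal H_{A_0}$, after which the rest is bookkeeping. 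Combining the two implications then gives the asserted equivalence, and hence the identification $\{g(A_i:A_j)=0\ \forall i,j\}=\mathrm{BPS}\Leftrightarrow\{g(A_i:A_j)=0\ \forall A_{j\neq i}\}=\overline{\mathrm{ME}}_i$ used in Sec.~\ref{sec: 2-producible}.
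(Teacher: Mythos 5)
Your treatment of (1) $\Rightarrow$ (2) is correct and is essentially the paper's own argument: peel off one party using (1), observe that the pairwise gaps of the residual $(n-1)$-partite factor still vanish, and invoke the inductive hypothesis with the tripartite structure theorem of Ref.~\cite{PhysRevLett.126.120501} as the base case. Your ``gap-descent'' observation is exactly the combination the paper uses, namely invariance of $I(A:B)$ under tensoring with locally product factors together with $E_p(A':B')\geq E_p(A:B)$ from Lemma~\ref{lemma: monotonicity}; the paper only derives the inequality $0\leq g(A_{i,2}:A_{j,2})\leq g(A_i:A_j)$ and concludes by nonnegativity, while you sharpen it to an equality via subadditivity of $E_p$ --- either version suffices.

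The genuine gap is in your (2) $\Rightarrow$ (1) direction, and you have diagnosed it yourself: the tripartite structure theorem applied to $(A_0,A_m,\bar A)$ returns a factorization $\bar A=C_1\otimes C_2$ of the abstract Hilbert space of the complement, with no guarantee that $C_1$ decomposes as $\bigotimes_{l<m}A_l'$ with $A_l'\subseteq A_l$; consequently $\rho_{A_0^{\mathrm r}A_l}$ need not be a marginal of $\ket{\psi_{A_0^{\mathrm r}C_1}}$, and your induction cannot be closed without the common-refinement statement you only sketch (that the $n-1$ factorizations $\rho_{A_0}=\sigma_j\otimes\tau_j$, each $\sigma_j$ purified inside the distinct party $A_j$, admit a simultaneous tensor refinement of $\mathcal H_{A_0}$). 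That lemma is plausible but is the entire content of this direction, and it is left unproved. For comparison, the paper's own proof of this implication is the single sentence ``(2) $\Rightarrow$ (1) is obvious,'' so there is no argument there to measure yours against; note, however, that this is precisely the direction invoked in Sec.~\ref{sec: 2-producible}, where the failure of (1) for random $4$-partite stabilizer states is used to conclude the failure of (2). The missing step is therefore load-bearing both for your write-up and for the paper, and your proposal should not be regarded as complete until it is supplied.
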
\noindent
    \begin{proof}
        (2) $\Rightarrow$ (1) is obvious. \\
        (1) $\Rightarrow$ (2): We prove the proposition by induction on $n$.
        For $n = 3$, it is proved by the structure theorem of $g(A:B)$ in Ref.~\cite{PhysRevLett.126.120501}.
        For $n \geq 3$, applying the assumption to $A_{n}$, it is sufficient to prove that the $(n-1)$-partitie state $\ket{\psi_{\bigotimes_{j\neq n} A_{j,2}}}$ is a 2-producible state.
        Since the mutual information is invariant
        \begin{equation}
            I(A:B) = I(A':B'),
        \end{equation}
        if both $A$ and $B$ are extended into $A' = A \otimes R_A$ and $B' = B \otimes R_B$, where
        \begin{equation}
            \rho_{A'B'} = \rho_{AB} \otimes \rho_{R_A}\otimes \rho_{R_B},
        \end{equation}
        and the entanglement of purification is never increasing upon discarding of quantum system~\cite{PhysRevA.91.042323},
        \begin{equation}
            E_p(A': B') \geq E_p(A : B).
        \end{equation}
        From the definition of $g(A: B)$, Eq.~(\ref{eq: g}), it follows 
        \begin{equation}
            g(A: B) \leq g(A': B').
        \end{equation}
        Here, let $A = A_{i,2}$, $B = A_{j,2}$, $A' = A_i$ and $B' = A_j$,
        \begin{equation}
            0 \leq g(A_{i,2}: A_{j,2}) \leq g(A_i: A_j) = 0,
        \end{equation}
        for all possible pairs $(A_{i,2}: A_{2,j})$.
        The result is proved by the inductive hypothesis. 
    \end{proof}

\section{Proof of Structure Theorem of 2-Producible States} \label{app: 2-producible}

    \begin{appthm}{theorem: 2-producible}
        The $n$-partite $(A_1,A_2,\dots,A_n)$ pure state $\ket{\psi_{\bigotimes_{i=1}^n A_i}}$ is a 2-producible state, i.e.
        \begin{equation}
            \ket{\psi_{\bigotimes_{l=1}^{n} A_l}} = \bigotimes_{i=1}^{n} \bigotimes_{j>i} \ket{\psi_{A_{i,j} A_{j,i}}},
        \end{equation}
        if there exists a chain of system set $\alpha_2 \subset \cdots \subset \alpha_k \subset \cdots \subset \alpha_{n-1}$, where $\alpha_k = (A_{i_1}, \cdots , A_{i_k})$, such that $g(\alpha_k) = 0$,  
        where $A_i = \bigotimes_{j \neq i} A_{i,j}$ for all $i = 1, \dots, n$.
        Then, $g(\alpha) = 0$ for all possible system sets $\alpha = (A_{i_1}, A_{i_2}, \dots)$.             
    \end{appthm}
    To prove this, we require the following lemmas.
    \begin{appthm}[lemma]{lemma: reduce}
        A $(n+1)$-partite system $(A_1,A_2,\dots,A_n,A_{n+1})$ pure state up to local unitary on $A_i$ is in the form
        \begin{equation} 
            \ket{\psi_{\bigotimes_{i=1}^{n+1} A_i}} = \bigotimes_{i = 1}^n \ket{\psi_{A_{i,1} R_i}} \otimes \ket{\psi_{\bigotimes_{i =1}^n A_{i,2}}},
        \end{equation}
        where $A_{n+1} = \bigotimes_{i=1}^n R_i$, and $A_i = A_{i,1} \otimes A_{i,2}$ for $i=1,\dots, n$, if and only if $g(A_1,A_2,\dots,A_n) = 0$.
    \end{appthm}
    \begin{proof}
        Only if part of $g(A_1,A_2,\dots,A_n) = 0$ is obvious by direct calculation.
        If $g(A_1,A_2,\dots,A_n) = 0$, then there exists a decomposition $A_{n+1} = \bigotimes_{i=1}^n R_i$ such that
        \begin{align}
            g(A_1,A_2,\dots,A_n)& \equiv \frac{1}{2} \sum_{i=1}^n [S(A_i R_i) - S(A_i)] + S(\bar{A}_{n+1}) \nonumber \\
            & = \frac{1}{2} \sum_{i=1}^n I\left(R_i: \bar{A}_i \bar{R}_i | A_i\right) = 0,
        \end{align}
        where $\bar{A}_i = \bigotimes_{j\neq i} A_j$ and $\bar{R}_i = \bigotimes_{j>i} R_j$.
        The second equality follows from the definition of conditional mutual information 
        \begin{align}
            & \sum_{i=1}^n I\left(R_i: \bar{A}_i \bar{R}_i | A_i\right) \nonumber \\
            & \equiv \sum_{i=1}^n S(A_iR_i) + S(\bar{A}_{n+1}\bar{R}_i) - S(\bar{A}_{n+1} \bar{R}_{i-1}) - S(A_i) \nonumber \\
            & = \sum_{i=1}^n [S(A_iR_i) - S(A_i)] + S(\bar{A}_{n+1}) - S\left(A\right), 
        \end{align}
        where $A = \bigotimes_{i=1}^{n+1} A_i$ for brief, and $S\left(A\right) = 0$ since the state $\ket{\psi_{A}}$ is pure.
        Then, $g(A_1,A_2,\dots,A_n) = 0$ implies that $I\left(R_i: \bar{A}_i \bar{R}_i | A_i\right) = 0$ for all $i = 1, \dots, n$, since the conditional mutual information is non-negative.
        For $i=1$, we have $I\left(R_1: \bar{A}_1 \bar{R}_1 | A_1\right) = 0$, by the structure theorem of conditional mutual information~\cite{hayden2004structure}, it follows that there exists a decomposition $R_1 = R_{1,1} \otimes R_{1,2}$ such that
        \begin{equation}
            \ket{\psi_{A}} = \ket{\psi_{A_{1,1} R_{1}}} \otimes \ket{\psi_{A_{1,2} (\bigotimes_{j\geq 2} A_j) \bar{R}_1}}.
        \end{equation}
        For $i>1$, we assume the state has the decomposition that
        \begin{equation}
            \ket{\psi_{A}} = \bigotimes_{l\leq i-1} \ket{\psi_{A_{l,1} R_{l}}} \otimes \ket{\psi_{(\bigotimes_{l\leq i-1}A_{l,2}) (\bigotimes_{j\geq i} A_j) \bar{R}_{i-1}}},
        \end{equation}
        where $A_l = A_{l,1} \otimes A_{l,2}$ for $l \leq i-1$.
        By the structure theorem of conditional mutual information $I\left(R_i: \bar{A}_i \bar{R}_i | A_i\right) = 0$ again, there exists a decomposition $R_i = R_{i,1} \otimes R_{i,2}$ such that the marginal state on $\bar{A}_{n+1}\bar{R}_{i-1}$ is
        \begin{align}
            \rho_{\bar{A}_{n+1}\bar{R}_{i-1}} & = \bigotimes_{l\leq i-1} \rho_{A_{l,1}} \otimes \ket{\psi_{(\bigotimes_{l\leq i-1}A_{l,2}) (\bigotimes_{j\geq i} A_j) \bar{R}_{i-1}}}\bra{\cdot} \nonumber\\
            & = \sum_n p_n \rho_{A_{i,1}R_i}^{(n)} \otimes \rho_{A_{i,2} \bar{A}_i \bar{R}_i}^{(n)},
        \end{align}
        where $\bra{\cdot}$ denote the bra state of the corresponding ket state for simplicity. 
        Discarding the system $\bigotimes_{l\leq i-1} A_{l,1}$, we have
        \begin{align}
            &\ket{\psi_{(\bigotimes_{l\leq i-1}A_{l,2}) (\bigotimes_{j\geq i} A_j) \bar{R}_{i-1}}}\bra{\cdot} \nonumber\\
            & = \sum_n p_n \rho_{A_{i,1}R_i}^{(n)} \otimes \rho_{(\bigotimes_{l\leq i}A_{l,2}) (\bigotimes_{j\geq i+1} A_j) \bar{R}_i}^{(n)} \nonumber \\
            & = \ket{\psi_{A_{i,1}R_i}}\bra{\cdot} \otimes \ket{\psi_{(\bigotimes_{l\leq i}A_{l,2}) (\bigotimes_{j\geq i+1} A_j) \bar{R}_i}}\bra{\cdot},
        \end{align}
        and the total state is 
        \begin{equation}
            \ket{\psi_{A}} = \bigotimes_{l\leq i} \ket{\psi_{A_{l,1} R_{l}}} \otimes \ket{\psi_{(\bigotimes_{l\leq i}A_{l,2}) (\bigotimes_{j\geq i+1} A_j) \bar{R}_{i}}}.
        \end{equation}
        The induction stop when $i = n$, where $(\bigotimes_{j\geq i+1} A_j) \bar{R}_{i} = 0$ and the state is
        \begin{equation}
            \ket{\psi_{A}} = \bigotimes_{l = 1}^n \ket{\psi_{A_{l,1} R_{l}}} \otimes \ket{\psi_{\bigotimes_{l=1}^kA_{l,2}}}.
        \end{equation}
    \end{proof}

    \begin{appthm}[lemma]{lemma: monotonicity}
        $E_p(A_1,\dots,A_n)$ never increase when discarding the system
        \begin{equation}
            E_p(A_1,\dots,A_n) \leq E_p(A_1',\dots,A_n') ,
        \end{equation}
        where $A_i' = A_i \otimes E_i'$.
    \end{appthm}
    \begin{proof}
        Let $\ket{\psi_{\bigotimes_{i=1}^n A_i' R_i}}$ be the purification of $\rho_{\bigotimes_{i=1}^n A_i'}$ where $E_p(A_1',\dots,A_n')$ takes the minimum
        \begin{equation}
            E_p(A_1',\dots,A_n') = \frac{1}{2} \sum_{i=1}^{n} S(A_i'R_i) = \frac{1}{2} \sum_{i=1}^{n} S(A_i E_i R_i)
        \end{equation}
        Notice that the state $\ket{\psi_{\bigotimes_{i=1}^n A_i' R_i}}$ is also the purification of $\rho_{\bigotimes_{i=1}^n A_i}$ since $A_i' = A_i \otimes E_i$.
        Now denote $R_i' = R_i \otimes E_i$, we have 
        \begin{align}
            & E_p(A_1',\dots,A_n') = \frac{1}{2} \sum_{i=1}^{n} S(A_i R_i') \nonumber \\
            & \geq E_p(A_1,\dots,A_n) \equiv \frac{1}{2} \min \sum_{i=1}^{n} S(A_i R_i'). 
        \end{align}
    \end{proof}

    \begin{proof}[Proof of Theorem~\ref{theorem: 2-producible}]
        We prove the theorem by induction on the number of partition $n$.
        Without loss of generality, we assume $g(\alpha_{k}) = g(A_1,A_2,\dots,A_{k}) = 0$ for $2 \leq k \leq n-1$.
        For $n = 3$, the result is followed from the structure theorem of $g(A:B)$.
        For $n \geq 3$, apply Lemma~\ref{lemma: reduce} to $g(\alpha_{n-1}) = g(A_1,A_2,\dots,A_{n-1}) = 0$, there exist decomposition $A_{n}= \otimes_{i=1}^{n-1} A_{n,i}$ and $A_i = A_{i,1} \otimes A_{i,2}$ for $i \leq n-1$, such that  
        \begin{equation}
            \ket{\psi_{\bigotimes_{i=1}^n A_i}}  = \bigotimes_{i = 1}^{n-1} \ket{\psi_{A_{i,1} A_{n,i}}} \otimes \ket{\psi_{\bigotimes_{i=1}^{n-1}A_{i,2}}}.
        \end{equation}
        It is sufficient to prove that the $(n-1)$-partite pure state $\ket{\psi_{\bigotimes_{i=1}^{n-1}A_{i,2}}}$ is a 2-producible state.
        Now, for any $g(\alpha_{k}) = g(A_{1}, \dots, A_{k}) = 0$, where $k \leq n-2$, with the tensor-product structure of the state $\ket{\psi_{\bigotimes_{i=1}^{n} A_i}}$,
        \begin{equation}
            S\left(\bigotimes_{i=1}^{k}A_{i}\right) - \sum_{i=1}^{k} S(A_i) 
            = S\left(\bigotimes_{i=1}^{k}A_{i,2}\right) - \sum_{i=1}^{k} S(A_{i,2}),
        \end{equation}
        and by the monotonicity of $E_p(A_1,\dots,A_{k})$ under discarding system
        \begin{equation}
            E_p(A_{1,2},\dots,A_{k,2}) \leq E_p(A_1,\dots,A_{k}),
        \end{equation}
        we have
        \begin{equation}
            0\leq g(A_{1,2}, \dots, A_{k,2}) \leq g(A_{1}, \dots, A_{k}) = 0.
        \end{equation}
        Thus, we have a chain of system set $\alpha_2' \subset \cdots \subset \alpha_k' \subset \cdots \subset \alpha_{n-2}'$, where $\alpha_k' = (A_{1,2}, \cdots , A_{k,2})$, such that $g(\alpha_k') = 0$ for the $(n-1)$-partite pure state $\ket{\psi_{\bigotimes_{i=1}^{n-1}A_{i,2}}}$.
        By the inductive assumption, $\ket{\psi_{\bigotimes_{i=1}^n A_i}}$ is a 2-producible state.
        Moreover, $g(\alpha) = 0$ for all possible system sets $\alpha = (A_{i_1}, A_{i_2}, \dots)$ is directly followed by applying Lemma~\ref{lemma: reduce} to 2-producible state $\ket{\psi_{\bigotimes_{l=1}^{n} A_l}}$.

    \end{proof}

\section{Proof of Approximate Local Recoverability} \label{app: recovery}

    \begin{appthm}{theorem: recovery_k}
        For a $(n+1)$-partite pure state $\rho_{A} = \rho_{\bigotimes_{i=1}^{n+1} A_i}$, with a $n$-partite marginal state $\rho_{\bar{A}_{n+1}} = \rho_{\bigotimes_{i=1}^n A_i}$, by using the local Petz recovery map
        \begin{equation}
            \mathcal{R}_{\bar{A}_{n+1} \rightarrow A}^{\mathrm{LO}} = \bigotimes_{i=1}^n \mathcal{R}_{A_i \rightarrow A_i R_i}^{0},
        \end{equation}
        where the partition $A_{n+1} = \bigotimes_{i=1}^n R_i$ attains the minimum of $g(A_1, \dots, A_n)$,
        it has
        \begin{align}
            g(A_1, \dots, A_n) \geq - \log F(\rho_{A}, \mathcal{R}_{\bar{A}_{n+1} \rightarrow A}^{\mathrm{LO}}(\rho_{\bar{A}_{n+1}})).
        \end{align}
    \end{appthm} \noindent
    We first review some useful lemmas used in the proof.
    \begin{lemma}[Ref.~\cite{fawzi2015quantum}] \label{lemma: relative_entropy}
        For a density matrix $\rho$, a non-negative operator $\sigma$, and a sequence of completely positive trace non-increasing maps $\mathcal{W}_N$ such that 
        \begin{equation}
            \liminf_{N \to \infty} e^{\xi N} \mathrm{Tr}(\mathcal{W}_N(\rho^{\otimes N})) > 0,
        \end{equation} 
        for any $\xi > 0$, then
        \begin{equation}
            \limsup_{N \to \infty} \frac{1}{N} D(\mathcal{W}_N(\rho^{\otimes N}) \Vert \mathcal{W}_N(\sigma^{\otimes N})) = D(\rho \Vert \sigma) .
        \end{equation}
    \end{lemma}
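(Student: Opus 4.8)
The plan is to prove the two matching estimates $\limsup_{N\to\infty}\tfrac1N D\big(\mathcal{W}_N(\rho^{\otimes N})\Vert\mathcal{W}_N(\sigma^{\otimes N})\big)\le D(\rho\Vert\sigma)$ and $\ge D(\rho\Vert\sigma)$ separately, with only the second making essential use of the sub-exponential hypothesis on $\mathrm{Tr}[\mathcal{W}_N(\rho^{\otimes N})]$. For the upper bound I would first complete each trace non-increasing completely positive map $\mathcal{W}_N$ to a genuine channel $\tilde{\mathcal{W}}_N$ by adjoining a one-dimensional register that records the lost trace, $\tilde{\mathcal{W}}_N(X)=\mathcal{W}_N(X)\oplus\big(\mathrm{Tr}[X]-\mathrm{Tr}[\mathcal{W}_N(X)]\big)$, which is CPTP. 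Monotonicity of the quantum relative entropy under $\tilde{\mathcal{W}}_N$ together with additivity on tensor powers then gives $D\big(\tilde{\mathcal{W}}_N(\rho^{\otimes N})\Vert\tilde{\mathcal{W}}_N(\sigma^{\otimes N})\big)\le D(\rho^{\otimes N}\Vert\sigma^{\otimes N})=N\,D(\rho\Vert\sigma)$; since the output is block diagonal, the left-hand side equals $D\big(\mathcal{W}_N(\rho^{\otimes N})\Vert\mathcal{W}_N(\sigma^{\otimes N})\big)$ plus a scalar relative-entropy correction in the weights $p_N=\mathrm{Tr}[\mathcal{W}_N(\rho^{\otimes N})]$ and $q_N=\mathrm{Tr}[\mathcal{W}_N(\sigma^{\otimes N})]$. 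Using $-x\log x\le e^{-1}$ and the hypothesis, which forces $\tfrac1N\log p_N\to0$, this correction is $o(N)$ (after the harmless normalization $\mathrm{Tr}[\sigma]\le1$), and dividing by $N$ and taking the limit superior finishes the upper bound.

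For the lower bound, additivity gives $\tfrac1N D(\rho^{\otimes N}\Vert\sigma^{\otimes N})=D(\rho\Vert\sigma)$, so it suffices to show that $\mathcal{W}_N$ cannot erase more than a vanishing rate of relative entropy. Here I would follow the hypothesis-testing argument of Ref.~\cite{fawzi2015quantum}: bound $D\big(\mathcal{W}_N(\rho^{\otimes N})\Vert\mathcal{W}_N(\sigma^{\otimes N})\big)$ from below by the classical divergence obtained after a two-outcome measurement $\{\Pi_N,\mathbb{I}-\Pi_N\}$ on the output space, and observe that pulling $\Pi_N$ back through the sub-unital adjoint produces an admissible test $T_N=\mathcal{W}_N^{\dagger}(\Pi_N)$, $0\le T_N\le\mathbb{I}$, for the pair $(\rho^{\otimes N},\sigma^{\otimes N})$ whose type-I acceptance probability is diminished only by the sub-exponential factor $p_N$ and whose type-II error is scaled by $q_N$. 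Choosing $\Pi_N$ so that $T_N$ is near-optimal for the quantum Stein problem on $(\rho^{\otimes N},\sigma^{\otimes N})$, invoking the quantum Stein lemma, and combining with $D\ge(1-\varepsilon)D_H^{\varepsilon}-h(\varepsilon)$ before sending $\varepsilon\to0$, one should arrive at $\liminf_N\tfrac1N D\big(\mathcal{W}_N(\rho^{\otimes N})\Vert\mathcal{W}_N(\sigma^{\otimes N})\big)\ge D(\rho\Vert\sigma)$.

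The hard step is this lower bound. The only structural control on $\mathcal{W}_N$ is that it is trace non-increasing, so the argument has to turn the mere sub-exponential survival of the weight of $\rho^{\otimes N}$ into the much stronger statement that $\mathcal{W}_N$ still retains essentially all of the distinguishing power between $\rho^{\otimes N}$ and $\sigma^{\otimes N}$ — equivalently that $D_H^{\varepsilon}$ of the normalized output states is still $N\big(D(\rho\Vert\sigma)-o(1)\big)$. Tracking the two subnormalization factors $p_N$ and $q_N$ simultaneously, so that neither the type-I nor the type-II error exponent degrades under the pull-back, and verifying that $T_N$ is genuinely a legitimate test, are the delicate points; by contrast the upper bound is just data processing plus bookkeeping of the subnormalization correction.
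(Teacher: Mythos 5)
First, a point of reference: the paper does not prove this lemma at all. It is quoted, with citation, from Ref.~\cite{fawzi2015quantum} (Lemma B.2 there) and used as a black box in Appendix~\ref{app: recovery}, where only the one-sided estimate $D(\Gamma_{A}^{(N)}\Vert\mu_{A}^{(N)})\le N[D(\rho_{A}\Vert\rho_{\bar A_{n+1}})+\eta]$ is ever invoked. The ``$=$'' in the transcription is an overstatement of the cited result, which is the upper bound ``$\le$'' only; so there is no in-paper proof to compare against, and the honest target is the inequality.

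The genuine gap is your lower bound: the ``$\ge$'' direction is false, so no refinement of the hypothesis-testing argument can close it. Take $\mathcal{W}_N(X)=\mathrm{Tr}[X]\,\tau$ for a fixed state $\tau$; this is CPTP (hence trace non-increasing) and $\mathrm{Tr}[\mathcal{W}_N(\rho^{\otimes N})]=1$, so the sub-exponential hypothesis holds trivially, yet for density matrices $\rho\neq\sigma$ one gets $D(\mathcal{W}_N(\rho^{\otimes N})\Vert\mathcal{W}_N(\sigma^{\otimes N}))=D(\tau\Vert\tau)=0$ for every $N$ while $D(\rho\Vert\sigma)>0$. Your proposed mechanism breaks at exactly the step you flag as delicate: the pulled-back operator $T_N=\mathcal{W}_N^{\dagger}(\Pi_N)$ satisfies $\mathrm{Tr}[T_N\rho^{\otimes N}]=\mathrm{Tr}[\Pi_N\mathcal{W}_N(\rho^{\otimes N})]$, which says every test on the \emph{output} is simulated by a test on the \emph{input} --- data processing for $D_H^{\varepsilon}$ in the direction that certifies the output is no more distinguishable than the input, never the converse. ``Choosing $\Pi_N$ so that $T_N$ is near-optimal for the input pair'' is in general impossible; in the depolarizing example every $T_N$ is a multiple of the identity, so only trivial tests are reachable. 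By contrast, your upper-bound paragraph (complete $\mathcal{W}_N$ to a channel with a trace-recording block, apply data processing and additivity, and absorb the scalar correction $-a\log a+a\log b$ with $a=1-p_N$) is sound and is essentially the content of the cited lemma and all that the paper needs; the one place to be careful is that in the application the second argument is $\rho_{\bar A_{n+1}}\otimes I_{A_{n+1}}$, whose trace exceeds $1$, so the term $a\log b\le(1-p_N)\,N\log\mathrm{Tr}[\sigma]$ is not removed by rescaling $\sigma$ but by the fact that there $p_N\to1$ exponentially fast.
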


    For a density matrix $\rho$ whose eigenvalues are $p_1 \geq p_2 \geq \cdots \geq p_d$, the eigenvalues of its $N$-times copy $\rho^{\otimes N}$ are $r_{\lambda} = \exp\left(-N[H(\tilde{\lambda}) + D(\tilde{\lambda} \vert p)]\right)$, where $\tilde{\lambda} = \lambda/N$, $\lambda = (\lambda_1, \lambda_2, \cdots, \lambda_d)$ such that $|\lambda| = \sum_i \lambda_i = N$.
    \begin{lemma}[Ref.~\cite{fawzi2015quantum}] \label{lemma: typical}
        Consider the projector $\Pi^{\lambda}_N$ on the eigenspace of eigenvalue $r_{\lambda} = \exp\left(-N[H(\tilde{\lambda}) + D(\tilde{\lambda} \vert p)]\right)$ of $\rho^{\otimes N}$. 
        For any given $\delta>0$, define the typical projector 
        \begin{equation}
            \Pi^{\delta}_{N} = \sum_{\lambda \in \Gamma_{\delta}} \Pi^{\lambda}_N,
        \end{equation}
        where $\Gamma_{\delta} = \{\lambda: |\log (r_{\lambda})/N + H(p)|\leq \delta\}$, then there exists $\epsilon > 0$, such that
        \begin{equation}
            \mathrm{Tr}(\Pi_{\Gamma}^{N} \rho^{\otimes N}) \geq 1 - e^{-N\epsilon}.
        \end{equation}
    \end{lemma}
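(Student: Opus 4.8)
The plan is to reduce this purely spectral statement to a classical concentration estimate for an i.i.d.\ source. First I would diagonalize $\rho=\sum_j p_j\ket{j}\!\bra{j}$ and observe that the eigenvectors of $\rho^{\otimes N}$ are the product vectors $\ket{x^N}=\ket{x_1}\otimes\cdots\otimes\ket{x_N}$ with eigenvalue $\prod_i p_{x_i}$, which depends on $x^N$ only through its type (empirical distribution) $\tilde\lambda=\lambda/N$. The eigenvalue attached to type $\lambda$ is exactly $r_\lambda=\prod_j p_j^{\lambda_j}=\exp(-N[H(\tilde\lambda)+D(\tilde\lambda\|p)])$, where the last equality uses $-\sum_j\tilde\lambda_j\log p_j=H(\tilde\lambda)+D(\tilde\lambda\|p)$, and its multiplicity is the multinomial coefficient $m_\lambda=\binom{N}{\lambda_1,\dots,\lambda_d}$. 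Hence $\mathrm{Tr}(\Pi_N^\lambda\rho^{\otimes N})=m_\lambda r_\lambda=\mathbb{P}[\mathrm{type}(X^N)=\lambda]$ for $X_1,\dots,X_N$ i.i.d.\ with law $p$, and summing over $\Gamma_\delta$ gives $\mathrm{Tr}(\Pi_N^\delta\rho^{\otimes N})=\mathbb{P}[\mathrm{type}(X^N)\in\Gamma_\delta]$.

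Next I would identify the defining condition of $\Gamma_\delta$ with weak typicality. Since $\tfrac1N\log r_\lambda=-[H(\tilde\lambda)+D(\tilde\lambda\|p)]=\tfrac1N\sum_i\log p_{x_i}$ for any sequence of type $\lambda$, the condition $|\tfrac1N\log r_\lambda+H(p)|\le\delta$ is precisely $|\tfrac1N\sum_{i=1}^N(-\log p_{X_i})-H(p)|\le\delta$, i.e.\ the empirical per-letter surprise lies within $\delta$ of its mean. Thus the atypical probability equals $\mathbb{P}[|\tfrac1N\sum_i Y_i-H(p)|>\delta]$ with $Y_i:=-\log p_{X_i}$ i.i.d.\ and $\mathbb{E}[Y_i]=-\sum_j p_j\log p_j=H(p)$. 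Restricting to the support of $\rho$, each $Y_i$ lies in the bounded interval $[-\log p_{\max},-\log p_{\min}]$ with $p_{\max}=\max_j p_j$ and $p_{\min}=\min_{j:p_j>0}p_j$, so Hoeffding's inequality yields $\mathbb{P}[|\tfrac1N\sum_i Y_i-H(p)|>\delta]\le 2\exp(-2N\delta^2/\log^2(p_{\max}/p_{\min}))$. This gives $\mathrm{Tr}(\Pi_N^\delta\rho^{\otimes N})\ge 1-2\exp(-2N\delta^2/\log^2(p_{\max}/p_{\min}))$, and absorbing the factor $2$ into the exponent for $N$ large produces the stated bound $1-e^{-N\epsilon}$ for any $\epsilon<2\delta^2/\log^2(p_{\max}/p_{\min})$.

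Closer to the method-of-types language in which the lemma is phrased, I would alternatively bound each type probability directly as $m_\lambda r_\lambda\le\exp(-ND(\tilde\lambda\|p))$ using $m_\lambda\le\exp(NH(\tilde\lambda))$, and note that the functional $\ell(\tilde\lambda):=H(\tilde\lambda)+D(\tilde\lambda\|p)=-\sum_j\tilde\lambda_j\log p_j$ equals $H(p)$ at $\tilde\lambda=p$. Consequently the closed atypical set $\{\tilde\lambda:|\ell(\tilde\lambda)-H(p)|\ge\delta\}$ excludes $p$, and since the simplex is compact and $D(\tilde\lambda\|p)=0$ only at $\tilde\lambda=p$, it satisfies $\eta:=\min\{D(\tilde\lambda\|p):|\ell(\tilde\lambda)-H(p)|\ge\delta\}>0$. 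Summing over the at most $(N+1)^d$ types then gives $\mathbb{P}[\mathrm{type}(X^N)\notin\Gamma_\delta]\le(N+1)^d e^{-N\eta}\le e^{-N\epsilon}$ for any $\epsilon<\eta$ and $N$ large.

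I expect no deep obstacle here, as this is a standard typicality estimate; the one point requiring care is the boundary of the spectrum. One must restrict attention to the support so that $\log(p_{\max}/p_{\min})$, equivalently the gap $\eta$, is finite and strictly positive — eigenvalues $p_j=0$ contribute no mass and are harmlessly discarded — and then verify that the clean form $1-e^{-N\epsilon}$ emerges only after absorbing the polynomial prefactor $(N+1)^d$, or the constant $2$ in the Hoeffding route, into the exponent, which holds for all sufficiently large $N$.
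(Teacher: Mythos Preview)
Your proposal is correct: the reduction to an i.i.d.\ concentration problem via the type decomposition, followed by either Hoeffding's inequality or the Sanov-type bound $m_\lambda r_\lambda\le e^{-ND(\tilde\lambda\|p)}$ summed over polynomially many types, is the standard argument for this typicality estimate, and your treatment of the support restriction and absorption of the polynomial prefactor is sound. The paper itself does not supply a proof of this lemma; it is quoted directly from Ref.~\cite{fawzi2015quantum} as an auxiliary input to the proof of Theorem~\ref{theorem: recovery_k}, so there is no in-paper argument to compare against beyond noting that your method-of-types route matches the formulation in which the lemma is stated.
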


    \begin{lemma}[Ref.~\cite{fawzi2015quantum}] \label{lemma: fidelity}
        Let $\rho$ and $\sigma$ be non-negative operators and $\hat{W}$ an arbitrary operator, then 
        \begin{equation}
            F(\rho, \hat{W} \sigma \hat{W}^{\dagger}) \geq F(\hat{W}^{\dagger}\rho \hat{W}, \sigma).
        \end{equation}
    \end{lemma}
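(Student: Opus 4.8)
The plan is to prove the inequality through the variational (Alberti) characterization of the root fidelity $F(\rho,\tau)=\Vert\sqrt{\rho}\sqrt{\tau}\Vert_1$,
\begin{equation}
    F(\rho,\tau)=\tfrac{1}{2}\inf_{X>0}\left[\mathrm{Tr}(X\rho)+\mathrm{Tr}(X^{-1}\tau)\right],
\end{equation}
which turns the two $\hat{W}$-conjugations in the statement into conjugations of the optimization variable $X$ that can be compared directly.

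First I would record the two halves of this formula. The lower bound $\mathrm{Tr}(X\rho)+\mathrm{Tr}(X^{-1}\tau)\ge 2F(\rho,\tau)$ for every $X>0$ is elementary: setting $A=X^{1/2}\sqrt{\rho}$ and $B=X^{-1/2}\sqrt{\tau}$ gives $\mathrm{Tr}(X\rho)=\Vert A\Vert_2^2$, $\mathrm{Tr}(X^{-1}\tau)=\Vert B\Vert_2^2$, and $A^{\dagger}B=\sqrt{\rho}\sqrt{\tau}$ independently of $X$, so the claim reduces to $\Vert A\Vert_2^2+\Vert B\Vert_2^2\ge 2\Vert A\Vert_2\Vert B\Vert_2\ge 2\Vert A^{\dagger}B\Vert_1$, where the last step uses $\Vert A^{\dagger}B\Vert_1=\max_{U}|\mathrm{Tr}(UA^{\dagger}B)|\le\Vert A\Vert_2\Vert B\Vert_2$. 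The reverse bound, i.e.\ attainment of the infimum so that the display is an identity, is Alberti's theorem, which I would cite.

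The core of the argument is a single change of variable. Fix $X>0$ and put $Y=(\hat{W}^{\dagger}X^{-1}\hat{W})^{-1}$. On the $\sigma$-side the terms then match exactly, $\mathrm{Tr}(X^{-1}\hat{W}\sigma\hat{W}^{\dagger})=\mathrm{Tr}(\hat{W}^{\dagger}X^{-1}\hat{W}\,\sigma)=\mathrm{Tr}(Y^{-1}\sigma)$. On the $\rho$-side I would establish the operator inequality $\hat{W}\,(\hat{W}^{\dagger}X^{-1}\hat{W})^{-1}\hat{W}^{\dagger}\le X$: substituting $Z=X^{-1/2}\hat{W}$ rewrites its left-hand side as $X^{1/2}Z(Z^{\dagger}Z)^{-1}Z^{\dagger}X^{1/2}=X^{1/2}P_{\mathrm{ran}\,Z}X^{1/2}\le X$, since the central factor is the orthogonal projection onto $\mathrm{ran}\,Z$. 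Hence $\mathrm{Tr}(X\rho)\ge\mathrm{Tr}(\hat{W}Y\hat{W}^{\dagger}\rho)=\mathrm{Tr}(Y\,\hat{W}^{\dagger}\rho\hat{W})$, and combining the two sides,
\begin{align}
    \mathrm{Tr}(X\rho)+\mathrm{Tr}(X^{-1}\hat{W}\sigma\hat{W}^{\dagger})
    &\ge \mathrm{Tr}(Y\,\hat{W}^{\dagger}\rho\hat{W})+\mathrm{Tr}(Y^{-1}\sigma) \nonumber\\
    &\ge 2F(\hat{W}^{\dagger}\rho\hat{W},\sigma),
\end{align}
the last step being the elementary lower bound applied to the pair $(\hat{W}^{\dagger}\rho\hat{W},\sigma)$ at $Y>0$. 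Taking the infimum over $X$ on the left and invoking the Alberti identity for $F(\rho,\hat{W}\sigma\hat{W}^{\dagger})$ gives $2F(\rho,\hat{W}\sigma\hat{W}^{\dagger})\ge 2F(\hat{W}^{\dagger}\rho\hat{W},\sigma)$, which is the claim.

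I expect the only real obstacle to be technical: when $\hat{W}$ is not of full rank, $\hat{W}^{\dagger}X^{-1}\hat{W}$ is singular and $Y=(\hat{W}^{\dagger}X^{-1}\hat{W})^{-1}$ is not a legitimate positive-definite competitor. I would handle this by regularizing, e.g.\ taking $Y_{\epsilon}=(\hat{W}^{\dagger}X^{-1}\hat{W}+\epsilon I)^{-1}$ and letting $\epsilon\to 0^{+}$, checking that $\mathrm{Tr}(Y_{\epsilon}^{-1}\sigma)\to\mathrm{Tr}(\hat{W}^{\dagger}X^{-1}\hat{W}\,\sigma)$ while $\hat{W}Y_{\epsilon}\hat{W}^{\dagger}\to X^{1/2}P_{\mathrm{ran}\,Z}X^{1/2}\le X$, so the combined inequality persists in the limit. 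I note in passing that an alternative route, using the polar decompositions $\sqrt{\rho}\hat{W}=U_0\sqrt{\hat{W}^{\dagger}\rho\hat{W}}$ and $\hat{W}\sqrt{\sigma}=\sqrt{\hat{W}\sigma\hat{W}^{\dagger}}\,V$, expresses both fidelities through $\Vert\sqrt{\rho}\hat{W}\sqrt{\sigma}\Vert_1$ but only produces an \emph{upper} bound for each, hence cannot order them; this is why I favor the variational formulation.
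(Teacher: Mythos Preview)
Your main argument via Alberti's variational characterization is correct: the change of variable $Y=(\hat W^\dagger X^{-1}\hat W)^{-1}$ together with the projection inequality $\hat W(\hat W^\dagger X^{-1}\hat W)^{-1}\hat W^\dagger\le X$ yields the claim, and your regularization handles rank-deficient $\hat W$ cleanly. Note, however, that the paper does not supply its own proof of this lemma; it is quoted directly from Fawzi--Renner with a citation, so there is no in-paper argument to compare against.

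Your closing remark dismissing the polar-decomposition route is mistaken, and it is worth correcting because that route is the standard two-line proof. From $\sqrt{\rho}\,\hat W=U_0\sqrt{\hat W^\dagger\rho\hat W}$ one has $\sqrt{\hat W^\dagger\rho\hat W}\,\sqrt{\sigma}=U_0^\dagger\sqrt{\rho}\,\hat W\sqrt{\sigma}$, hence
\[
F(\hat W^\dagger\rho\hat W,\sigma)=\bigl\Vert U_0^\dagger\sqrt{\rho}\,\hat W\sqrt{\sigma}\bigr\Vert_1\le\bigl\Vert\sqrt{\rho}\,\hat W\sqrt{\sigma}\bigr\Vert_1,
\]
while from $\hat W\sqrt{\sigma}=\sqrt{\hat W\sigma\hat W^\dagger}\,V$ one has
\[
\bigl\Vert\sqrt{\rho}\,\hat W\sqrt{\sigma}\bigr\Vert_1=\bigl\Vert\sqrt{\rho}\sqrt{\hat W\sigma\hat W^\dagger}\,V\bigr\Vert_1\le\bigl\Vert\sqrt{\rho}\sqrt{\hat W\sigma\hat W^\dagger}\bigr\Vert_1=F(\rho,\hat W\sigma\hat W^\dagger).
\]
The two bounds go in \emph{opposite} directions relative to the intermediate quantity $\Vert\sqrt{\rho}\,\hat W\sqrt{\sigma}\Vert_1$---one is an upper bound for the smaller fidelity, the other a \emph{lower} bound for the larger---so they chain to give exactly the desired ordering. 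Your variational proof is valid but longer, requiring Alberti's theorem and a limiting argument; the polar route needs only $\Vert AB\Vert_1\le\Vert A\Vert_1\Vert B\Vert_\infty$.
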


    \begin{proof}[Proof of Theorem~\ref{theorem: recovery_k}]
        Assume the minimum of $g(A_1, \dots, A_n)$ is attained at the partition $A_{n+1} = \bigotimes_{i=1}^n R_i$
        \begin{equation}
            2 g(A_1, \dots, A_n) = \sum_{i=1}^n [S(A_i R_i) - S(A_i)] + S(\bar{A}_{n+1}).
        \end{equation}
        For the pure state $\rho_{A} = \rho_{\bigotimes_{i=1}^n A_i R_i}$, we consider the typical projectors
        \begin{equation}
            \Pi_{A_i^N}^{\delta} = \sum_{\lambda_{A_i} \in \Gamma_{A_i\delta}} \Pi_{A_i}^{\lambda_{A_i}},  \quad
            \Pi_{A_i^N R_i^N}^{\delta} = \sum_{\lambda_{A_iR_i} \in \Gamma_{A_iR_i\delta}} \Pi_{A_iR_i}^{\lambda_{A_iR_i}}, 
        \end{equation}
        where $\Gamma_{X\delta} = \{\lambda_{X}: |\log (r_{\lambda_{X}})/N + H(p_{X})| \leq \delta\}$ for $X = A_i^N, A_i^N R_i^N$, $p_{X}$ is the probability distribution of the eigenstates of the marginal state $\rho_{X}$, and $H(p_{X}) = S(X)$ is the entropy of $\rho_{X}$.
        By Lemma~\ref{lemma: fidelity}, we have
        \begin{equation}
            \mathrm{Tr}(\Pi_{X^N}^{\delta} \rho_{X}) \geq 1 - e^{-N\epsilon}.
        \end{equation}
        Consider the states $\Gamma_{A}^{(N)} = \mathcal{W}_N(\rho_{A}^{\otimes N})$ and $\mu_{A}^{(N)} = \mathcal{W}_N(\rho_{\bar{A}_{n+1}}^{\otimes N})$, where 
        \begin{align}
            \mathcal{W}_N(\cdot) & = \bigotimes_{i=1}^n \Pi_{A_i^NR_i^N}^{\delta} \Pi_{A_i^N}^{\delta} (\cdot) \bigotimes_{i=1}^n \Pi_{A_i^N}^{\delta} \Pi_{A_i^NR_i^N}^{\delta}.
        \end{align}

        By Lemma~9.4.2 in Ref.~\cite{wilde2013quantum}, we have 
        \begin{align}
            \mathrm{Tr}(\Gamma_{A}^{(n)}) & \geq \mathrm{Tr}\left(\bigotimes_{i=1}^n \Pi_{A_i^NR_i^N}^{\delta} \rho_{A}^{\otimes N}\right)  \nonumber \\
            & -2 \sqrt{1 - \mathrm{Tr}\left(\bigotimes_{i=1}^n \Pi_{A_i^N}^{\delta} \rho_{AB\bar{A}\bar{B}}^{\otimes N}\right)}.
        \end{align}
        Moreover, we have 
        \begin{align}
            &\mathrm{Tr}\left(\bigotimes_{i=1}^n \Pi_{A_i^N}^{\delta} \rho_{A}^{\otimes N}\right) \geq \mathrm{Tr}\left(\bigotimes_{i=1}^{n-1} \Pi_{A_i^N}^{\delta} \rho_{A}^{\otimes N}\right) \nonumber \\ 
            &~~~~- \mathrm{Tr}\left(\bigotimes_{i=1}^{n-1} \Pi_{A_i^N}^{\delta} (1-\Pi_{A_k^N}^{\delta}) \rho_{A}^{\otimes N}\right) ,
        \end{align}
        where for some $\epsilon_k > 0$,
        \begin{align}
            &\mathrm{Tr}\left(\bigotimes_{i=1}^{n-1} \Pi_{A_i^N}^{\delta} (1-\Pi_{A_k^N}^{\delta}) \rho_{A}^{\otimes N}\right) \nonumber \\
            & \leq \mathrm{Tr}\left((1-\Pi_{A_k^N}^{\delta}) \rho_{A}^{\otimes N}\right) \leq e^{-N\epsilon_k}.
        \end{align}
        Iterating the above inequality, we have
        \begin{align}
            \mathrm{Tr}\left(\bigotimes_{i=1}^n \Pi_{A_i^N}^{\delta} \rho_{A}^{\otimes N}\right) & \geq \mathrm{Tr}\left(\Pi_{A_1^N}^{\delta} \rho_{A}^{\otimes N}\right) - \sum_{i=2}^{n} e^{-N\epsilon_i}  \nonumber \\
            & \geq 1 - \sum_{i=1}^{n} e^{-N\epsilon_i} \geq 1 - n e^{-N\epsilon_{\min}} \nonumber \\
            & \geq 1 - e^{-N\epsilon'}, 
        \end{align}
        where $\epsilon_{\min} = \min_i \epsilon_i > 0$ and there exists $0< \epsilon' < \epsilon_{\min} - \frac{1}{N} \log n$ for sufficiently large $N$.
        Similarly, we have $\mathrm{Tr}(\Pi_{A^N \bar{A}^N}^{\delta} \Pi_{B^N\bar{B}^N}^{\delta} \rho_{AB\bar{A}\bar{B}}^{\otimes N}) \geq 1 - e^{- N \epsilon''}$ for sufficiently large $N$, and consequently
        \begin{equation}
            \mathrm{Tr}(\Gamma_{A}^{(n)}) \geq 1 - e^{- N \epsilon'''},
        \end{equation}
        for some $\epsilon'''>0$ and sufficiently large $N$.
        Then, by Lemma~\ref{lemma: relative_entropy}, for arbitrary small $\eta > 0$, there is sufficient large $N$, such that
        \begin{align}
            D(\Gamma_{A}^{(N)} \Vert \mu_{A}^{(N)}) &\leq N[D(\rho_{A} \Vert \rho_{\bar{A}_{n+1}}) + \eta] \nonumber \\
            & \leq N[S(\bar{A}_{n+1}) + \eta].
        \end{align}
        Here, $\bar{A}_{n+1} = \bigotimes_{i=1}^{n} A_i$.
        Since fidelity is the $1/2$-R\'enyi relative entropy, we have
        \begin{equation}
            -2\log F(\Gamma_{A}^{(N)} \Vert \mu_{A}^{(N)}) \leq N[S(\bar{A}_{n+1}) + \eta]. \nonumber \\
        \end{equation}

        For the fidelity, by Lemma~\ref{lemma: fidelity}, we move the projectors $\Pi_{A_i^NR_i^N}^{\delta}$ to the side of $\mu_{AR}^{(N)}$, thus 
        \begin{equation}
            F(\Gamma_{A}^{(N)} \Vert \mu_{A}^{(N)}) = F(\mathcal{V}_N(\rho_{A}^{\otimes N})  \Vert \mu_{A}^{(N)}),
        \end{equation}
        where 
        \begin{equation}
            \mathcal{V}_N(\cdot) = \bigotimes_{i = 1}^n \Pi_{A_i^N}^{\delta} (\cdot)\bigotimes_{i = 1}^n \Pi_{A_i^N}^{\delta}.
        \end{equation}
        For the projectors $\Pi_{X^N}^{\lambda_X}$ into the eigenspace of $\rho_{X}^{\otimes N}$, we have
        \begin{align}
            \Pi_{A_i^N}^{\lambda_{A_i}} & = r_{\lambda_{A_i}}^{1/2} (\rho_{A_i}^{-1/2})^{\otimes N} \Pi_{A_i^N}^{\lambda_{A_i}} \leq  r_{\lambda_{A_i}}^{1/4} (\rho_{A_i}^{-1/2})^{\otimes N}, \\
            \Pi_{A_i^NR_i^N}^{\lambda_{A_iR_i}} & = r_{\lambda_{A_iR_i}}^{-1/2} (\rho_{A_iR_i}^{1/2})^{\otimes N} \Pi_{A_i^NR_i^N}^{\lambda_{A_iR_i}} \leq  r_{\lambda_{A_iR_i}}^{-1/2} (\rho_{A_iR_i}^{1/2})^{\otimes N}, 
        \end{align}
        and therefore
        \begin{align}
            \Pi_{A_i^N}^{\delta} & \leq |\Gamma_{A_i\delta}| e^{-\frac{N}{2}(S(A_i) - \delta)}  (\rho_{A_i}^{-1/2})^{\otimes N}, \\
            \Pi_{A_i^N R_i^N}^{\delta} & \leq |\Gamma_{A_iR_i\delta}| e^{\frac{N}{2}(S(A_i R_i) - \delta)}  (\rho_{A_i R_i}^{1/2})^{\otimes N}, 
        \end{align}
        where $|\Gamma_{X\delta}| \leq \dim \mathrm{Sym}^N(X) \leq (N+1)^{d_X}$ for $X = A_i, A_i R_i$ are polynomials $p_N$ in $N$.
        It follows that 
        \begin{align}
            \mu_{A}^{(N)} & \leq p_N e^{\frac{N}{2}\sum_i [S(A_i R_i) - S(A_i)]} \nonumber\\
            &~~~~\times \mathcal{R}_{\bar{A}_{n+1}\rightarrow A}^{\mathrm{LO}}(\rho_{\bar{A}_{n+1}})^{\otimes N} ,
        \end{align}
        where the channels
        \begin{align}
            \mathcal{R}_{\bar{A}_{n+1} \rightarrow A}^{\mathrm{LO}} & = \bigotimes_{i=1}^n \mathcal{R}_{A_i \rightarrow A_i R_i}^{0}, \\
            \mathcal{R}_{A_i \rightarrow A_i R_i}^{0}(\cdot) & = \rho_{A_iR_i}^{1/2} \rho_{A_i}^{-1/2} (\cdot) \rho_{A_i}^{-1/2} \rho_{A_iR_i}^{1/2}, 
        \end{align}
        are trace-preserving, since
        \begin{equation}
            \mathrm{Tr}_{R_i}(\rho_{A_i}^{-1/2} \rho_{A_i R_i}^{1/2}  \rho_{A_i R_i}^{1/2} \rho_{A_i}^{-1/2}) = \mathrm{Tr}_{R_i}(\rho_{A_i}^{-1} \rho_{A_i R_i}) = \hat{I}_{A_i}.
        \end{equation}
        Since the fidelity is $F(\rho, \sigma) = \mathrm{Tr}\sqrt{\sqrt{\rho}\sigma \sqrt{\rho}}$, by operator monotonicity, we have
        \begin{align}
            & F(\Gamma_{A}^{(N)} \Vert \mu_{A}^{(N)}) 
            \leq p_N e^{\frac{N}{2}\sum_i [S(A_i R_i) - S(A_i)]} \nonumber \\
            &\times F(\mathcal{V}_N(\rho_{A}^{\otimes N}) 
            \Vert \mathcal{R}_{\bar{A}_{n+1}\rightarrow A}^{\mathrm{LO}}(\rho_{\bar{A}_{n+1}})^{\otimes N}).
        \end{align}
        Notice that $\Pi_{A_i^N}^{\delta} \leq \hat{I}_{A_i^N}$, we have $\mathcal{V}_N(\rho_{A}^{\otimes N}) \leq \rho_{A}^{\otimes N}$. 
        By operator monotonicity again, we have
        \begin{align}
            & F(\mathcal{V}_N(\rho_{A}^{\otimes N})  \Vert \mathcal{R}_{\bar{A}_{n+1}\rightarrow A }^{\mathrm{LO}}(\rho_{\bar{A}_{n+1}})^{\otimes N}) \nonumber \\
            & \leq F^N(\rho_{A} \Vert \mathcal{R}_{\bar{A}_{n+1}\rightarrow A}^{\mathrm{LO}}(\rho_{\bar{A}_{n+1}})).
        \end{align}
        Thus, we conclude that 
        \begin{align}
            & -\log F(\rho_{A} \Vert \mathcal{R}_{\bar{A}_{n+1}\rightarrow A}^{\mathrm{LO}}(\rho_{\bar{A}_{n+1}})) \nonumber \\
            & \leq \frac{1}{2}\sum_i [S(A_i R_i) - S(A_i)] + \frac{1}{N} \log p_N \nonumber\\
            &~~~~ -\log F(\Gamma_{A}^{(N)} \Vert \mu_{A}^{(N)}) \nonumber\\
            & \leq g(A_1, \dots, A_n) + \frac{1}{N} \log p_N + \eta.
        \end{align}
        Since $\eta$ can be arbitrarily small, in the limit $N\rightarrow \infty$, we have
        \begin{equation}
            g(A_1, \dots, A_n) \geq -\log F(\rho_{A} \Vert \mathcal{R}_{\bar{A}_{n+1}\rightarrow A}^{\mathrm{LO}}(\rho_{\bar{A}_{n+1}})).
        \end{equation} 

    \end{proof}

    We also consider the upper bound of $g(A_1,\dots, A_n)$.
    \begin{appthm}[proposition]{proposition: k_BPS}
        For a $n$-partite pure state $\rho_{\bigotimes_{i=1}^n A_i}$, it has
        \begin{equation}
            2\max_{|\alpha|<n} g(\alpha) \leq \min_{\sigma\overset{\mathrm{LU}}{\sim} \rho} \min_{\mu \in \mathrm{BPS}_n} D(\sigma \Vert \mu),
        \end{equation}
        where $\mathrm{BPS}_n$ denote the set of $n$-partitie $2$-producible states
        \begin{equation}
            \ket{\psi_{\bigotimes_{l=1}^{n} A_l}} = \bigotimes_{i=1}^{n} \bigotimes_{j>i} \ket{\psi_{A_{i,j} A_{j,i}}},
        \end{equation}
        and $\mu$ is the state equivalent to $\rho_{\bigotimes_{i=1}^n}$ up to local unitary operations.
    \end{appthm} \noindent
    To prove this we require the following lemma.
    \begin{proposition} \label{proposition: k_QMC}
        For a $(n+1)$-partite pure state $\rho_{A}  =  \rho_{\bigotimes_{i=1}^{n+1} A_i}$,
        \begin{equation}
            2 g(\alpha) \leq \min_{\sigma\overset{\mathrm{LU}}{\sim} \rho} \min_{\mu \in \mathrm{QMC}(\alpha)} D(\sigma_{A} \Vert \mu_{A}) ,
        \end{equation}
        where $|\alpha| = n$, i.e. $\alpha = (A_{i_1}, \dots A_{i_{n}})$, and $\mathrm{QMC}(\alpha) = \{\rho: g_{\rho}(\alpha) = 0\}$.
    \end{proposition}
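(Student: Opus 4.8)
The plan is to reduce the statement, via local-unitary invariance and the structure theorem, to a single telescoping inequality built from the fact that an exact quantum Markov state is a "floor" for the relative entropy.

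\emph{Reductions.} Since $g$ is invariant under local unitaries and $D(\,\cdot\,\Vert\,\cdot\,)$ is invariant when one and the same local unitary is applied to both arguments, and since $g_\rho(\alpha)$ depends on $\rho$ only through the marginal $\rho_\alpha$, it suffices to fix an arbitrary $\sigma\overset{\mathrm{LU}}{\sim}\rho$ and an arbitrary $\mu\in\mathrm{QMC}(\alpha)$ and to prove $g_{\sigma_\alpha}(\alpha)\le D(\sigma_A\Vert\mu_A)$; the two minima are then inherited. Write $\sigma_\alpha=\mathrm{Tr}_{A_{n+1}}\sigma_A$ and $\mu_\alpha=\mathrm{Tr}_{A_{n+1}}\mu_A$, so $g_{\mu_\alpha}(\alpha)=0$. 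I would also pad $A_{n+1}$ by tensoring an auxiliary register in a fixed pure state onto both $\sigma_A$ and $\mu_A$: this changes neither side, and it lets $A_{n+1}$ host whatever tensor decomposition is needed below.

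\emph{Turning both quantities into Markov data.} Applying Lemma~\ref{lemma: reduce} to $\mu_\alpha$, the hypothesis $g_{\mu_\alpha}(\alpha)=0$ furnishes a decomposition $A_{n+1}=\bigotimes_{i=1}^{n}R_i$ of the purifying register together with local splits $A_i=A_{i,1}\otimes A_{i,2}$ such that the optimal purification of $\mu_\alpha$ factorizes as $\bigotimes_i\ket{\psi_{A_{i,1}R_i}}\otimes\ket{\psi_{\bigotimes_i A_{i,2}}}$; equivalently, that purification obeys the $n$ nested Markov conditions $I(R_i:\bar A_i\bar R_i\mid A_i)=0$ for $i=1,\dots,n$, with $\bar A_i=\bigotimes_{j\neq i}A_j$ and $\bar R_i=\bigotimes_{j>i}R_j$. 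On the other side, $g_{\sigma_\alpha}(\alpha)$ is by definition a minimum over purifications of $\sigma_\alpha$ and over splits of its purifying register, so evaluating it on the particular purification $\sigma_A$ with the particular split $A_{n+1}=\bigotimes R_i$ dictated by $\mu$ gives the one-sided bound $2\,g_{\sigma_\alpha}(\alpha)\le\sum_{i=1}^{n}I_{\sigma_A}(R_i:\bar A_i\bar R_i\mid A_i)$.

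\emph{The telescoping heart.} It then remains to show $\sum_{i}I_{\sigma_A}(R_i:\bar A_i\bar R_i\mid A_i)\le 2\,D(\sigma_A\Vert\mu_A)$ (and in fact one may hope for the stronger bound without the factor $2$). The tool is the refined Markov inequality: if a state $\nu$ satisfies $I_\nu(X:Z\mid Y)=0$ then $\log\nu_{XYZ}=\log\nu_{XY}+\log\nu_{YZ}-\log\nu_Y$~\cite{hayden2004structure}, so for any state $\omega$ on $XYZ$ the chain rule gives $D(\omega\Vert\nu)=D(\omega_{XY}\Vert\nu_{XY})+D(\omega_{YZ}\Vert\nu_{YZ})-D(\omega_Y\Vert\nu_Y)+I_\omega(X:Z\mid Y)\ge D(\omega_{XY}\Vert\nu_{XY})+I_\omega(X:Z\mid Y)$ by data processing, and the surviving term $D(\omega_{XY}\Vert\nu_{XY})$ is exactly what feeds the next round. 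I would first use the universal recovery inequality~(\ref{eq: Petz}) for the channel $\mathrm{Tr}_{A_{n+1}}$ and the (rotated) Petz map of $\mu_A$ to pass from $\mu_A$ to the structured Markov purification of $\mu_\alpha$ constructed above — the step that makes a single register splitting $A_{n+1}=\bigotimes R_i$ satisfy all $n$ Markov conditions against which to telescope — and then apply the refined inequality $n$ times, peeling off $R_n,R_{n-1},\dots,R_1$ in turn (at step $i$: $X=R_i$, $Y=A_i$, $Z=\bar A_i\bar R_i$, everything conditioned on $A_i$), so that the $n$ defects $I_{\sigma_A}(R_i:\bar A_i\bar R_i\mid A_i)$ accumulate into a lower bound on $D(\sigma_A\Vert\mu_A)$ dominating $2\,g_{\sigma_\alpha}(\alpha)$; the pure factor $\ket{\psi_{\bigotimes_i A_{i,2}}}$ of $\mu_\alpha$ rides along as a trivial ($g=0$) block, using additivity of the multi-information and subadditivity of the generalized entanglement of purification.

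\emph{Main obstacle.} I expect the real work to be the nesting bookkeeping of the last paragraph. A generic $\mu\in\mathrm{QMC}(\alpha)$ only guarantees $g_{\mu_\alpha}(\alpha)=0$, i.e.\ it is merely \emph{some} purification of the marginal $\mu_\alpha$, not $\mu_A$ itself, that is Markov, so one must argue carefully that the recovery-map reduction (or, alternatively, a direct compatible choice of the $R_i$) really produces one decomposition $A_{n+1}=\bigotimes R_i$ for which $\mu_A$ obeys all $n$ conditional-independence relations simultaneously, and that at each peeling step the ``future'' system is exactly $\bar A_i\bar R_i$ so that the telescoped remainders line up. This is where the strong product form of Lemma~\ref{lemma: reduce} must be used, not merely the vanishing of a single conditional mutual information, and it mirrors, in the opposite direction, the iteration already carried out in the proofs of Lemma~\ref{lemma: reduce} and Theorem~\ref{theorem: recovery_k}.
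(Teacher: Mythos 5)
Your strategy is at bottom the paper's own: both arguments reduce to the identity $D(\sigma_A\Vert\mu_A) = D(\sigma_{\bar A_{n+1}}\Vert\mu_{\bar A_{n+1}}) + \sum_{i}\bigl[D(\sigma_{A_iR_i}\Vert\mu_{A_iR_i}) - D(\sigma_{A_i}\Vert\mu_{A_i})\bigr] + \sum_i I_{\sigma}(R_i:\bar A_i\bar R_i\,|\,A_i)$, which holds because $\mu$ factorizes as in Lemma~\ref{lemma: reduce}, and then discard the non-negative remainders by data processing; the paper gets this identity in one shot by cancelling the $\mathrm{Tr}(\sigma\log\mu)$ cross terms, while you get it by telescoping the Markov chain rule $n$ times. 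One step of your telescope, as written, would fail: at stage $i$ you set $X=R_i$, $Y=A_i$, $Z=\bar A_i\bar R_i$ and keep the term $D(\omega_{XY}\Vert\nu_{XY})$, but $XY=A_iR_i$ does not contain $R_{i+1},\dots,R_n$, so nothing is left to peel at the next stage. You must keep $D(\omega_{YZ}\Vert\nu_{YZ})=D(\omega_{\bar A_{n+1}\bar R_i}\Vert\nu_{\bar A_{n+1}\bar R_i})$ and discard $D(\omega_{XY}\Vert\nu_{XY})-D(\omega_Y\Vert\nu_Y)\ge 0$ instead (equivalently, swap the roles of $X$ and $Z$). With that correction the recursion terminates at $D(\omega_{\bar A_{n+1}}\Vert\nu_{\bar A_{n+1}})\ge 0$ and the accumulated conditional mutual informations dominate $2g_{\sigma}(\alpha)$ as you intend.

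Your ``main obstacle'' is real but should not be attacked with the rotated-Petz recovery inequality~(\ref{eq: Petz}): that is an approximate statement and cannot hand you an exact Markov state to telescope against. The elementary fix is that the product-form purification produced by Lemma~\ref{lemma: reduce} differs from $\mu_A$ only by an isometry on $A_{n+1}$ (after the ancilla padding you already allow); since $A_{n+1}$ is itself one of the $n+1$ local parties, this isometry may be applied simultaneously to $\sigma_A$ and $\mu_A$ without changing $D(\sigma_A\Vert\mu_A)$ and without leaving either $[\rho]_{\mathrm{LU}}$ or $\mathrm{QMC}(\alpha)$, so one may assume from the outset that $\mu_A$ itself factorizes. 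This is also the implicit justification behind the paper's opening move of writing $\mu_A$ directly in product form. Finally, note that once both fixes are made your chain gives $\sum_i I_{\sigma}(R_i:\bar A_i\bar R_i\,|\,A_i)\le D(\sigma_A\Vert\mu_A)$, i.e.\ $2g(\alpha)\le D(\sigma_A\Vert\mu_A)$ in the normalization where $g$ carries the factor $1/2$; the same strengthening is already latent in the paper's identity, whose right-hand side should read $2g$ rather than $g$ under that convention.
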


    \begin{proof}
        Without loss of generality, we assume $\alpha = (A_1, \dots, A_n)$.
        We also assume the minimum of $g(A_1, \dots, A_n)$ is attained at the partition $A_{n+1} = \bigotimes_{i=1}^n R_i$
        Consider the quantity 
        \begin{align}
            & D(\rho_{A} \Vert \mu_{A}) -  D(\rho_{\bar{A}_{n+1}} \Vert \mu_{\bar{A}_{n+1}})  \nonumber \\
            &~~~~ + \sum_{i=1}^n [D(\rho_{A_i} \Vert \mu_{A_i}) - D(\rho_{A_i R_i} \Vert \mu_{A_i R_i})] \nonumber \\
            & = 2 g(A_1, \dots, A_n) - \mathrm{Tr}(\rho_{A} \log \mu_{A}) + \mathrm{Tr}(\rho_{\bar{A}_{n+1}} \log \mu_{\bar{A}_{n+1}}) \nonumber \\
            &~~~~ + \sum_{i=1}^n [\mathrm{Tr}(\rho_{A_i R_i} \log \mu_{A_i R_i})-\mathrm{Tr}(\rho_{A_i} \log \mu_{A_i})] . 
        \end{align}
        Since $\mu_{A}$ is pure, $\mathrm{Tr}(\rho_{A} \log \mu_{A}) = 0$.
        In addition, by Lemma~\ref{lemma: reduce} $\mu_{A} = \ket{\psi_{A}} \bra{\psi_{A}} \in \mathrm{QMC}(A_1, \dots, A_n)$, where
        \begin{equation}
            \ket{\psi_{A}} = \bigotimes_{i = 1}^n \ket{\psi_{A_{i,1} R_i}} \otimes \ket{\psi_{\bigotimes_{i =1}^n A_{i,2}}}.
        \end{equation}
        Thus, we have
        \begin{align}
            & \mathrm{Tr}(\rho_{A_i R_i} \log \mu_{A_i R_i}) - \mathrm{Tr}(\rho_{A_i} \log \mu_{A_i}) \\
            & = \mathrm{Tr}(\rho_{A_{i,1}R_i} \log \mu_{A_{i,1}R_i}) -  \mathrm{Tr}(\rho_{A_{i,1}} \log \mu_{A_{i,1}}), \nonumber \\
            & \mathrm{Tr}(\rho_{\bar{A}_{n+1}} \log \mu_{\bar{A}_{n+1}}) = \mathrm{Tr}(\rho_{\bigotimes_{i =1}^n A_{i,2}} \log \mu_{\bigotimes_{i =1}^n A_{i,2}}) \nonumber \\
            &~~~~ + \sum_{i=1}^n \mathrm{Tr}(\rho_{A_{i,1}} \log \mu_{A_{i,1}}),  
        \end{align}
        where $\mu_{A_{i,1}R_i} = \ket{\psi_{A_{i,1}R_i}} \bra{\psi_{A_{i,1}R_i}}$, and $\mu_{\bigotimes_{i =1}^n A_{i,2}} = \ket{\psi_{\bigotimes_{i =1}^n A_{i,2}}} \bra{\psi_{\bigotimes_{i =1}^n A_{i,2}}}$ are pure states.
        It follows that 
        \begin{align}
            & \mathrm{Tr}(\rho_{A_i R_i} \log \mu_{A_i R_i}) - \mathrm{Tr}(\rho_{A_i} \log \mu_{A_i}) \nonumber \\
            & = -  \mathrm{Tr}(\rho_{A_{i,1}} \log \mu_{A_{i,1}}),  \\
            & \mathrm{Tr}(\rho_{\bar{A}_{n+1}} \log \mu_{\bar{A}_{n+1}}) = \sum_{i=1}^n \mathrm{Tr}(\rho_{A_{i,1}} \log \mu_{A_{i,1}}).
        \end{align}
        In conclusion, for $\mu_{A}$
        \begin{align}
            & D(\rho_{A} \Vert \mu_{A}) -  D(\rho_{\bar{A}_{n+1}} \Vert \mu_{\bar{A}_{n+1}})  \nonumber \\
            &~~~~ + \sum_{i=1}^n [D(\rho_{A_i} \Vert \mu_{A_i}) - D(\rho_{A_i R_i} \Vert \mu_{A_i R_i})] \nonumber \\
            & = 2 g(A_1, \dots, A_n).
        \end{align}
        By the strong sub-additivity of relative entropy
        \begin{equation}
            D(\rho_{A_i} \Vert \mu_{A_i}) \leq D(\rho_{A_i R_i} \Vert \mu_{A_i R_i}),
        \end{equation}
        and non-negativity of relative entropy
        \begin{equation}
            D(\rho_{\bar{A}_{n+1}} \Vert \mu_{\bar{A}_{n+1}}) \geq 0,
        \end{equation}
        it follows that 
        \begin{equation}
            2 g(A_1, \dots, A_n) \leq D(\rho_{A} \Vert \mu_{A}),
        \end{equation}
        which prove the proposition.
    \end{proof}

    \begin{proof}[Proof of Proposition~\ref{proposition: k_BPS}]
        Now, consider the $n$-partite pure state $\rho_{\bigotimes_{i=1}^n A_n}$.
        For all possible system set $\alpha = (A_{i_1}, A_{i_2},\dots)$ with $|\alpha|<n$, we have
        \begin{equation}
            2 g(\alpha) \leq \min_{\sigma\overset{\mathrm{LU}(\alpha)}{\sim} \rho} \min_{\mu \in \mathrm{QMC}(\alpha)} D(\sigma \Vert \mu) ,
        \end{equation}
        where $\mathrm{LU}(\alpha)$ denote the equivalence of states up to local unitary on the partition $\alpha \cup \left(\bigotimes_{i \not \in \alpha} A_i\right)$.
        Since $\mathrm{LU}$ is the refinement of all the equivalence $\mathrm{LU}(\alpha)$, i.e. the equivalent class $[\rho]_{\mathrm{LU}} = \bigcap_{\alpha} [\rho]_{\mathrm{LU}(\alpha)}$
        Therefore, 
        \begin{equation}
            2 \max_{|\alpha|<n} g(\alpha) \leq \min_{\sigma\overset{\mathrm{LU}}{\sim} \rho} \min_{\mu \in \bigcap_{|\alpha|<n} \mathrm{QMC}(\alpha)} D(\sigma \Vert \mu).
        \end{equation}
        With Theorem~\ref{theorem: 2-producible}, we have that the set $\mathrm{BPS}_n$ of $2$-producible states on $n$-partition system is the intersection of the sets $\mathrm{QMC}(\alpha)$ of state with $g(\alpha) = 0$, for all possible system set $\alpha= (A_{i_1}, A_{i_2},\dots)$, i.e.
        \begin{equation}
            \mathrm{BPS}_n = \bigcap_{|\alpha|<n} \mathrm{QMC}(\alpha),
        \end{equation}
        and the second inequality is followed.
    \end{proof}


\section{Numerical Method to Compute the Generalized Entanglement of Purification} \label{app: numerical}

    The generalized entanglement of purification 
    \begin{equation}
        E_{p}(A_1,\dots,A_n) = \frac{1}{2} \min \sum_{i=1}^n S(A_i R_i)
    \end{equation}
    requires minimizing the sum of entropies over all possible purifications $\ket{\psi_{\bigotimes_{i=1}^{n+1}A_i}}$ of the state $\rho_{\bigotimes_{i=1}^{n}A_i}$, where $A_{n+1} = \bigotimes_{i=1}^n R_i$.
    Specifically, we start with an initial purification $\ket{\psi_{\bigotimes_{i=1}^{n+1}A_i}}$ of the state $\rho_{\bigotimes_{i=1}^{n}A_i}$.
    We parametrize a unitary $\hat{U}_{A_{n+1}} = \exp (\mathrm{i} \Theta_{A_{n+1}})$ acting on the purifying system $A_{n+1}$.
    
    The differential of the target function $E = \sum_{i=1}^n S(A_i R_i)/2$ with respect to the Hermitian operator $\Theta_{A_{n+1}}$ is 
    \begin{align}
        \mathrm{d} E & = \frac{1}{2} \sum_{i=1}^n \mathrm{d} S(A_i R_i) 
        = -\frac{1}{2} \sum_{i=1}^n \mathrm{d} \mathrm{Tr}_{A_i R_i} \left(\rho_{A_i R_i} \log \rho_{A_i R_i}\right)  \nonumber \\
        & = -\frac{1}{2} \sum_{i=1}^n \mathrm{Tr}_{A_i R_i} \left(\mathrm{d} \rho_{A_i R_i} \log \rho_{A_i R_i}\right) .
    \end{align}
    Here, we use the fact that $\mathrm{d} \log \rho_{A_i R_i} = \rho_{A_i R_i}^{-1} \mathrm{d} \rho_{A_i R_i}$ and $\mathrm{Tr}_{A_i R_i}(\mathrm{d} \rho_{A_i R_i}) = 0$ by normalization condition $\mathrm{Tr}_{A_i R_i}(\rho_{A_i R_i}) = 1$.
    The differential of the reduced state $\rho_{A_i R_i}$ is
    \begin{align}
        & \mathrm{d} \rho_{A_i R_i} = \mathrm{Tr}_{\overline{A_i R_i}} \left(\mathrm{d} \rho_{\bigotimes_{i=1}^{n+1}A_i}\right) \nonumber \\
        & = \mathrm{i} \mathrm{Tr}_{\overline{A_i R_i}} \left([\mathrm{d}\Theta_{A_{n+1}}, \rho_{\bigotimes_{i=1}^{n+1}A_i}]\right) ,
    \end{align}
    and the differential of the target function is
    \begin{align}
        \mathrm{d} E = -\frac{\mathrm{i}}{2} \sum_{i=1}^n \mathrm{Tr}\left(\mathrm{d}\Theta_{A_{n+1}} [\rho_{\bigotimes_{i=1}^{n+1}A_i}, \log \rho_{A_i R_i}]\right) .
    \end{align}
    Thus, the gradient of the target function with respect to the Hermitian operator $\Theta_{A_{n+1}}$ is
    \begin{equation}
        \nabla_{\Theta_{A_{n+1}}} E = -\frac{\mathrm{i}}{2} \sum_{i=1}^n \mathrm{Tr}_{\bar{A}_{n+1}}\left([\rho_{\bigotimes_{i=1}^{n+1}A_i}, \log \rho_{A_i R_i}]\right) .
    \end{equation}

    The purification is updated as
    \begin{equation}
        \rho_{\bigotimes_{i=1}^{n+1}A_i} \leftarrow \hat{U}_{A_{n+1}} \rho_{\bigotimes_{i=1}^{n+1}A_i} \hat{U}_{A_{n+1}}^{\dagger},
    \end{equation}
    where the Hermitian operator is updated as
    \begin{equation}
        \Delta \Theta_{A_{n+1}} =  \frac{\mathrm{i}\alpha}{2} \sum_{i=1}^n \mathrm{Tr}_{\bar{A}_{n+1}}\left([\rho_{\bigotimes_{i=1}^{n+1}A_i}, \log \rho_{A_i R_i}]\right) ,
    \end{equation}
    where $\alpha >0$ is the learning rate, and $\bar{A}_{n+1} = \bigotimes_{i=1}^{n} A_i$.
    To optimize the learning rate, several methods can be used, such as the exact linear search and the Armijo linear search.

\section{Properties of generalized EoP and its gap} \label{app: properties}

\subsection{Lower and upper bounds of generalized EoP}

    For a $k$-partite mixed state $\rho_{\bigotimes_{i=1}^k A_i}$, consider its Schmidt decomposition
    \begin{equation}
        \rho_{\bigotimes_{i=1}^k A_i} = \sum_l p_l \ket{\psi_{\bigotimes_{i=1}^k A_i}^l}\bra{\cdot},
    \end{equation}
    where $\ket{\psi_{\bigotimes_{i=1}^k A_i}^l}$ are orthogonal pure states.
    Consider the purification of $\rho_{\bigotimes_{i=1}^k A_i}$
    \begin{equation}
        \ket{\psi_{\bigotimes_{i=1}^{k} A_i R}} = \sum_l \sqrt{p_l} \ket{\psi_{\bigotimes_{i=1}^{k} A_i}^l} \otimes \ket{l_{R}} .
    \end{equation}
    In the partition $(A_1|\dots|A_{k-1}|A_{k}R)$, we have
    \begin{align}
        & E_p(A_1,\dots, A_k)_{\rho} \leq \frac{1}{2} \left[\sum_{i=1}^{k-1} S(A_i)_{\psi} + S(A_k R)_{\psi}\right] \nonumber \\
        & = \frac{1}{2} \left[\sum_{i=1}^{k-1} S(A_i)_{\rho} + S\left(\bigotimes_{i=1}^{k-1} A_i\right)_{\rho}\right].
    \end{align}
    Therefore, we have the upper bound of the generalized EoP 
    \begin{align}
        & E_p(A_1,\dots, A_k) \nonumber \\
        & \leq \frac{1}{2} \min_{i} \left[\sum_{j\neq i} S(A_j) + S\left(\bigotimes_{j\neq i} A_j\right)\right],
    \end{align}
    which is attained if the state $\rho_{\bigotimes_{i=1}^k A_i}$ is pure,
    \begin{align} \label{eq: pure}
        & E_p(A_1,\dots, A_k) = \frac{1}{2} \sum_{i=1}^{k} S(A_i) \nonumber \\
        & = \frac{1}{2} \left[\sum_{j\neq i} S(A_j) + S\left(\bigotimes_{j\neq i} A_j\right)\right].
    \end{align}

    For the generalized EoP $E_p(A_1,\dots, A_k A_k')$ of the $k$-partite state $\rho_{\bigotimes_{i=1}^k A_i}$, assume the purification $\ket{\psi_{\bigotimes_{i=1}^{k} A_i A_k' A_{k+1}}}$ attains the minimum of $E_p(A_1,\dots, A_k A_k')$, where $A_{k+1} = \bigotimes_{i=1}^k R_i$.
    \begin{widetext}
    \begin{align}
        &E_p(A_1,\dots, A_k A_k') = \frac{1}{2} \sum_{i=1}^{k-1} S(A_i R_i) + \frac{1}{2} S(A_k A_k' R_k) \nonumber \\
        &= \frac{1}{2} \left[\sum_{i=1}^{k-1} S(A_i R_i) + S(A_k) - S\left(\bigotimes_{i=1}^{k-1} A_iR_i A_k'R_k\right)
        + S\left(\bigotimes_{i=1}^{k-1} A_i R_i\right) + S(A_k' R_k) - S\left(\bigotimes_{i=1}^{k-1} A_i R_i A_k\right) \right]\nonumber \\
        & = \frac{1}{2} \left[D\left(\rho_{\bigotimes_{i=1}^{k-1}A_iR_i A_k}\Big\Vert\bigotimes_{i=1}^{k-1} \rho_{A_iR_i} \otimes \rho_{A_k}\right) 
        + D\left(\rho_{\bigotimes_{i=1}^{k-1}A_iR_i A_k'R_k}\Big\Vert\rho_{\bigotimes_{i=1}^{k-1} A_iR_i} \otimes \rho_{A_k'R_k}\right)\right]\nonumber \\
        & \geq \frac{1}{2} \left[D\left(\rho_{\bigotimes_{i=1}^{k-1}A_i A_k}\Big\Vert\bigotimes_{i=1}^{k-1} \rho_{A_i} \otimes \rho_{A_k}\right) 
        + D\left(\rho_{\bigotimes_{i=1}^{k-1}A_i A_k'}\Big\Vert\rho_{\bigotimes_{i=1}^{k-1} A_i} \otimes \rho_{A_k'}\right)\right]\nonumber \\
        & = \frac{1}{2} \left[\sum_{i=1}^{k-1} S(A_i) + S(A_k) + S(A_k') + S\left(\bigotimes_{i=1}^{k-1} A_i\right) - S\left(\bigotimes_{i=1}^{k-1} A_i A_k\right) - S\left(\bigotimes_{i=1}^{k-1} A_i A_k'\right)\right] \nonumber \\
        & = \frac{1}{2} \left[\sum_{i=1}^{k} S(A_i) + S\left(\bigotimes_{i=1}^{k-1} A_i\right) - S\left(\bigotimes_{i=1}^{k-1} A_i \Big|A_k'\right) - S\left(\bigotimes_{i=1}^{k-1} A_i \Big|A_k'\right)\right].
    \end{align}
    \end{widetext}
    Here, in the second line, we use $S(A_k) = S\left(\bigotimes_{i=1}^{k-1} A_iR_i A_k'R_k\right)$ and $S(A_k' R_k) - S\left(\bigotimes_{i=1}^{k-1} A_i R_i A_k\right)$ for pure state $\ket{\psi_{\bigotimes_{i=1}^{k} A_i A_k' A_{k+1}}}$.
    In the forth line, we use the monotonicity of relative entropy under partial trace.
    In the last line, we use the conditional entropy $S(A|B) = S(AB) - S(B)$.

\subsection{Decadence of generalized EoP}

    Lemma~\ref{lemma: monotonicity} proves that generalized EoP is non-increasing under discarding subsystems.
    Moreover, for a decadent chain of subsystems $\alpha_{k} = \{A_{i_1}, A_{i_2}, \dots, A_{i_k}\} \subseteq \alpha_{k+1} = \{A_{i_1}, A_{i_2}, \dots, A_{i_k}, A_{i_{k+1}}\}$, we have 
    \begin{equation}
        E_p(\alpha_{k}) \leq E_p(\alpha_{k+1}).
    \end{equation}
    We prove this for the special case $\alpha_{k} = \{A_1, A_2, \dots, A_k\}$ and $\alpha_{k+1} = \{A_1, A_2, \dots, A_k, A_{k+1}\}$.
    Assume $E_p(\alpha_{k+1})$ is attained at $\ket{\psi_{\bigotimes_{i=1}^{k+1} A_i R_i}}$
    \begin{align}
        & E_p(A_1, \dots, A_{k+1}) = \frac{1}{2} \sum_{i=1}^{k+1} S(A_i R_i) \nonumber \\
        & \geq \frac{1}{2} \left[\sum_{i=1}^{k-1} S(A_i R_i) + S(A_k A_{k+1}R_kR_{k+1})\right] \nonumber \\
        & \geq E_p(A_1, \dots, A_k).
    \end{align}
    The first inequality is due to the fact $I(A_k R_k: A_{k+1} R_{k+1}) = S(A_k R_k) + S(A_{k+1} R_{k+1}) - S(A_k A_{k+1}R_kR_{k+1}) \geq 0$, and the second inequality is due to the definition of $E_p(A_1, \dots, A_k)$.

\subsection{Subadditivity}

    The generalized EoP is subadditive 
    \begin{align}
        & E_p(A_1, \dots, A_k) + E_p(A_1',\dots,A_k') \nonumber \\
        & = \frac{1}{2} \sum_{i=1}^{n} \left[S(A_iR_i) + S(A_i'R_i')\right] \geq \frac{1}{2} \sum_{i=1}^{n} S(A_i A_i'R_iR_i') \nonumber \\
        & \geq E_p(A_1A_1',\dots,A_kA_k').
    \end{align}

\subsection{Polygamy}

    For pure state $\rho_{\bigotimes_{i=1}^{k-1} A_i A_k A_k'}$, the generalized EoP is polygamous.
    In Eq.~(\ref{eq: polygamy}), the second line is followed from the lower bound of generalized EoP, Eq.~(\ref{eq: lower_bound}), the first equality in third line is followed since $\rho_{\bigotimes_{i=1}^{k-1} A_i A_k A_k'}$ is pure, the second inequality since $S(A) + S(B) \geq S(AB)$, and the last equality is followed from Eq.~(\ref{eq: pure}).
    \begin{widetext}
    \begin{align} \label{eq: polygamy}
        & E_p(A_1,\dots,A_{k-1},A_k) + E_p(A_1,\dots,A_{k-1},A_k') \nonumber \\
        & \geq \sum_{i=1}^{k-1} S(A_i) + \frac{1}{2}\left[S(A_k) + S(A_k') - S\left(\bigotimes_{i=1}^{k-1} A_i A_k\right) - S\left(\bigotimes_{i=1}^{k-1} A_i A_k'\right)\right] \nonumber \\
        & = \sum_{i=1}^{k-1} S(A_i) \geq \frac{1}{2} \left[\sum_{i=1}^{k-1}S(A_i) + S\left(\bigotimes_{i=1}^{k-1} A_i\right)\right]  = E_p(A_1,\dots,A_{k-1},A_kA_k').
    \end{align}
    \end{widetext}

\subsection{Lower and upper bounds of generalized EoP gap for convex combination}

    Assume a mixed state $\rho_{\bigotimes_{i=1}^k A_i}$ is the convex combination of states $\rho_{\bigotimes_{i=1}^k A_i}^l$
    \begin{equation}
        \rho_{\bigotimes_{i=1}^k A_i} = \sum_l p_l \rho_{\bigotimes_{i=1}^k A_i}^l.
    \end{equation}
    Let $\ket{\psi_{\bigotimes_{i=1}^{k+1} A_i}^l}$ be the purification of $\rho_{\bigotimes_{i=1}^k A_i}^l$ that attains the minimum of $E_p(A_1,\dots,A_k)$, where $A_{k+1} = \bigotimes_{i=1}^k R_i$.
    Consider the purification of $\rho_{\bigotimes_{i=1}^k A_i}$
    \begin{equation}
        \ket{\psi_{\bigotimes_{i=1}^{k} A_i \tilde{A}_{k+1}}} = \sum_l \sqrt{p_l} \bigotimes_{i=1}^{k} \ket{l_{\tilde{R}_i}} \otimes \ket{\psi_{\bigotimes_{i=1}^{k+1} A_i}^l} ,
    \end{equation}
    where $\tilde{A}_{k+1} = \bigotimes_{i=1}^k \tilde{R}_i \otimes A_{k+1}$.
    The generalized EoP of $\rho_{\bigotimes_{i=1}^k A_i}$ in partition $\alpha_k = \{A_1,A_2,\dots,A_k\}$ is upper bounded by
    \begin{align}
        & E_p(\alpha_k)_{\rho} \leq \frac{1}{2} \sum_{i=1}^k S(A_i R_i \tilde{R}_i)_{\psi} \nonumber \\
        & = \frac{1}{2} \sum_{i=1}^k \left[\sum_l p_l S(A_i R_i)_{\psi^l} + H(p_l)\right] \nonumber \\
        & = \sum_l p_l E_p(\alpha_k)_{\rho^l} + \frac{k}{2} H(p_l).
    \end{align}
    The lower bound of $2 E_p(\alpha_k)$ of $\rho_{\bigotimes_{i=1}^k A_i}$ has the lower bound
    \begin{align}
        & \sum_i S(A_i)_{\rho} - S(\tilde{A}_{k+1})_{\rho} \nonumber \\
        & \geq \sum_i \sum_l p_l S(A_i)_{\rho^l} 
        - \sum_l p_l S(A_{k+1})_{\rho^l} - H(p_l) \nonumber \\
        & = \sum_l p_l \left[\sum_i S(A_i)_{\rho^l} - S(A_{k+1})_{\rho^l}\right] - H(p_l).
    \end{align}
    Thus, the upper bound of the generalized EoP gap of $\rho_{\bigotimes_{i=1}^k A_i}$ is
    \begin{equation}
        g(\alpha_k)_{\rho} \leq \sum_l p_l g(\alpha_k)_{\rho^l} + \frac{k+1}{2} H(p_l).
    \end{equation}

    Moreover, we can also derive a lower bound of the generalized EoP gap of the mixed state $\rho_{\bigotimes_{i=1}^k A_i}$.
    The generalized EoP of $\rho_{\bigotimes_{i=1}^k A_i}$ in partition $\alpha_k = \{1,2,\dots,k\}$ is lower bounded by 
    \begin{align}
        & E_p(\alpha_k)_{\rho} = \frac{1}{2} \sum_{i=1}^k S(A_i R_i \tilde{R}_i)_{\hat{U}\psi} \nonumber \\
        & \geq \frac{1}{2} \sum_{i=1}^k \sum_l p_l S(A_i R_i \tilde{R}_i)_{\hat{U}\psi^l} \geq \sum_l p_l E_p(\alpha_k)_{\rho^l},
    \end{align}
    where $\hat{U}$ is the unitary operation on system $\tilde{A}_{k+1}$ that attains the minimum of $E_p(\alpha_k)$.
    The lower bound of $2 E_p(\alpha_k)$ of $\rho_{\bigotimes_{i=1}^k A_i}$ has the upper bound
    \begin{align}
        & \sum_i S(A_i)_{\rho} - S(\tilde{A}_{k+1})_{\rho} \nonumber \\
        & \leq \sum_i \left[\sum_l p_l S(A_i)_{\rho^l} + H(p_l)\right] 
        - \sum_l p_l S(A_{k+1})_{\rho^l} \nonumber \\
        & = \sum_l p_l \left[\sum_i S(A_i)_{\rho^l} - S(A_{k+1})_{\rho^l}\right] + k H(p_l).
    \end{align}
    Thus, the lower bound of the generalized EoP gap of $\rho_{\bigotimes_{i=1}^k A_i}$ is
    \begin{equation}
        g(\alpha_k)_{\rho} \geq \sum_l p_l g(\alpha_k)_{\rho^l} - \frac{k}{2} H(p_l).
    \end{equation}

\end{document}